\newtheorem{mylemma}{Lemma}
\newtheorem{mytheorem}{Theorem}
\newtheorem{mydef}{Definition}
\begin{document}

\title{An Efficient Updation Approach for Enumerating Maximal $(\Delta, \gamma)$\mbox{-}Cliques of a Temporal Network \thanks{Dr. Suman Banerjee was supported by the Institute Post Doctoral Fellowship Grant sponsored by Indian Institute of Technology, Gandhinagar. Both the authors have contributed equally in this work.}
}


\author{Suman Banerjee        \and
        Bithika Pal 
}


\institute{Suman Banerjee \at
              Department of Computer Science and Engineering, \\
              Indian Institute of Technology, Gandginagar, India.\\
              \email{suman@iitkgp.ac.in}           
           \and
           Bithika Pal \at
              Department of Computer Science and Engineering,\\ 
              Indian Institute of Technology, Kharagpur, India.\\
              \email{bithikapal@iitkgp.ac.in}  
}

\date{Received: date / Accepted: date}

\maketitle

\begin{abstract}
Given a temporal network $\mathcal{G}(\mathcal{V}, \mathcal{E}, \mathcal{T})$, $(\mathcal{X},[t_a,t_b])$ (where $\mathcal{X} \subseteq \mathcal{V}(\mathcal{G})$ and $[t_a,t_b] \subseteq \mathcal{T}$) is said to be a $(\Delta, \gamma)$\mbox{-}clique of $\mathcal{G}$, if for every pair of vertices in $\mathcal{X}$, there must exist at least $\gamma$ links in each $\Delta$ duration within the time interval $[t_a,t_b]$. Enumerating such maximal cliques is an important problem in temporal network analysis, as it reveals contact pattern among the nodes of $\mathcal{G}$. In this paper, we study the maximal $(\Delta, \gamma)$\mbox{-}clique enumeration problem in online setting; i.e.; the entire link set of the network is not known in advance, and the links are coming as a batch in an iterative manner. Suppose, the link set till time stamp $T_{1}$ (i.e., $\mathcal{E}^{T_{1}}$), and its corresponding $(\Delta, \gamma)$-clique set are known. In the next batch (till time $T_{2}$), a new set of links (denoted as $\mathcal{E}^{(T_1,T_2]}$) is arrived. Now, the goal is to update the existing  $(\Delta, \gamma)$-cliques to obtain the maximal $(\Delta, \gamma)$-cliques till time stamp $T_{2}$. We formally call this problem as the Maximal \emph{$(\Delta, \gamma)$-Clique Updation Problem} for enumerating maximal $(\Delta, \gamma)$-cliques. For this, we propose an efficient updation approach that can be used to enumerate maximal $(\Delta, \gamma)$-cliques of a temporal network in online setting. We show that the proposed methodology is correct and it has been analyzed for its time and space requirement. An extensive set of experiments have been carried out with four benchmark temporal network datasets. The obtained results show that the proposed methodology is efficient both in terms of  time and space to enumerate maximal $(\Delta, \gamma)$-cliques in online setting. Particularly, compared to it's off-line counterpart, the improvement caused by our proposed approach is in the order of hours and GB for computational time and space, respectively, in large dataset.
\keywords{Temporal Network \and $(\Delta, \gamma)$\mbox{-}Clique \and Temporal Link \and Updating Algorithm \and Online Setting}
\end{abstract}

\section{Introduction} \label{intro}
A pairwise relation among a group of agents is represented by a `network' (also known as `graph'), where the set of agents forms the \emph{vertex set}, and the links among them form the \emph{edge set} of the network \cite{barabasi2016network}. Examples include `social network' (interconnected structure among a group human) \cite{musial2013creation}, `information network' (interconnected structure among a group of data centers) \cite{sun2013mining} and many more. Analysis of such networks for different topological structures brings out many important characteristics regarding the contact pattern. For example, a cohesive group of users in a social network can be interpreted as close friends. Most of the real\mbox{-}world networks are time varying in nature, which means the structure of the network are changing over time. This kind of networks are effectively modeled as \textit{temporal network} (also known as the \emph{time varying graph} or \emph{link stream}) \cite{kostakos2009temporal}. 
\par To analyze a static network, there are several topological structures have been defined in the literature, such as \textit{clique} \cite{akkoyunlu1973enumeration} \cite{eppstein2013listing}, \textit{pseudo clique} \cite{zhai2016fast}, \emph{k-plex} \cite{berlowitz2015efficient}, $k$\mbox{-}\emph{club} \cite{almeida2014two}, $k$\mbox{-}\emph{cores} \cite{khaouid2015k} and many more \cite{akiba2013linear}. Among them the widely studied topological structure is clique. Plenty of solution methodologies have been proposed in the literature to enumerate such structures present in the network \cite{xu2013topological}. Initially, E. A. Akkoyunlu \cite{akkoyunlu1973enumeration} proposed an enumeration technique for maximal cliques. After that, a recursive technique has been proposed by Bron and Kerbosch \cite{bron1973algorithm}. Since then, a significant effort has been put to develop practical algorithms for enumerating maximal cliques in different scenarios such as for sparse graph \cite{eppstein2011listing,eppstein2013listing}, in large networks \cite{cheng2010finding,cheng2011finding,rossi2014fast}, in uncertain graphs \cite{mukherjee2015mining,mukherjee2017enumeration,zou2010finding}, using map reduce framework \cite{hou2016efficient,xiang2013scalable}, in parallel computing framework \cite{chen2016parallelizing,rossi2015parallel,schmidt2009scalable}, with limited memory resources \cite{cheng2012fast} and so on. As the real world networks are time varying, none of the mentioned techniques can be applied for analyzing such networks.
\par To analyze a temporal network for cohesive structures, some generalization of `clique' is required. In this direction, the first contribution came from Viard et al. \cite{viard2015revealing,viard2016computing}, who introduced the notion of $\Delta$\mbox{-}Clique. For a given temporal network, a $\Delta$\mbox{-}Clique is a vertex subset and time interval pair, where for every $\Delta$ duration of the interval, there exist at least one link between every pair of vertices in the subset. Recently, Banerjee and Pal \cite{banerjee2019enumeration} extended the notion of $\Delta$\mbox{-}Clique to $(\Delta, \gamma)$\mbox{-}Clique by incorporating frequency along with the duration, and proposed an enumeration algorithm for maximal $(\Delta, \gamma)$\mbox{-}Cliques present in a temporal network. In all these studies, it is implicitly assumed that the whole temporal links are available before the enumeration algorithm starts execution. However in reality, the scenario may be little different which has been explained with a real\mbox{-}world case study.
\par In $2011$ and $2012$, a human contact dataset was collected among a group of high school children in France in a proximity sensing platform based on wearable sensors for the duration of almost $8$ days \cite{fournet2014contact}. The goal was to study and analyze the following: how the mixing pattern happens by re-partition the students into groups? how the gender differences have an impact on contact patterns? how the evaluation of the contact pattern happens in two widely distinct timescales?, and finally how the contact pattern happens in different duration? To address the last question (the broader one), the notion of enumerating maximal $(\Delta, \gamma)$\mbox{-}Cliques can be applied effectively, as it returns the vertex subset and a time interval, where in each $\Delta$ duration of the interval, there must be at least $\gamma$ edges between every pair of vertices in that set. Suppose, once a contact is happening between two students, that information is getting stored in a centralized server. To apply the maximal $(\Delta, \gamma)$\mbox{-}Clique enumeration methodology proposed in \cite{banerjee2019enumeration} requires the completion of the data collection to obtain the entire set of links. However, in many data collection scenarios, the required time is much more. As an example, for the \textit{`college message'} dataset \cite{panzarasa2009patterns}, the duration for collecting the data was $193$ days. In this scenario, instead of waiting for the end of data collection process, it is more practical to start analyzing the contact pattern once a fraction of the dataset is available. Suppose, the first $24$ hours contact details are available and the $(\Delta, \gamma)$\mbox{-}Clique enumeration algorithm from \cite{banerjee2019enumeration} are used to obtain the partial results. Once the next $24$ hours contact history are available, instead of running from the scratch, can we update the previous maximal clique set to obtain the maximal clique set after $48$ hours? This is the problem that we address in this paper.
\par In summary, we study the maximal $(\Delta, \gamma)$\mbox{-}clique enumeration problem in `Online Setting', where the entire link set of the network is not known in advance, and the links are coming as a batch in an iterative manner. The goal here is to update the $(\Delta, \gamma)$\mbox{-}cliques of the links till the previous batch with the links of the current batch to obtain the updated maximal $(\Delta, \gamma)$\mbox{-}cliques. Formally, we call the problem as the \textit{Maximal $(\Delta, \gamma)$\mbox{-}Clique Updation Problem}. Particularly, we make the following contributions in this paper:
\begin{itemize}
\item We introduce a noble \emph{Maximal $(\Delta, \gamma)$\mbox{-}Clique Updation Problem}, where the links are coming as a batch in an iterative manner.
\item For this problem, we propose an efficient methodology to update the existing clique by considering new links in a sequential manner and named it as `edge on clique'.
\item By drawing consecutive arguments, we show that the proposed methodology correctly updates the previous $(\Delta, \gamma)$\mbox{-}cliques.
\item The proposed methodology has been analyzed to understand its time and space requirements.
\item We implement the proposed methodology to perform an extensive set of experiments on four real\mbox{-}world temporal network datasets with two different data partitioning schemes, and show that the updation approach can be effectively used to enumerate maximal $(\Delta, \gamma)$\mbox{-}cliques. 
\item We also show that this methodology can be adopted in the offline setting (when all the links are available before the start of execution) to enumerate all the $(\Delta, \gamma)$\mbox{-}cliques present in a temporal networks by splitting the dataset into parts and then applying the proposed $(\Delta, \gamma)$\mbox{-}clique updation approach.
\end{itemize}   
Rest of the paper is organized as follows: Section \ref{Sec:RW} describes the relevant studies from literature.  Section \ref{Sec:PPD} contains the required preliminary definitions and defines the `Maximal $(\Delta, \gamma)$\mbox{-}Clique Updation Problem' formally. Section \ref{Sec:PA} discusses the proposed methodology. Section \ref{Sec:EE} contains the experimental evaluations of our proposed approach and finally, Section \ref{Sec:CFD} draws the conclusion of this study and gives future directions. 
\section{Related Work} \label{Sec:RW}
This study comes under the broad theme of time varying graph analysis and in particular structural pattern finding in temporal networks. We describe both of them in two consecutive subsections.  
\subsection{Time Varying Graph Analysis}
As most of the real\mbox{-}life networks such as \emph{wireless sensor networks}, \emph{social networks} are time varying in nature, past one decade has witnessed a significant effort in understanding and mining time varying graphs \cite{casteigts2012time}. Several kinds of problems have been studied, and hence, it is not possible here to survey all the results. Here, we present a few fundamental graph problems that have been studied in temporal setting with corresponding literature. First one is the `temporal connectivity problem'. One very fundamental problem studied in the context is finding the shortest paths in temporal graphs \cite{wu2014path}. Another very important problem in the context of temporal graph analysis is the `reachability' and there exist several studies. Basu et al. \cite{basu2014sample} studied the reachability estimation problem in temporal graphs. Wildemann et al. \cite{wildemann2015time} studied the traversal and reachability problem on temporal graphs and derive three classes of temporal traversals from a set of realistic use cases.  Whitbeck et al. \cite{whitbeck2012temporal} introduced the concept of $(\tau,\delta)$ reachability graph for a given time\mbox{-}varying graph and they studied the mathematical properties of this graph and also provided several algorithms for computing such a graph. Following this study there are several works in this direction \cite{casteigts2015efficiently, xu2015network, wu2016reachability}. Huang et al. \cite{huang2015minimum} studied the minimum spanning tree problem on temporal graphs. Another well studied problem on temporal graph analysis is the community detection  \cite{bazzi2016community, he2015fast, rossetti2018community}. Also, there are several theoretical problems studied in the context of temporal graphs such as finding small `separator in temporal graphs' \cite{zschoche2020complexity, fluschnik2020temporal}, travelling salesman problem \cite{michail2016traveling}, Steiner network problem \cite{khodaverdian2016steiner} and many more.
\subsection{Structural Pattern Findings in Temporal Graphs}
For structural pattern mining of time varying graphs, there exists very few studies. As mentioned previously, the concept of `clique' in static graphs has been extended as $\Delta$\mbox{-}clique by Virad et al. \cite{viard2015revealing, viard2016computing} and used to analyze the time varying relationship among a group of school children. Later on Vired et al. \cite{viard2018enumerating} extended their study on temporal clique enumeration on link streams with duration. Recently, Banerjee and Pal \cite{banerjee2019enumeration, banerjee2020first} extended the notion on $\Delta$\mbox{-}clique to $(\Delta, \gamma)$\mbox{-}Clique and their proposed methodology has been applied to analyze three different temporal network datasets. Recently, Molter et al.  \cite{molter2019enumerating} extended the concept of `isolated clique' in the context of temporal networks and proposed \emph{fixed parameter tractable} algorithms to enumerate such maximal cliques. To the best of our knowledge, there does not exist any more literature on structural pattern analysis in the context of temporal graphs. However, there exist other studies for finding different structural patterns other than cliques such as plex \cite{bentert2019listing}, core decomposition \cite{wu2015core}, span cores \cite{galimberti2019span}.
\par In this paper, we study the problem of maximal $(\Delta, \gamma)$\mbox{-}clique enumeration when the entire links of the temporal network are not known at the beginning and links are available after a time gap. Formally, we name this problem as the Maximal $(\Delta, \gamma)$\mbox{-}Clique Updation Problem.

\section{Preliminaries and problem Definitions} \label{Sec:PPD}
Here, we present some preliminary concepts required to understand the main study presented in this paper. Given a set $\mathcal{X}$, $\binom{\mathcal{X}}{2}$ denotes the set of all $2$ element subsets of $\mathcal{X}$. First, we start by defining `temporal network' in Definition \ref{Def:1}.
\begin{mydef}[Temporal Network] \label{Def:1} \cite{holme2012temporal}
	A temporal network (also known as a time varying graph or link stream) is defined as a triplet $\mathcal{G}(\mathcal{V}, \mathcal{E}, \mathcal{T})$, where $V(\mathcal{G})$ and $\mathcal{E}(\mathcal{G})$ ($\mathcal{E}(\mathcal{G}) \subseteq \binom {\mathcal{V}(\mathcal{G})}2 \times \mathcal{T}$) are the vertex and edge set of the network. $\mathcal{T}$ is the time interval during which the network is observed. Throughout the paper, we use $|\mathcal{V}(\mathcal{G})|=n$ and $|\mathcal{E}(\mathcal{G})|=m$. 
\end{mydef}
Basically, a temporal networks is a collection of links of the form $(v_i, v_j, t)$, where $v_i, v_j \in \mathcal{V}(\mathcal{G})$, and $t$ is a timestamp in the time interval $\mathcal{T}$. It signifies, that there was a contact between $v_i$ and $v_j$ at time $t$. In our study, we assume the network is observed in discrete time steps and throughout the observation the vertex set remains fixed, however, the edge set is changing over time. We define $t_{min}=\underset{t}{argmin} \ (v_i,v_j,t) \in E(\mathcal{G})$ for all the vertex pairs, and similarly, $t_{max}=\underset{t}{argmax} \ (v_i,v_j,t) \in \mathcal{E}(\mathcal{G})$. The difference between $t^{max}$ and $t^{min}$ is called the \textit{lifetime of the network} and denoted as $t_{L}$, i.e., $t_{L}=t^{max}-t^{min}$. For any two vertices, $v_i, v_j \in \mathcal{V}(\mathcal{G})$, we say that there exist a static edge between $v_i$ and $v_j$ if $\exists t_{ij} \in [t^{min}, t^{max}]$, such that $(v_i,v_j, t_{ij}) \in \mathcal{E}(\mathcal{G})$. The frequency of a static edge is defined as the number of times it appeared throughout the lifetime of the network, i.e., $f_{(v_iv_j)}=|\{(v_iv_j,t_{ij}): t_{ij} \in [t^{min}, t^{max}] \}|$. We use the term `link' to denote a temporal link and `edge' to denote a static edge.  
\par Given a temporal network $\mathcal{G}(\mathcal{V}, \mathcal{E}, \mathcal{T})$, Virad et al. \cite{viard2015revealing,viard2016computing} introduced the notion of $\Delta$\mbox{-}Clique, which is a natural extension of a clique in a static graph and mentioned in Definition \ref{Def:2}.
\begin{mydef}[$\Delta$\mbox{-}Clique] \label{Def:2} \cite{viard2015revealing} \cite{viard2016computing}
	For a given time period $\Delta$, a $\Delta$\mbox{-}clique of the temporal network $\mathcal{G}$ is a vertex set, time interval pair, i.e., $(\mathcal{X}, [t_a,t_b])$ with $\mathcal{X} \subseteq \mathcal{V}(\mathcal{G})$, $\vert \mathcal{X} \vert \geq 2$ and $[t_a,t_b] \subseteq \mathcal{T}$, such that $\forall v_i,v_j \in \mathcal{X}$ and $t \in [t_a, max(t_b - \Delta, t_a)]$ there is an edge $(v_i, v_j, t_{ij}) \in \mathcal{E}(\mathcal{G})$ with $t_{ij} \in [t, min (t + \Delta, t_b)]$.
\end{mydef}
Recently, the notion of $\Delta$\mbox{-}Clique has been extended by Banerjee and Pal \cite{banerjee2019enumeration} to $(\Delta, \gamma)$\mbox{-}Clique mentioned in Definition \ref{Def:3}.
\begin{mydef}[$(\Delta, \gamma)$\mbox{-}Clique]\label{Def:3}
	For a given time period $\Delta$ and $\gamma \in \mathbb{Z}^{+}$, a $(\Delta, \gamma)$-Clique of the temporal network $\mathcal{G}$ is a vertex set, time interval pair, i.e., $(\mathcal{X}, [t_a,t_b])$ where $\mathcal{X} \subseteq \mathcal{V}(\mathcal{G})$, $\vert \mathcal{X} \vert \geq 2$, and  $[t_a,t_b] \subseteq \mathcal{T}$. Here $\forall v_i,v_j \in \mathcal{X}$ and $t \in [t_a, max(t_b - \Delta, t_a)]$, there must exist $\gamma$ or more number of edges, i.e., $(v_i, v_j, t_{ij}) \in \mathcal{E}(\mathcal{G})$ and $f_{(v_iv_j)} \geq \gamma$ with $t_{ij} \in [t, min (t + \Delta, t_b)]$. It is easy to observe, that a $(\Delta, \gamma)$\mbox{-}clique will be a $\Delta$\mbox{-}clique when $\gamma=1$.
\end{mydef}
\begin{mydef}[Maximal $(\Delta, \gamma)$\mbox{-}Clique] \label{Def:MDG} \label{Def:maximal}
	A $(\Delta, \gamma)$\mbox{-}clique $(\mathcal{X}, [t_a,t_b])$ of the temporal network $\mathcal{G}(\mathcal{V}, \mathcal{E}, \mathcal{T})$ will be maximal if neither of the following is true.
	\begin{itemize}
		\item $\exists v \in \mathcal{V}(\mathcal{G}) \setminus \mathcal{X}$ such that $(\mathcal{X} \cup \{v\}, [t_a,t_b])$ is a $(\Delta, \gamma)$\mbox{-}Clique.
		\item $(\mathcal{X}, [t_a - dt,t_b])$ is a $(\Delta, \gamma)$\mbox{-}clique. This condition is applied only if $t_a - dt \geq t$.
		\item $(\mathcal{X}, [t_a,t_b + dt])$ is a $(\Delta, \gamma)$\mbox{-}clique. This condition is applied only if $t_b + dt \leq t^{'}$.
	\end{itemize}
\end{mydef}

\begin{mydef}[Maximum $(\Delta, \gamma)$\mbox{-}Clique]
	Let $\mathcal{S}$ be the set of all maximal $(\Delta, \gamma)$-cliques of the temporal network $\mathcal{G}(\mathcal{V}, \mathcal{E}, \mathcal{T})$. Now, $(\mathcal{X}, [t_a,t_b]) \in \mathcal{S}$ will be
	\begin{itemize}
		\item temporally maximum if $\forall (\mathcal{Y}, [t_a^{'},t_b^{'}]) \in \mathcal{S} \setminus (\mathcal{X}, [t_a,t_b])$, $t_b-t_a \geq t_b^{'} - t_a^{'}$,
		\item cardinally maximum if $\forall (\mathcal{Y}, [t_a^{'},t_b^{'}]) \in \mathcal{S} \setminus (\mathcal{X}, [t_a,t_b])$, $\vert \mathcal{X} \vert \geq \vert \mathcal{Y} \vert$.
	\end{itemize}
\end{mydef}
Now as mentioned previously, all the links of the temporal network may not be available at the beginning of the execution of the $(\Delta, \gamma)$-clique enumeration algorithm. In this setting, to apply the existing algorithms for  $(\Delta, \gamma)$\mbox{-}clique enumeration, we need to wait till all the links are available. However, the better way to handle this problem is to adopt the updation approach. Assume that $T_{0}$ is the time stamp from which we are observing the network. Now, the first batch of links has appeared till time $T_1$. So, at time stamp $T_1$, without waiting for entire link set, it is desirable to execute existing $(\Delta, \gamma)$\mbox{-}clique enumeration to understand the contact pattern among the entities. Now, the next batch of links has appeared till time stamp $T_2$. At this point, it is always desirable to update the previously enumerated $(\Delta, \gamma)$\mbox{-}cliques to obtain the $(\Delta, \gamma)$\mbox{-}cliques till time stamp $T_2$. Now, when the next set of links appears again, the recently updated cliques has to be updated again and this process will go on. 

Now, it is trivial to observe that the updation procedure is same irrespective of the number of iterations. Hence, in this work we primarily focus to update the cliques till time stamp $T_{1}$ to time stamp $T_{2}$. Finally, we introduce the Maximal $(\Delta, \gamma)$\mbox{-}Clique Updation Problem in Definition \ref{Def:Prob} that we worked in this paper.
\begin{mydef}[Maximal $(\Delta, \gamma)$\mbox{-}Clique Updation Problem] \label{Def:Prob}
Given a temporal network $\mathcal{G}(V, E, \mathcal{T})$ with its $(\Delta,\gamma)$-clique till time stamp $T_{1}$, and the links from time stamp $T_{1}$ to $T_{2}$, the problem of Maximal $(\Delta, \gamma)$\mbox{-}Clique Updation is to update the $(\Delta, \gamma)$-cliques till time stamp $T_{1}$ to obtain the maximal $(\Delta, \gamma)$-cliques till time stamp $T_{2}$.
\end{mydef}

\section{Proposed Solution Approach} \label{Sec:PA}
In this section, we describe the proposed solution approach for updating maximal $(\Delta, \gamma)$\mbox{-}cliques. 
Let, $\mathcal{C}^{T_{1}}$, $\mathcal{C}^{T_{2}}$, and $\mathcal{C}^{T_{2} \setminus T_{1}}$ denote the set of maximal cliques till time $T_{1}$, $T_{2}$, and from $T_1$ to $T_2$, respectively. Also assume that for a clique $(\mathcal{X}, [t_a, t_b])$, the first and last $\gamma$-th occurrence timestamps of a static edge $(u, v)$ within $[t_a,t_b]$ are denoted by $f_{uv}^{\gamma}$ and $l_{uv}^{\gamma}$, respectively, where $u,v \in \mathcal{X}$. Initially, we establish the following important claims.

\begin{mylemma} \label{lemma:1}
 If both of the following conditions are true

 \begin{itemize}
    \item  $\forall (\mathcal{X}, [t_a, t_b]) \in \mathcal{C}^{T_{1}}$ if $\forall (u,v) \in \mathcal{X},  \nexists \ t$ such that $(u,v,t) \in \mathcal{E}{(\mathcal{G})}$, and $t \in [l^1_{uv} , l^\gamma_{uv} + \Delta]$ where $l^\gamma_{uv} +\Delta > T_1$, and
    \item $\forall (\mathcal{X}, [t_a, t_b]) \in \mathcal{C}^{T_{2} \setminus T_{1}}$  if $\forall (u,v) \in \mathcal{X},  \nexists \ t$ such that $(u,v,t) \in \mathcal{E}{(\mathcal{G})}$, and $t \in [f^\gamma_{uv} - \Delta,f^{1}_{uv} ]$ where $f^\gamma_{uv}-\Delta < T_1$.
\end{itemize}

 then $\mathcal{C}^{T_{2}}=\mathcal{C}^{T_{1}} \  \cup \ \mathcal{C}^{T_{2} \setminus T_{1}} \ \cup \ \mathcal{C}_{[T_{1}-\Delta,T_{1}+\Delta]}^{*}$, where $\mathcal{C}_{[T_{1}-\Delta,T_{1}+\Delta]}^{*}$ denotes the set of maximal cliques within the time interval $[T_{1}-\Delta,T_{1}+\Delta]$ which are not contained as a sub clique in any other maximal cliques in $\mathcal{C}^{T_{1}}$ or  $\mathcal{C}^{T_{2} \setminus T_{1}}$.

\end{mylemma}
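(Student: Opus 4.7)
My plan is to show set equality by proving both inclusions, after first establishing a restriction property that will be used repeatedly: if $(\mathcal{X}, [t_a, t_b])$ is a $(\Delta, \gamma)$-clique and $[t_a', t_b'] \subseteq [t_a, t_b]$ with $t_b' - t_a' \geq \Delta$, then $(\mathcal{X}, [t_a', t_b'])$ is also a $(\Delta, \gamma)$-clique, since each sliding window inside $[t_a', t_b']$ is already a window inside $[t_a, t_b]$. I will also set up notation so that for any pair $(u,v) \in \mathcal{X}$, condition 1 precisely captures ``cannot extend $t_b$ rightward past $T_1$'' and condition 2 precisely captures ``cannot extend $t_a$ leftward past $T_1$'', by using how $l^\gamma_{uv}+\Delta$ (resp.\ $f^\gamma_{uv}-\Delta$) governs the left-most (right-most) window that could be created by an extension.

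For the inclusion $\supseteq$, I would take a clique from each of the three sets separately. For $(\mathcal{X}, [t_a, t_b]) \in \mathcal{C}^{T_1}$, maximality up to $T_1$ already rules out vertex addition and leftward time extension; condition 1 is exactly what is needed to rule out a rightward extension using links from $(T_1, T_2]$, since any such extension would require a new link for some pair $(u,v)$ in the forbidden window $[l^1_{uv}, l^\gamma_{uv} + \Delta]$. The argument for $\mathcal{C}^{T_2 \setminus T_1}$ is symmetric, using condition 2. For a clique in $\mathcal{C}^*_{[T_1-\Delta, T_1+\Delta]}$ the claim is immediate from the definition, which already excludes being a sub-clique of anything in $\mathcal{C}^{T_1} \cup \mathcal{C}^{T_2 \setminus T_1}$ and which builds maximality into the definition.

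For the inclusion $\subseteq$, I take a maximal $(\mathcal{X}, [t_a, t_b]) \in \mathcal{C}^{T_2}$ and do a case split on the relative position of $[t_a, t_b]$ to $T_1$. If $t_b \le T_1$, the clique uses only pre-$T_1$ links, and maximality in $\mathcal{C}^{T_2}$ forces maximality in $\mathcal{C}^{T_1}$ (any strict extension in $\mathcal{C}^{T_1}$ would be a valid extension in $\mathcal{C}^{T_2}$ too). The case $t_a \ge T_1$ is symmetric. The remaining straddling case $t_a < T_1 < t_b$ is the main obstacle: I must show $t_a \ge T_1-\Delta$ and $t_b \le T_1+\Delta$, so that the clique actually lies in $\mathcal{C}^*_{[T_1-\Delta, T_1+\Delta]}$. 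The key idea will be to restrict the straddling clique to $[t_a, T_1]$ and to $[T_1, t_b]$ (using the restriction property, together with a separate handling of the case where one of these pieces has length less than $\Delta$), then look at the maximal cliques of $\mathcal{C}^{T_1}$ and $\mathcal{C}^{T_2 \setminus T_1}$ that contain these restrictions; applying condition 1 to the first and condition 2 to the second and combining with the maximality of $(\mathcal{X}, [t_a, t_b])$ in $\mathcal{C}^{T_2}$ will force the interval to collapse into $[T_1-\Delta, T_1+\Delta]$, and maximality will preclude it from being a sub-clique of anything already in $\mathcal{C}^{T_1} \cup \mathcal{C}^{T_2 \setminus T_1}$, placing it in $\mathcal{C}^*_{[T_1-\Delta, T_1+\Delta]}$ as required. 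The delicate bookkeeping around the two boundary windows (and the possibility that $\mathcal{X}$ could, a priori, be extended in vertices but not in time, or vice versa) is what I expect to consume most of the proof.
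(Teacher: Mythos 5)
Your proposal is correct in spirit, and its first half coincides with what the paper actually proves; its second half goes further than the paper does. The paper's own proof only establishes (your $\supseteq$ direction, restricted to the first two sets): it argues that under Condition~1 no clique of $\mathcal{C}^{T_1}$, nor any sub-clique $(\mathcal{Y},[t_a,t_d])$ split off from one, can be extended past $T_1$, and symmetrically for $\mathcal{C}^{T_2\setminus T_1}$ under Condition~2 --- exactly your observation that only a cross-boundary time extension can be newly enabled, so the old maximal cliques stay maximal. The paper then stops and illustrates the third set with a figure; it never proves the $\subseteq$ inclusion, i.e.\ that every maximal clique of $\mathcal{C}^{T_2}$ lands in one of the three sets, and in particular it never shows that a straddling maximal clique must have $[t_a,t_b]\subseteq[T_1-\Delta,T_1+\Delta]$, which is the only justification for restricting the residual set to that window. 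Your plan supplies precisely this missing piece, and the mechanism you sketch is the right one: restrict the straddler to $[t_a,T_1]$ and $[T_1,t_b]$ via your restriction property, embed each piece in a maximal clique of $\mathcal{C}^{T_1}$ resp.\ $\mathcal{C}^{T_2\setminus T_1}$, and read off $t_b\le l^{\gamma}_{uv}+\Delta\le T_1+\Delta$ from Condition~1 and $t_a\ge f^{\gamma}_{uv}-\Delta\ge T_1-\Delta$ from Condition~2. Be aware that you have only outlined, not executed, this crux step, and that carrying it out requires care with the boundary conventions the paper leaves vague (whether links at time $T_1$ belong to the first or second batch, off-by-one in the forbidden windows $[l^1_{uv},l^{\gamma}_{uv}+\Delta]$, and the sub-$\Delta$-length pieces you flag); none of these appears fatal, but the paper gives you no help with them. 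Net effect: your argument, if completed, would be a strictly stronger proof than the one in the paper, which as written establishes only that the right-hand side consists of maximal cliques, not that it exhausts $\mathcal{C}^{T_2}$.
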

\begin{proof}
\begin{enumerate}
    \item  Assume that $(\mathcal{X}, [t_a, t_b]) \in \mathcal{C}^{T_{1}}$. If the clique is extended beyond $T_{1}$, then any one or both of the following can happen:
    \begin{enumerate}
        \item The clique $(\mathcal{X}, [t_a, t_b])$ got extended till $t_c$, where $t_c > T_{1}$ and thus the new maximal clique becomes $(\mathcal{X}, [t_a, t_c])$. If this happens then by the definition of $(\Delta, \gamma)$\mbox{-}Clique, between every pair of vertices in $\mathcal{X}$, there must be at least $\gamma$ links in every $\Delta$ duration between $t_a$ and $t_c$. However, in Condition $1$, it is mentioned that for every $u,v \in \mathcal{X}$, there does not exist any link $(u,v,t)$ with  $t \in [l^1_{uv} , l^\gamma_{uv} + \Delta]$. Hence, the clique $(\mathcal{X}, [t_a, t_b])$ can not be extended beyond $t_b$.
        \item  The second thing that can happen is that the new maximal clique $(\mathcal{Y}, [t_a, t_d])$ is splitted out from the clique $(\mathcal{X}, [t_a, t_b])$,  where $\mathcal{Y} \subset \mathcal{X}$ and $t_d > T_{1}$. Now, it is important to observe that this case can only occur if there does not exist any maximal clique $(\mathcal{Y}, [t_a^{'}, t_b^{'}])$ in $\mathcal{C}^{T_{1}}$ with ($t_a^{'} < t_a$ and $ t_b^{'} \geq t_b$) or ($t_a^{'} \leq t_a $ and $t_b^{'} > t_b$). However, in Condition $1$, it is mentioned that for every $u,v \in \mathcal{Y}$, there does not exist any link $(u,v,t)$ with  $t \in [l^1_{uv} , l^\gamma_{uv} + \Delta]$. Hence, the clique $(\mathcal{Y}, [t_a, t_d])$ can not be formed.
    \end{enumerate}
    Both the sub cases together imply that neither the maximal clique $(\mathcal{X}, [t_a, t_b])$ nor any sub clique of this can be extended beyond $T_{1}$.
    \item Assume that $(\mathcal{X}, [t_a, t_b]) \in \mathcal{C}^{T_{2} \setminus T_{1}}$. If the clique is extended before $T_{1}$, then any one or both of the followings can happen:
    \begin{enumerate}
        \item The clique $(\mathcal{X}, [t_a, t_b])$ got extended till $t_c$, where $t_c < T_{1}$ and thus the new maximal clique becomes $(\mathcal{X}, [t_c, t_b])$
        \item  The new maximal clique $(\mathcal{Y}, [t_d, t_b])$ is formed from the clique $(\mathcal{X}, [t_a, t_b])$,  where $\mathcal{Y} \subset \mathcal{X}$ and $t_d < T_{1}$. Now, it is important to observe that this case can only occur if there does not exist any maximal clique $(\mathcal{Y}, [t_a^{'}, t_b^{'}])$ in $\mathcal{C}^{T_{2} \setminus T_{1}}$ with ($t_a^{'} < t_a$ and $ t_b^{'} \geq t_b$) or ($t_a^{'} \leq t_a $ and $t_b^{'} > t_b$).
    \end{enumerate}
    Similar to Case 1, it can be proved that such extension is not possible as there does not exist any link with $t  \in [f^{\gamma}_{uv} - \Delta, f^{1}_{uv}]$ for all $u, v \in \mathcal{X}$.
\end{enumerate}
Figure \ref{fig:lemma1} shows an example scenario of Lemma \ref{lemma:1} with $T_1 = 12$, $\Delta=4$, and $\gamma = 2$. The contents of $\mathcal{C}^{T_1}$, $\mathcal{C}^{T_2 \setminus T_1}$, and $\mathcal{C}_{[T_{1}-\Delta,T_{1}+\Delta]}^{*}$ are shown in colour blue, green, and red, respectively ($\mathcal{C}^{T_1} = \{ (\{v_1, v_2\}, [2, 11]), (\{v_2, v_3\}, [4, 13]), (\{v_3, v_4\}, [1, 9])\}$, $\mathcal{C}^{T_2 \setminus T_1} = \{ (\{v_1, v_2\}, [12, 21]), (\{v_1, v_3\}, [11, 20]), (\{v_1, v_2, v_3\}, [12, 20])\}$, $\mathcal{C}_{[T_{1}-\Delta,T_{1}+\Delta]}^{*} = \{ (\{v_3, v_4\}, [8, 16])\}$).

\end{proof}

\begin{figure}
    \centering
    \includegraphics[scale=1.0]{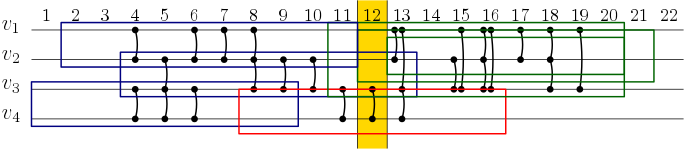}
    \caption{Demonstration for Lemma \ref{lemma:1}}
    \label{fig:lemma1}
\end{figure}


\begin{figure}
    \centering
    \includegraphics[scale=0.1]{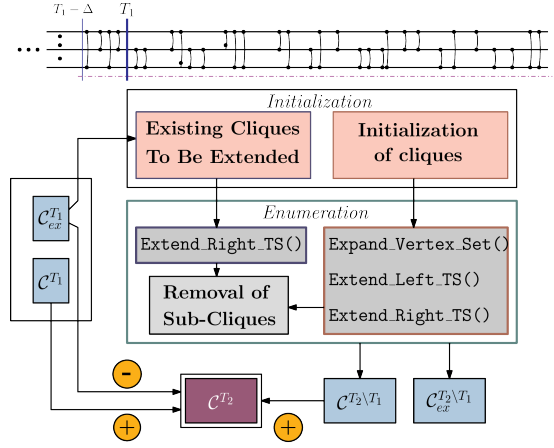}
    \caption{Block Diagram of the Proposed `Edge on Clique' Approach}
    \label{Fig:Demo_edge_on_clique}
\end{figure}

\par Assume that we have the maximal $(\Delta, \gamma)$-cliques for the links till time stamp $T_{1}$. Now, the links $\mathcal{E}^{(T_1, T_2]}$ are just arrived. As mentioned previously, the goal is to enumerate all the maximal $(\Delta, \gamma)$-cliques for the links till time stamp $T_2$. In our proposed methodology, as we update the existing $(\Delta, \gamma)$-cliques with the new set of links, we call our proposed methodology as `Edge on Clique' which goes like this. It takes the maximal clique set till time stamp $T_1$ (i.e.; $\mathcal{C}^{T_{1}}$), the possible clique set to be extended from the previous time stamp ($\mathcal{C}^{T_{1}}_{ex}$), the links arrived in the time duration from $T_1- \Delta$ to $T_2$ ($\mathcal{E}^{ [T_{1}-\Delta, T_{2}]}$), the time stamp till which the maximal cliques are processed ($T_1$), the time stamp up to which the recent links are just arrived ($T_2$),  $\Delta$, and $\gamma$ as inputs. It produces the maximal $(\Delta, \gamma)$-cliques till time stamp $T_{2}$ ($\mathcal{C}^{T_{2}}$), and the cliques that will be extended in the next update (i.e., when the links $\mathcal{E}^{(T_2, T_3]}$  for some $T_{3} > T_{2}$). The proposed method works in three parts; (i) First, it extends the right timestamp of all the cliques coming $\mathcal{C}^{T_1}_{ex}$; (ii) Second, it process the links $\mathcal{E}^{ [T_{1}-\Delta, T_{2}]}$ through initialization, extending the right and left timestamp, expanding the vertex set; (iii) Third, it removes the sub-cliques formed due to in-existence of the links before time $T_1 - \Delta$.  Figure \ref{Fig:Demo_edge_on_clique} demonstrates the proposed `edge on clique' approach.
\begin{algorithm}[h]
	\caption{Initialization Process of the `Edge on Clique' Approach}  \label{Algo:edge_on_clique_ini}
	\label{Algo:1}
	\KwData{ The Clique Set till Time $T_1$; i.e.; $\mathcal{C}^{T_{1}}$, $\mathcal{C}^{T_{1}}_{ex}$, The link set $\mathcal{E}^{ [T_{1}-\Delta, T_{2}]}$ of $\mathcal{G}(\mathcal{V}, \mathcal{E}, \mathcal{T}),  \Delta, \gamma$, $T_1$, \text{ and } $T_2$.}
	\KwResult{Initialized Clique Set for the links $\mathcal{E}^{T_{2} \setminus T_{1}}$. }
	$\mathcal{C}^{I} \longleftarrow  \mathcal{C}^{T_{1}}_{ex}$ \; $\mathcal{C}^{T_{2} \setminus T_{1}} \longleftarrow \emptyset, \ \mathcal{C}^{T_{2} \setminus T_{1}}_{ex} \longleftarrow \emptyset, \ \mathcal{C}_{im} \longleftarrow \mathcal{C}^{I}$\;
	\While{$\mathcal{C}^{I} \neq  \emptyset$}{
	  take and remove ($\mathcal{Z}$, [$t_x,t_y$]) from $\mathcal{C}^{I}$\;
	  ${r\_flag} = $ \texttt{Extend\_Right\_TS(}$(\mathcal{Z}, [t_x,t_y])$, $\mathcal{E}^{ [T_{1}-\Delta, T_{2}]}$\texttt{)}\;
	  \If{$ \text{r\_flag} == TRUE$}{
			add $(\mathcal{Z}, [t_x, t_y])$ to $\mathcal{C}^{T_{2} \setminus T_{1}}$\;
		}
		\If{$t_y \ge T_2$}{
		add $(\mathcal{Z}, [t_x, t_y])$ to $\mathcal{C}^{T_{2} \setminus T_{1}}_{ex}$\;
		  }
	  
	}
	$\mathcal{C}^{I} \longleftarrow  \text{Execute Algorithm 1 of \cite{banerjee2019enumeration} on the links } \mathcal{E}^{ [T_{1}-\Delta, T_{2}]}$\;
	$\mathcal{C}_{im} \longleftarrow \mathcal{C}_{im} \cup \mathcal{C}^{I}$\;
	\While{$\mathcal{C}^{I} \neq  \emptyset$}{
		take and remove ($\mathcal{Z}$, [$t_x,t_y$]) from $\mathcal{C}^{I}$\;
		\If{$t_y - t_x == \Delta$}{
		Prepare the static graph $G$ for the duration [$t_x,t_y$]\;
		Associate $N_{G}(\mathcal{Z})$ to ($\mathcal{Z}$, [$t_x,t_y$])\; 
		}
		${v\_flag} = $ \texttt{Expand\_Vertex\_Set(} $(\mathcal{Z}, [t_x,t_y])$, $N_{G}(\mathcal{Z})$ \texttt{)}\;
	
	    ${l\_flag} = $ \texttt{Extend\_Left\_TS(}$(\mathcal{Z}, [t_x,t_y])$, $\mathcal{E}^{ [T_{1}-\Delta, T_{2}]}$\texttt{)}\;
	    
	    ${r\_flag} = $ \texttt{Extend\_Right\_TS(}$(\mathcal{Z}, [t_x,t_y])$, $\mathcal{E}^{ [T_{1}-\Delta, T_{2}]}$\texttt{)}\;

		\If{$\text{v\_flag} \bigwedge \text{l\_flag} \bigwedge \text{r\_flag} == TRUE$}{
			add $(\mathcal{Z}, [t_x, t_y])$ to $\mathcal{C}^{T_{2} \setminus T_{1}}$\;
		}
		\If{$t_y \ge T_2$}{
		add $(\mathcal{Z}, [t_x, t_y])$ to $\mathcal{C}^{T_{2} \setminus T_{1}}_{ex}$\;
		  }
		}
		\texttt{EOC\_Remove\_Sub\_Cliques(}$T_1$\texttt{)}\;
		
		
		$\mathcal{C}^{T_{2}} \longleftarrow \mathcal{C}^{T_{2} \setminus T_{1}} \cup (\mathcal{C}^{T_{1}} \setminus \mathcal{C}^{T_{1}}_{ex}) $ \;
		
		\textbf{return} $\mathcal{C}^{T_{2}}$, $\mathcal{C}^{T_{2} \setminus T_{1}}_{ex}$\;

\end{algorithm}

\paragraph{Description of the Proposed Approach :} First, we initialize the clique set $\mathcal{C}^{I}$ with the cliques coming from $\mathcal{C}^{T_1}_{ex}$. These cliques are formed by the links of $\mathcal{E}^{T_1}$ which can be extended beyond time stamp $T_{1}$ without violating the properties of $(\Delta, \gamma)$-clique. Hence, we first process these cliques to extend the right timestamp with the links coming till timestamp $T_2$. For the enumeration process, four clique sets: $\mathcal{C}^{I}$ (for holding the cliques yet to be processed), $\mathcal{C}_{im}$ (for keeping the cliques already or yet to be processed), $\mathcal{C}^{T_{2} \setminus T_{1}}$ (for storing the maximal cliques, whose entire or partial links are present in $\mathcal{E}^{[T_1 - \Delta, T_2]}$), and $\mathcal{C}^{T_{2} \setminus T_{1}}_{ex}$ (for storing the cliques to be extended in next update) are maintained. The mentioned clique sets are initialized in Line 2. Here, we highlight that all these clique sets are `global', which means that the subroutines that are invoked in Algorithm \ref{Algo:1} will also have access. Next, in the while loop at Line 3 to 9, we process each of the cliques in $\mathcal{C}^{I}$, till $\mathcal{C}^{I}$ is empty. Next, an arbitrary clique $(\mathcal{Z}, [t_x,t_y])$ is taken out from $\mathcal{C}^{I}$ and tried to expand the right time stamp with \texttt{Extend\_Right\_TS()} Procedure (Procedure \ref{proc:rightts}). If the extension is possible, the new clique is added in $\mathcal{C}_{im}$ and $\mathcal{C}^{I}$, and set the $r\_flag$ as $FALSE$ to indicate $(\mathcal{Z}, [t_x,t_y])$ is non-maximal. Otherwise, $(\mathcal{Z}, [t_x,t_y])$ is maximal and added to $\mathcal{C}^{T_{2} \setminus T_{1}}$ in Line 7 of Algorithm \ref{Algo:1}. Now, if $t_y \geq T_2$ for the current popped clique, it is added into $\mathcal{C}^{T_{2} \setminus T_{1}}_{ex}$ as a possible candidate for the extension. This completes one step towards enumerating the all maximal cliques with the links till $T_2$.

\par Next, we execute the `initialization'  procedure, i.e., Algorithm 1 from \cite{banerjee2019enumeration} on the links $\mathcal{E}^{ [T_{1}-\Delta, T_{2}]}$ and the generated cliques are kept in $\mathcal{C}^{I}$. The cliques in $\mathcal{C}^{I}$ holds the following properties: i)the cardinality of the vertex set is 2, ii)the time interval of each $(\Delta, \gamma)$-clique is of exact duration $\Delta$, and iii) each clique has exactly $\gamma$ links within the time interval (Follows from Lemma 1 in \cite{banerjee2019enumeration}). In Line 10, the initialized cliques set $\mathcal{C}^{I}$ is the complete set for the enumeration process with the links $\mathcal{E}^{ [T_{1}-\Delta, T_{2}]}$. The correctness of the initialization can be proved by lemma 6 of \cite{banerjee2019enumeration}. In Line 11, $\mathcal{C}_{im}$ is updated with the cliques of $\mathcal{C}^{I}$ to start the next step of the enumeration process.

\par Next, an arbitrary clique $(\mathcal{Z}, [t_x,t_y])$ is taken out from $\mathcal{C}^{I}$ in Line $13$. If $t_y - t_x=\Delta$, the static graph $G$ is built with the links present within $[t_x, t_y]$ in Line 15. The neighboring vertices in $G$, which have communicated at least $\gamma$ times with any of the vertices in $\mathcal{Z}$, forms the candidate vertex set $\mathcal{N}_G(\mathcal{Z})$ and associated with the clique $(\mathcal{Z}, [t_x,t_y])$ in Line 16. Next, the algorithm tries to expand $(\mathcal{Z}, [t_x,t_y])$ in the following three ways: (i) by adding vertices (invoking the function \texttt{Expand\_Vertex\_Set()}), (ii) by stretching $t_x$ towards its left (invoking the function \texttt{Extend\_Left\_TS()}), and (iii) by stretching $t_y$ towards its right (invoking the function \texttt{Extend\_Right\_TS()}). In \texttt{Expand\_Vertex\_Set()}, it selects the neighbors of $\mathcal{Z}$ from the static graph $G$ within the interval $[t_x, t_y]$ and checks whether the new tuple(vertex pair, time interval) holds the $(\Delta, \gamma)$-clique property. If the vertex addition is possible, the new clique is added in $\mathcal{C}_{im}$ and $\mathcal{C}^{I}$, and it declares $(\mathcal{Z}, [t_x,t_y])$ as non-maximal by setting $v\_flag$ to $FALSE$. In \texttt{Extend\_Left\_TS()}, it tries to extend $t_x$ by $t_{xl} - \Delta$ ($t_{xl}$ is the latest time stamp from all the $\gamma$-th occurrence time stamps  within $[t_x-1, t_y]$ of each possible vertex pair in $\mathcal{Z}$) with the links $\mathcal{E}^{ [T_{1} - \Delta, T_{2}]}$. If the new tuple(vertex pair, time interval) holds the $(\Delta, \gamma)$-clique property, the new clique is added in $\mathcal{C}_{im}$ and $\mathcal{C}^{I}$, and it declares $(\mathcal{Z}, [t_x,t_y])$ as non-maximal by setting $l\_flag$ to $FALSE$. Similarly, in \texttt{Extend\_Right\_TS()}, $t_y$ is extended as $t_{yr} + \Delta$ ($t_{yr}$ is the earliest time stamp from all the last $\gamma$-th occurrence time stamps within $[t_x, t_y + 1]$ of each possible vertex pair in $\mathcal{Z}$) and sets $r\_flag$ to $FALSE$ if the extension is possible. If any of the function returns $FALSE$ that means the clique $(\mathcal{Z}, [t_x,t_y])$ is not maximal. Hence, we perform logical `AND' operation among the flags in Line 20, and if the outcome is $TRUE$, that means the clique $(\mathcal{Z}, [t_x,t_y])$ is a maximal clique, and it is added to $\mathcal{C}^{T_{2} \setminus T_{1}}$. Next, we verify whether the current clique $(\mathcal{Z}, [t_x, t_y])$ can be extended for the next update cycle in Line 22. If $t_y$ is greater than $T_2$, it is included in  $\mathcal{C}^{T_{2} \setminus T_{1}}_{ex}$. The correctness of this condition is given in Lemma \ref{lemma:extendend}. This process is repeated until $\mathcal{C}^{I}$ is empty.

Now, it is important to observe that the maximal clique set $\mathcal{C}^{T_{2}}$ can be obtained by the union of $(\mathcal{C}^{T_{1}} \setminus \mathcal{C}^{T_{1}}_{ex})$, and $\mathcal{C}^{T_{2} \setminus T_{1}}$. However, in doing so we may end up with getting some non\mbox{-}maximal cliques as well. So, it is important to remove such cliques to obtain the final clique set. Hence, the \texttt{EOC\_Remove\_Sub\_Cliques()} function is invoked. For a fixed $\mathcal{Z}$, it tries to get the maximum duration, and for a fixed $[t_x, t_y]$, it tries to keep the maximum number of vertices in $\mathcal{Z}$. In this way, it ends up with getting the maximal cliques only. Now, the complete maximal clique set, $\mathcal{C}^{T_{2}}$, is computed in Line 25. Finally, the algorithm returns the set of maximal cliques till $T_2$ ($\mathcal{C}^{T_{2}}$), and the possible cliques to be extended in next update cycle ($\mathcal{C}^{T_{2} \setminus T_{1}}_{ex}$), as output.

We highlight that for the first cycle there does not exist any prior clique sets and Algorithm \ref{Algo:1} will execute on the links $\mathcal{E}^{T_{1}}$. Hence, we redefine the changed inputs for the first cycle and introduce the notation $T_0$ for the simplicity in understanding. Here, we make $T_1$ of the second update cycle, as $T_0$. Similarly, we replace $T_2$ with $T_1$. So, the inputs of Algorithm \ref{Algo:1} become the previous cliques sets $\mathcal{C}^{T_{0}}$ and $\mathcal{C}^{T_{0}}_{ex}$ as $\emptyset$, the link set $\mathcal{E}^{T_{1}}$, $T_0$ as `$-1$', $T_1$, $\Delta$, and $\gamma$. Hence, the while loop from line 3 to 9 will not execute and the algorithm will run as the procedure described in \cite{banerjee2019enumeration} with the extra operations in line 22 and 23 to get the $\mathcal{C}^{T_{1} \setminus T_{0}}_{ex}$. The first update cycle will not require any removal of non-maximal cliques. So, it will return from the if condition in line 2 of the function \texttt{EOC\_Remove\_Sub\_Cliques(}$T_0 = -1$\texttt{)}. The line 25 in Algorithm \ref{Algo:1} will copy the entire maximal clique set $\mathcal{C}^{T_{1} \setminus T_{0}}$ to $\mathcal{C}^{T_{1}}$. Finally, it will return $\mathcal{C}^{T_{1}}$, and $\mathcal{C}^{T_{1} \setminus T_{0}}_{ex}$ as the outputs of the first cycle. The clique set $\mathcal{C}^{T_{1} \setminus T_{0}}_{ex}$ will be used as $\mathcal{C}^{T_{1}}_{ex}$ for the next update cycle. Now, based on the working principal of Algorithm \ref{Algo:1}, we have the following important observation which has been used to prove Lemma \ref{lemma:4}.

\begin{tcolorbox} \label{note:1}
\begin{itemize}
    \item \textbf{Note:} The clique extending procedures in Algorithm \ref{Algo:1} (i.e., \texttt{Expand\_Vertex\_Set()}, \texttt{Extend\_Left\_TS()}, and \texttt{Extend\_Right\_TS()}), are independent of their order of execution.
\end{itemize}
\end{tcolorbox}


\begin{procedure}[htb]
\caption{Expanding a clique by vertex addition()} \label{proc:nodeadd}
	\SetKwFunction{FMain}{Expand\_Vertex\_Set}
	  \SetKwProg{Pn}{Function}{:}{\KwRet}
      \Pn{\FMain{ $(\mathcal{Z}, [t_x, t_y])$, $N_G(\mathcal{Z})$}}{
       $\text{flag} = TRUE$\;
       \For{\text{All} $u \in N_G(\mathcal{Z}) \setminus \mathcal{Z}$}{
			\If{$(\mathcal{Z} \cup \{u\}, [t_x, t_y])$ is a $(\Delta, \gamma)$\mbox{-}clique}{
				$\text{flag} = FALSE$\;
				\If{$(\mathcal{Z} \cup \{u\}, [t_x, t_y]) \notin \mathcal{C}_{im}$}{
					add $(\mathcal{Z} \cup \{u\}, [t_x, t_y])$ to $\mathcal{C}^I$ and $\mathcal{C}_{im}$\;
				}	
			}
		}
         \KwRet \text{flag}\;
     }
\end{procedure}

\paragraph{Procedure: \texttt{Expand\_Vertex\_Set()}} This procedure takes a $(\Delta, \gamma)$\mbox{-}clique from $\mathcal{C}^{I}$ and its associated candidate vertex set for expanding the clique, as inputs, and returns a Boolean flag indicating whether the input clique is maximal or not. Now for a clique $(\mathcal{Z}, [t_x, t_y])$, we verify whether it is possible to add a vertex from $N_{G}(\mathcal{Z})$ to $\mathcal{Z}$, such that $\mathcal{Z} \cup \{u\}$ holds the $(\Delta, \gamma)$-clique property within $[t_x, t_y]$ in Line 4. If so, the flag is set to $FALSE$, and it is added to $\mathcal{C}^{I}$ and $\mathcal{C}_{im}$ if the new clique does not exists to $\mathcal{C}_{im}$. 

\begin{mylemma} \label{lemma:t1-delta}
For updating $\mathcal{C}^{T_1}$ to obtain $\mathcal{C}^{T_2}$, it is enough to have the links for the duration $[T_1 - \Delta, T_2]$ . 
\end{mylemma}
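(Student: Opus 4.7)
The plan is to show that every maximal $(\Delta, \gamma)$-clique appearing in $\mathcal{C}^{T_2}$ can be produced, and its maximality verified, using only $\mathcal{C}^{T_1}$, $\mathcal{C}^{T_1}_{ex}$, and the links in $\mathcal{E}^{[T_1 - \Delta, T_2]}$; the converse---that the algorithm never consults a link with timestamp below $T_1 - \Delta$---then follows from direct inspection of Algorithm \ref{Algo:1} and its subroutines. I would fix an arbitrary $(\mathcal{X}, [t_a, t_b]) \in \mathcal{C}^{T_2}$ and split into three cases according to the position of $[t_a, t_b]$ relative to $T_1$. If $t_b \leq T_1$, the $(\Delta, \gamma)$-property depends only on links that are identical in $\mathcal{E}^{T_1}$ and $\mathcal{E}^{T_2}$, so the clique already sits in $\mathcal{C}^{T_1}$. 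If $t_a \geq T_1 - \Delta$, then $[t_a, t_b] \subseteq [T_1 - \Delta, T_2]$ and the claim is immediate.

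The main case is $t_a < T_1 - \Delta < T_1 < t_b$. Since $[t_a, t_a + \Delta]$ lies entirely before $T_1$, the restriction $(\mathcal{X}, [t_a, T_1])$ is a valid $(\Delta, \gamma)$-clique over $\mathcal{E}^{T_1}$ and is contained in some maximal clique $(\mathcal{X}', [t_a', T_1]) \in \mathcal{C}^{T_1}$ with $\mathcal{X} \subseteq \mathcal{X}'$ and $t_a' \leq t_a$. Because its right endpoint reaches $T_1$, this clique is flagged into $\mathcal{C}^{T_1}_{ex}$ by the $t_y \geq T_1$ rule of Algorithm \ref{Algo:1}. To extend it to $(\mathcal{X}, [t_a, t_b])$, the routines \texttt{Extend\_Right\_TS()}, \texttt{Expand\_Vertex\_Set()}, and \texttt{Extend\_Left\_TS()} only have to verify $\Delta$-windows $[t, t + \Delta]$ with $t + \Delta > T_1$, equivalently with $t > T_1 - \Delta$, and every such window lies inside $[T_1 - \Delta, T_2]$. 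The subtle subcase is when a strict subset $\mathcal{X} \subsetneq \mathcal{X}'$ becomes newly maximal because only $\mathcal{X}$ admits the post-$T_1$ extension: here I would argue that the initialization pass over $\mathcal{E}^{[T_1 - \Delta, T_2]}$ seeds every two-vertex seed $(\{u,v\}, [t, t + \Delta])$ with $t + \Delta > T_1$ that could possibly participate, and the union with $\mathcal{C}^{T_1}_{ex}$ combined with \texttt{EOC\_Remove\_Sub\_Cliques()} then delivers the correct maximal object.

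The main obstacle I anticipate is a clean formalisation of the preceding containment step, namely that $(\mathcal{X}, [t_a, T_1])$ is always recoverable either as an entry of $\mathcal{C}^{T_1}_{ex}$ or as a restriction of such an entry, together with the claim that its $(\Delta, \gamma)$-verification is insensitive to links strictly before $T_1 - \Delta$. This latter point ultimately reduces to the elementary arithmetic observation that any $\Delta$-window whose right endpoint exceeds $T_1$ must have its left endpoint strictly greater than $T_1 - \Delta$, which is exactly the bound stated in the lemma.
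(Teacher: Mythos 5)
Your overall skeleton---a case split on where $[t_a,t_b]$ sits relative to $T_1$, anchored by the arithmetic fact that any $\Delta$-window whose right endpoint exceeds $T_1$ must start after $T_1-\Delta$---is sound and rests on the same key observation as the paper. The paper organizes the argument differently: it first shows the links back to $T_1-\Delta$ are \emph{necessary} for two concrete mechanisms (computing $t_{yr}$ when right-extending a clique of $\mathcal{C}^{T_1}_{ex}$ whose $t_y=T_1$, and initializing vertex pairs whose $\gamma$ occurrences straddle the cut at $T_1$), and then separately argues that reaching back any further than $T_1-\Delta$ is \emph{redundant} because those links were already consumed when $\mathcal{C}^{T_1}$ was built. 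Your "direct inspection" remark only gestures at this second half.

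The genuine gap is in the subcase you yourself flag as subtle. If $(\mathcal{X},[t_a,t_b])$ is maximal with $t_a<T_1-\Delta<T_1<t_b$ and only a strict subset $\mathcal{X}\subsetneq\mathcal{X}'$ admits the post-$T_1$ extension, then your proposed rescue---seeding two-vertex cliques from the initialization pass over $\mathcal{E}^{[T_1-\Delta,T_2]}$ and cleaning up with \texttt{EOC\_Remove\_Sub\_Cliques()}---cannot work: every seed produced in that pass has its interval inside $[T_1-\Delta,T_2]$, and \texttt{Extend\_Left\_TS()} running on the restricted link set can never push a left endpoint below $T_1-\Delta$ (the paper concedes exactly this failure in its discussion of Procedure \ref{proc:leftts}), so the object $(\mathcal{X},[t_a,t_b])$ with $t_a<T_1-\Delta$ is unreachable from those seeds. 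What actually saves this case is that $\mathcal{C}^{T_1}_{ex}$ is not the set of \emph{maximal} cliques reaching $T_1$ but the set of \emph{all intermediate} cliques with $t_y\geq T_1$ generated during the first cycle; hence it already contains a clique with vertex set exactly $\mathcal{X}$ and left endpoint exactly $t_a$, whose right extension alone suffices (this is the content of Lemma \ref{lemma:4} and Lemma \ref{lemma:extendend}). Your containment step, which hands you the maximal clique $(\mathcal{X}',[t_a',T_1])$ rather than an intermediate clique on $\mathcal{X}$ itself, therefore delivers the wrong object, and the fallback does not recover the right one.
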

\begin{proof}
We have to show that it is enough to process $\mathcal{E}^{[T_1 - \Delta, T_2]}$ to get $\mathcal{C}^{T_2}$, along with the inputs $\mathcal{C}^{T_1}$ and $\mathcal{C}^{T_1}_{ex}$. To prove the statement, we need to show the following two points:
\begin{enumerate}
    \item It does not miss any clique to get the maximal cliques.
    \item It does not carry any redundant processing while building $\mathcal{C}^{T_2 \setminus T_1}$. 
\end{enumerate}
During the enumeration process, the initialized cliques are of size $2$, time duration is of $\Delta$. It holds exact $\gamma$ occurrences of the vertex pair within the time duration. They are also associated with their respective candidate set of nodes as the neighbors of the vertex pair in that $\Delta$ duration. Now, the possibility of the extension of a clique involves three directions, as discussed, (vertex addition, extending its start time to the left, extending the end time stamp to the right). In this process, it merges different initialized cliques to the extended one, till it reaches the maximality. In Algorithm \ref{Algo:1}, the cliques from $ \mathcal{C}^{T1}_{ex}$ are extended in the right time stamp only. Now, we know that the smallest possible $t_y$ for extension from $\mathcal{C}^{T_1}$ side to $\mathcal{C}^{T_2 \setminus T_1}$, is $T_1$. Assume, a clique $(\mathcal{Z}, [t_x, t_y]) \in \mathcal{C}^{T1}_{ex}$ with $t_y = T_1$ and $t_{yr}=T_1 - \Delta$. To extend $t_y$ in the right, it needs to observe the existence of $\gamma$ links in $[t_{yr} + 1, t_{yr} + 1 + \Delta  ]$, i.e, $[T_1 - \Delta + 1, T_1 + 1]$. This requires to have the links at least from the timestamp $T_1 - \Delta$. In Procedure \ref{proc:rightts}, it is calculated in Line 3. So, it is evident to have $\mathcal{E}^{[T_1 - \Delta, T_2]}$ to get the $\mathcal{C}^{T_2}$.
\par Another case can arise here is that the $\gamma$ edges of a vertex pair are separated by the partition at $T_1$. Hence, the cliques involving those vertices will not be initialized while building $\mathcal{C}^{T1}$. Hence, those cliques do not exists in $\mathcal{C}^{T1}_{ex}$. To check this possibility, for every new link in $(T1, T1+\Delta]$ it needs to verify the existence of $\gamma$ edges in the corresponding previous $\Delta$ duration for that vertex pair and initialize the $(\Delta, \gamma)$-cliques while building $\mathcal{C}^{T_2 \setminus T_1}$. So, it is necessary to have the set of links $\mathcal{E}^{[T_1 - \Delta, T_2]}$ in computing $\mathcal{C}^{T_2}$.
\par Now any other cliques initialized with the overlapping links are redundant. Let's assume, the set of links used for preparing $\mathcal{C}^{T_2 \setminus T_1}$ is $\mathcal{E}^{[T_1 - \Delta - k, T_2]}$, where $k > 0$. Here, the links between $T_1 - \Delta - k$ to $T_1$ are processed twice, once, while preparing $\mathcal{C}^{T_1}$ and another for $\mathcal{C}^{T_2 \setminus T_1}$. Now, we already shown that the links in $[T_1 - \Delta, T_1]$ are necessary to avoid the missing initialized cliques. Hence, adding more links from the left of $T_1-\Delta$ is redundant and $\mathcal{E}^{[T_1 - \Delta, T_2]}$ is sufficient to enumerate all the maximal cliques till $T_2$ by the `Edge on Cliques' method.
\end{proof}
\textbf{Note:} If any dataset follows the condition of Lemma \ref{lemma:1}, the overlapping links can be ignored.
\begin{mylemma} \label{lemma:3}
The candidate vertex set associated with each $(\Delta, \gamma)$-clique of $\mathcal{C}^{I}$, is complete for the final maximal clique set.
\end{mylemma}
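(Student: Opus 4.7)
The plan is to establish that whenever an initialized clique $(\mathcal{Z},[t_x,t_y]) \in \mathcal{C}^{I}$ with $t_y - t_x = \Delta$ is assigned the candidate set $N_G(\mathcal{Z})$, no vertex outside this set can ever be added to $\mathcal{Z}$ through the three expansion operations (\texttt{Expand\_Vertex\_Set}, \texttt{Extend\_Left\_TS}, \texttt{Extend\_Right\_TS}), no matter how many expansion iterations occur before reaching a final maximal clique. Once this is shown, restricting vertex addition attempts to $N_G(\mathcal{Z})$ is provably lossless, which is precisely what ``complete for the final maximal clique set'' means.

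The argument I would give is structural and proceeds by tracking what each expansion operation does to the associated candidate set. First I would observe that \texttt{Extend\_Left\_TS} and \texttt{Extend\_Right\_TS} never add new vertices; they only stretch the time interval. So the only operation that could introduce a vertex not belonging to $N_G(\mathcal{Z})$ is \texttt{Expand\_Vertex\_Set}, which in Procedure \ref{proc:nodeadd} explicitly draws its candidates from $N_G(\mathcal{Z})$ itself. Hence the question reduces to: if a vertex $u$ legitimately belongs to some maximal clique $(\mathcal{X},[t_a,t_b])$ that descends from $(\mathcal{Z},[t_x,t_y])$, must $u$ already lie in $N_G(\mathcal{Z})$?

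To settle this, I would use the $(\Delta,\gamma)$-clique definition in the initial $\Delta$-window. Because $(\mathcal{Z},[t_x,t_y])$ descends to $(\mathcal{X},[t_a,t_b])$, we have $\mathcal{Z} \subseteq \mathcal{X}$ and $[t_x,t_y] \subseteq [t_a,t_b]$. Pick any $u \in \mathcal{X} \setminus \mathcal{Z}$. By Definition \ref{Def:3}, for every pair of vertices in $\mathcal{X}$ there must be at least $\gamma$ links in every $\Delta$-window inside $[t_a,t_b]$. In particular, the window $[t_x, t_x + \Delta] = [t_x,t_y]$ lies inside $[t_a,t_b]$, so for every $v \in \mathcal{Z}$ the pair $(u,v)$ admits $\gamma$ temporal links within $[t_x,t_y]$. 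By construction of the static graph $G$ in Line 15 of Algorithm \ref{Algo:1}, $u$ is a neighbor of each $v \in \mathcal{Z}$ in $G$ with multiplicity at least $\gamma$, so $u$ satisfies the condition placing it in $N_G(\mathcal{Z})$. Consequently $\mathcal{X} \subseteq \mathcal{Z} \cup N_G(\mathcal{Z})$, which proves completeness.

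The main obstacle I anticipate is handling the reading of ``any'' in the definition of $N_G(\mathcal{Z})$ ---i.e.\ whether $u \in N_G(\mathcal{Z})$ requires $\gamma$ links to every vertex of $\mathcal{Z}$ or merely to some vertex of $\mathcal{Z}$. Since $|\mathcal{Z}| = 2$ at initialization, this ambiguity is harmless at the moment the candidate set is first fixed, but I would still explicitly quantify it so the descent argument goes through cleanly in both readings. A second, more minor concern is showing that the \emph{same} candidate set inherited by enlarged descendants remains a valid superset: this follows because each expansion only shrinks the set of admissible additional vertices (more constraints, smaller feasible set), so carrying forward $N_G(\mathcal{Z})$ is safe. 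Together, these observations complete the justification without any need for induction on the number of expansion steps.
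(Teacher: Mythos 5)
Your proposal is correct and rests on the same key observation as the paper's proof: any vertex $u$ that can legitimately join a descendant clique must, by the $(\Delta,\gamma)$-clique definition applied to the initialization's $\Delta$-window $[t_x,t_y]$, already have $\gamma$ links to the clique's vertices there, and hence already belongs to the static-graph neighborhood $N_G(\mathcal{Z})$ recorded at initialization and carried forward unchanged. The paper phrases this backwards (from an arbitrary clique to an initialized ancestor $(\{v,w\},[t_x',t_y'])$ whose candidate set must contain $u$) while you argue forwards from the initialized clique to its descendants, but the substance is identical; your explicit handling of the ``some vs.\ every neighbor'' reading of $N_G(\mathcal{Z})$ is a welcome tightening of a point the paper leaves implicit.
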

\begin{proof}
To prove the statement by contradiction, we have to show for a clique $(\mathcal{Z}, [t_x, t_y])$, it's associated candidate vertex set is  incomplete. In Algorithm \ref{Algo:1}, the associated candidate set $N_{G}(\mathcal{Z})$ defines the set of vertices which are possible to be added in $\mathcal{Z}$ for vertex set expansion of the clique. Now, the $N_{G}(\mathcal{Z})$ contains the set of vertices where each has to be the neighbor (connected at least $\gamma$ times within $[t_x, t_y]$) of at least one vertex of $\mathcal{Z}$ in the duration of $t_x$ to $t_y$. It is formed during the initialization of the clique when $\vert \mathcal{Z} \vert = 2$ and $t_y - t_x = \Delta$, and propagated further in it's each expansion. Now, we need to show that the $N_{G}(\mathcal{Z})$ is complete. 
\par Assume, a vertex $u \notin N_{G}(\mathcal{Z})$ can be added to $\mathcal{Z}$ such that $(\mathcal{Z} \cup \{u\}, [t_x, t_y])$ forms a $(\Delta, \gamma)$-clique. So, $u$ has to be the neighbor of all the vertices from $\mathcal{Z}$. Let us assume, one such vertex is $v$. As $v \in \mathcal{Z}$, there has to be $\gamma$ edges of the vertex pair $(v, w)$ in $\Delta$ duration with $w \in \mathcal{Z}$, $[t_{x}^{'}, t_{y}^{'}] \subseteq [t_x, t_y]$, and  $t_{y}^{'} - t_{x}^{'} = \Delta$. Hence, $(\{v, w\}, [t_{x}^{'}, t_{y}^{'}])$ is a $(\Delta, \gamma)$-clique and can be one of the initialization to get $(\mathcal{Z}, [t_x, t_y])$. Hence, $u$ has to be in the candidate set of $(\{v, w\}, [t_{x}^{'}, t_{y}^{'}])$, which has to be carry forwarded in $N_{G}(\mathcal{Z})$. Hence, $N_{G}(\mathcal{Z})$ is complete.
\end{proof}
\begin{mylemma} \label{Lemma:5}
The candidate vertex set associated with each $(\Delta, \gamma)$-clique of $\mathcal{C}^{I}$, is correct for the final maximal clique set.
\end{mylemma}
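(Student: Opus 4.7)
The plan is to prove correctness as the dual of Lemma \ref{lemma:3}: whereas completeness asserts that every legitimate expansion candidate lies in $N_G(\mathcal{Z})$, correctness asserts that every vertex kept in $N_G(\mathcal{Z})$ is indeed a legitimate candidate, i.e., a vertex of $\mathcal{V}(\mathcal{G})$ that communicates at least $\gamma$ times within $[t_x,t_y]$ with some vertex of $\mathcal{Z}$. In particular, no spurious vertex survives propagation through the expansion steps in Algorithm \ref{Algo:1}. I expect to argue by contradiction: assume that $u \in N_G(\mathcal{Z})$ for some clique $(\mathcal{Z}, [t_x,t_y]) \in \mathcal{C}^{I}$, yet $u$ fails the $\gamma$-neighbor condition with every vertex of $\mathcal{Z}$ on $[t_x,t_y]$. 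Tracing the expansion history of $(\mathcal{Z}, [t_x,t_y])$ back to its initialization will then yield the desired contradiction.

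First I would set up structural induction over the three ways a clique in $\mathcal{C}^{I}$ can be produced: (i) initialization in Lines 14--17 of Algorithm \ref{Algo:1}, (ii) vertex addition via \texttt{Expand\_Vertex\_Set()}, and (iii) left/right time stamp extension via \texttt{Extend\_Left\_TS()} / \texttt{Extend\_Right\_TS()}. The base case is immediate: when $|\mathcal{Z}|=2$ and $t_y - t_x = \Delta$, the static graph $G$ is built directly from $\mathcal{E}^{[t_x,t_y]}$ and $N_G(\mathcal{Z})$ is, by definition, exactly the set of $\gamma$-neighbors of the two vertices, so every element is correct by construction. For the inductive step on vertex addition $\mathcal{Z} \mapsto \mathcal{Z} \cup \{w\}$ the time window is unchanged, and since $\mathcal{Z} \subset \mathcal{Z} \cup \{w\}$ any $u$ that was a $\gamma$-neighbor of some vertex of $\mathcal{Z}$ on $[t_x,t_y]$ remains a $\gamma$-neighbor of some vertex of $\mathcal{Z} \cup \{w\}$ on the same interval, preserving correctness.

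The main obstacle is the time-extension step, because the static graph that defines the candidate set is interval-dependent and grows when the window grows. My plan is to show that if the clique is extended from $[t_x,t_y]$ to $[t_x - dt, t_y]$ or $[t_x, t_y + dt]$, then $\gamma$-neighbor relations in $G$ over $[t_x,t_y]$ imply $\gamma$-neighbor relations in the enlarged graph, because the edge multiset is monotone in the interval. Hence every vertex propagated from $N_G(\mathcal{Z})$ to the candidate set of the extended clique is still a valid $\gamma$-neighbor of some vertex of $\mathcal{Z}$ within the enlarged window, i.e., still correct. Using the Note (Box \ref{note:1}) that the three expansion routines are order-independent, a single induction over the sequence of expansions suffices.

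Combining the base case and the inductive step rules out the existence of a spurious $u \in N_G(\mathcal{Z})$ at any clique in $\mathcal{C}^{I}$. Together with the $(\Delta, \gamma)$-clique validation performed inside \texttt{Expand\_Vertex\_Set()} (Line 4 of Procedure \ref{proc:nodeadd}), this establishes that the candidate vertex sets drive only sound additions, which is what correctness for the final maximal clique set requires. Lemma \ref{lemma:3} and Lemma \ref{Lemma:5} together then yield that $N_G(\mathcal{Z})$ is exactly the right set of vertices to consider during expansion.
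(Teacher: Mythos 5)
You have proved a different statement from the one the paper needs here. You read ``correct'' as soundness of the candidate set itself: no spurious vertex survives in $N_G(\mathcal{Z})$, i.e., every retained vertex really is a $\gamma$-neighbour of some vertex of $\mathcal{Z}$ on the relevant interval. The paper's proof states explicitly that ``correct'' is meant as: using these candidate sets, the enumeration \emph{does not miss any maximal $(\Delta,\gamma)$-clique in $\mathcal{C}^{T_2}$}. It argues by contradiction on a supposedly missing maximal clique $(\mathcal{Z},[t_x,t_y])$, splits into the case $[t_x,t_y]\subseteq[T_1-\Delta,T_2]$ and the case $t_x<T_1-\Delta$, $t_y\geq T_1$, and in each case exhibits an initialized two-vertex clique whose candidate set contains all of $\mathcal{Z}$, then traces a generation chain $A_0\rightarrow A_1\rightarrow\cdots\rightarrow A_{k-2}$ through Procedure \ref{proc:nodeadd} and invokes Lemma 7 of the cited prior work to reach $(\mathcal{Z},[t_x,t_y])$. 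That reachability claim is exactly what Lemma \ref{lemma:9}, Lemma \ref{lemma:4} and Theorem \ref{Th:3} later consume from Lemma \ref{Lemma:5}; your soundness property cannot be substituted for it, and indeed soundness of additions is already enforced independently by the explicit $(\Delta,\gamma)$-clique test in Line 4 of Procedure \ref{proc:nodeadd} (a spurious vertex in $N_G(\mathcal{Z})$ would only cost time, never correctness). So as a drop-in replacement for the paper's proof, your argument leaves the downstream completeness chain unsupported.

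The ambiguity is partly the paper's fault --- the lemma statement does not say which direction ``correct'' means, and your reading is the more standard one --- but the gap is real: nothing in your induction shows that an arbitrary maximal clique of $\mathcal{C}^{T_2}$ is actually generated from some initialized clique in $\mathcal{C}^{I}$ or from $\mathcal{C}^{T_1}_{ex}$. Within its own terms your argument is essentially fine (the base case is by construction, vertex addition preserves the interval, and monotonicity of the link multiset under interval enlargement preserves the $\gamma$-neighbour property under time extension), so it could stand as a separate, correctly-labelled soundness observation. To recover what the paper needs, you would have to add the reachability argument: for a maximal $(\mathcal{Z},[t_x,t_y])$, locate a $\Delta$-window $[t_x',t_x'+\Delta]\subseteq[t_x,t_y]$, note that every pair in $\mathcal{Z}$ yields an initialized clique there whose candidate set (by Lemma \ref{lemma:3}) contains all of $\mathcal{Z}$, grow the vertex set to $\mathcal{Z}$ by repeated vertex addition, and then extend the interval to $[t_x,t_y]$ --- handling separately the cliques that must instead be seeded from $\mathcal{C}^{T_1}_{ex}$ because their left endpoint precedes $T_1-\Delta$.
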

\begin{proof}
In the proof of Lemma \ref{lemma:3}, it has been discussed that for any clique in $\mathcal{C}^I$, the candidate vertex set is associated at the time of it's initialization, and propagated further (without modification) to the new extended cliques, till it reaches it's maximality. To prove that the candidate vertex set is correct, we need to show that it does not miss any maximal $(\Delta, \gamma)$-clique in $\mathcal{C}^{T_{2}}$. Let's assume, $(\mathcal{Z}, [t_x, t_y]) \notin \mathcal{C}^{T_{2}}$ is a maximal $(\Delta, \gamma)$-clique. Now, there can be either of the following two situations.
\begin{itemize}
    \item \textbf{Case 1:} $[t_x, t_y] \subseteq [T_1 - \Delta, T_2]$
    \item \textbf{Case 2:} $t_x < T_1 - \Delta$ and $t_y \geq T_1$
\end{itemize}
\par For Case 1, as $(\mathcal{Z}, [t_x, t_y])$ is a $(\Delta, \gamma)$-clique, so all the vertices have to be linked at least $\gamma$ times in each $\Delta$ duration within $[t_x, t_y]$. So, there exist $ \binom{\vert \mathcal{Z} \vert}{2}$ possible vertex sets for the initialized cliques within $[t_x, t_y]$. Now, for any such initial clique $A_{0} = (\mathcal{Z}^{'}, [t_{x}^{'}, t_{y}^{'}])$ in $\mathcal{C}^{I}$, such that $\mathcal{Z}^{'} \subseteq \mathcal{Z}$, $\vert \mathcal{Z}^{'} \vert =2$, $[t_{x}^{'}, t_{y}^{'}] \subseteq [t_x, t_y]$, and $t_{y}^{'} - t_{x}^{'} = \Delta$, the associated candidate set $\mathcal{N}_G(\mathcal{Z}^{'})$ has to contain all the vertices of $\mathcal{Z}$. This is possible as all the vertices are connected in the static graph $G$ generated in $[t_{x}^{'}, t_{y}^{'}]$ with the link set $\mathcal{E}^{[T_1-\Delta, T_2]}$. Now, by executing Procedure \ref{proc:nodeadd} with $A_{0}$, it will generate all the cliques of size 3 as $A_{1} = (\mathcal{Z}^{''}, [t_{x}^{'}, t_{y}^{'}])$. Next, repeating this process it will form the sequence as $A_{0} \longrightarrow A_{1} \longrightarrow A_{2}, \longrightarrow \ldots \longrightarrow, A_{k-2}$, where $A_{k-2} = (\mathcal{Z}, [t_x^{'}, t_y^{'}])$ with $\vert \mathcal{Z} \vert = k$. Now, as per Lemma 7 of \cite{banerjee2019enumeration}, $(\mathcal{Z}, [t_x, t_y])$ will be obtained from $(\mathcal{Z}, [t_x^{'}, t_y^{'}])$. So, $(\mathcal{Z}, [t_x, t_y])$ will be in $\mathcal{C}^{T_2 \setminus T_1}$. Also, as the clique is maximal, it will not be removed. Hence, $(\mathcal{Z}, [t_x, t_y])$ will be present in $\mathcal{C}^{T_{2}}$.
\par For Case 2, the clique $(\mathcal{Z}, [t_x, t_y])$ has to be emerged from a clique initialized through $\mathcal{C}^{T_1}_{ex}$. Assume, there exist a clique $B_{0} = (\mathcal{Z}^{'}, [t_{x}^{'}, t_{y}^{'}])$ with $[t_x^{'}, t_y^{'}] \subset [T_0, T_1]$, $\vert \mathcal{Z}^{'} \vert =2$, and $t_{y}^{'} - t_{x}^{'} = \Delta$. Similar to Case 1, it can be shown that $B_{0}$ forms $B_{k-2} = (\mathcal{Z}, [t_{x}^{'}, t_{y}^{'}])$, while building $\mathcal{C}^{T_1}$. Now, as per Lemma 7 of \cite{banerjee2019enumeration}, $B_{k-2}$ reaches to it's maximal $B_{k} = (\mathcal{Z}, [t_{x}, t_{y}^{''}])$ with $\mathcal{E}^{T_1}$, where $t_y^{''} \geq T_1$. $B_{k}$ is added in $\mathcal{C}^{T_1}_{ex}$. Next, the cliques in $\mathcal{C}^{T_1}_{ex}$ are expanded in the right timestamp only to reach $t_y$. Hence, $(\mathcal{Z}, [t_x, t_y])$ will be present in $\mathcal{C}^{T_2}$. This completes the proof of the lemma statement.
\end{proof}
Together Lemma \ref{lemma:3}, and \ref{Lemma:5} imply that the candidate vertex set associated with each of the cliques of $\mathcal{C}^{I}$ is correct and complete.
\begin{mylemma} \label{lemma:4}
There exist a maximal clique $(\mathcal{Z}, [t_x, t_y]) \in \mathcal{C}^{T_2}$, such that $t_x < T_1$ and $t_y \geq T_1$, iff there exist a clique $(\mathcal{Z}, [t_x, t_{y}^{'}]) \in \mathcal{C}^{T_1}_{ex}$ with $t_{y}^{'} \leq t_{y}$.
\end{mylemma}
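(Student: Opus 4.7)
The plan is to prove the two directions of the biconditional separately, relying on the completeness and correctness of the candidate vertex set (Lemmas~\ref{lemma:3} and~\ref{Lemma:5}), the monotonicity $\mathcal{E}^{T_1} \subseteq \mathcal{E}^{T_2}$, and the order-independence note above.

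For the forward direction, I would begin with a maximal clique $(\mathcal{Z}, [t_x, t_y]) \in \mathcal{C}^{T_2}$ satisfying $t_x < T_1 \leq t_y$. The key observation is that $(\mathcal{Z}, [t_x, T_1])$ remains a $(\Delta, \gamma)$-clique when restricted to $\mathcal{E}^{T_1}$, since every $\Delta$-window contained in $[t_x, T_1]$ sees an identical link set in $\mathcal{E}^{T_1}$ and $\mathcal{E}^{T_2}$. By the completeness of the enumeration on $\mathcal{E}^{T_1}$ (Lemmas~\ref{lemma:3} and~\ref{Lemma:5}) together with Lemma~7 of~\cite{banerjee2019enumeration}, the algorithm processes the clique $(\mathcal{Z}, [t_x, t_y^{'}])$, where $t_y^{'}$ is the maximum right endpoint for which $(\mathcal{Z}, [t_x, t_y^{'}])$ is a $(\Delta, \gamma)$-clique on $\mathcal{E}^{T_1}$. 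Since $\mathcal{E}^{T_1} \subseteq \mathcal{E}^{T_2}$, every window that is valid on $\mathcal{E}^{T_1}$ is valid on $\mathcal{E}^{T_2}$, so $t_y^{'} \leq t_y$. As $t_y^{'} \geq T_1$, the condition in lines~22--23 of Algorithm~\ref{Algo:1} (which admits any processed clique with right endpoint meeting the threshold, irrespective of the maximality flags) places $(\mathcal{Z}, [t_x, t_y^{'}])$ into $\mathcal{C}^{T_1}_{ex}$.

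For the backward direction, I would start from $(\mathcal{Z}, [t_x, t_y^{'}]) \in \mathcal{C}^{T_1}_{ex}$. The first while loop of Algorithm~\ref{Algo:1} extracts this clique and iteratively invokes \texttt{Extend\_Right\_TS()} over $\mathcal{E}^{[T_1 - \Delta, T_2]}$. Define $t_y$ to be the terminal right endpoint reached along this extension chain; then $(\mathcal{Z}, [t_x, t_y])$ is a $(\Delta, \gamma)$-clique on $\mathcal{E}^{T_2}$ with $t_y \geq t_y^{'} \geq T_1$ and $t_x < T_1$. Maximality in $\mathcal{C}^{T_2}$ follows from three checks: $t_y$ is maximal by construction; $t_x$ cannot be decreased because any $\Delta$-window $[t^{'}, t^{'}+\Delta]$ blocking such an extension necessarily satisfies $t^{'}+\Delta \leq T_1$, so its link count is identical in $\mathcal{E}^{T_1}$ and $\mathcal{E}^{T_2}$, preserving the block; and no vertex can be added because, by the order-independence note, the vertex-set expansion tests already performed when $(\mathcal{Z}, [t_x, t_y^{'}])$ was processed in the previous cycle yield the same verdict once combined with the right-extension in this cycle. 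Finally, the \texttt{EOC\_Remove\_Sub\_Cliques()} step preserves $(\mathcal{Z}, [t_x, t_y])$ because it is maximal.

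The main obstacle is the backward direction, specifically the vertex-set maximality claim. A clique in $\mathcal{C}^{T_1}_{ex}$ is not required to be maximal in $\mathcal{C}^{T_1}$ (its membership condition is only $t_y^{'} \geq T_1$), and one must exclude the scenarios where a previously non-addable vertex becomes addable once $t_y$ is extended, or where a blocking $\Delta$-window that straddles $T_1$ becomes unblocked by the new links in $\mathcal{E}^{(T_1, T_2]}$. I would resolve this by appealing to the commutativity of the three extension procedures: the outcome of processing the clique incrementally across cycles matches the outcome of a single fresh enumeration on $\mathcal{E}^{T_2}$, so any vertex that is addable in $\mathcal{C}^{T_2}$ would already have been flagged during the cycle in which $\mathcal{Z}$ was first assembled, and any left-blocking window remains blocked under the enlarged link set.
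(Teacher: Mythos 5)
Your proof follows essentially the same route as the paper's: both directions rest on the completeness and correctness of the candidate sets (Lemmas~\ref{lemma:3} and~\ref{Lemma:5}), Lemma~7 of \cite{banerjee2019enumeration}, the sufficiency of $\mathcal{E}^{[T_1-\Delta,T_2]}$ (Lemma~\ref{lemma:t1-delta}), and the order-independence of the three extension procedures, with your intermediate clique $(\mathcal{Z},[t_x,t_y^{'}])$ playing exactly the role of the paper's state $A_3$ on the chain $A_0\rightarrow A_1\rightarrow A_2\rightarrow A_3$. The only substantive difference is in the $\mathcal{C}^{T_1}_{ex}\Rightarrow\mathcal{C}^{T_2}$ direction, where the paper simply takes the extended clique's maximality as given (``as $(\mathcal{Z},[t_x,t_y])$ is maximal, it will not be contained\ldots''), whereas you attempt to derive it and correctly identify the vertex-set-maximality step as the thinnest point --- though your resolution there ultimately leans on the same order-independence note the paper uses, so it does not constitute a genuinely different argument.
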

\begin{proof}
First, we prove the forward direction of the lemma statement, i.e., if $(\mathcal{Z}, [t_x, t_{y}^{'}]) \in \mathcal{C}^{T_1}_{ex}$, then $(\mathcal{Z}, [t_x, t_y]) \in \mathcal{C}^{T_2}$. It leads to make the conclusion that the right timestamp extension is correct in Procedure \ref{proc:rightts} and possible while building $\mathcal{C}^{T_2 \setminus T_1}$. Now, given the links that appeared from the last $\Delta$ duration of any clique, the correctness of Procedure \ref{proc:rightts} is self-explanatory. Also, the possibility of extension for the cliques that are coming from $\mathcal{C}^{T_1}_{ex}$ is ensured in Lemma \ref{lemma:t1-delta}. Hence, $(\mathcal{Z}, [t_x, t_y])$ will be in $\mathcal{C}^{T2 \setminus T_1}$.  Now, as $(\mathcal{Z}, [t_x, t_y])$ is maximal, it will not be contained in any other cliques either by vertex sets or in the time interval. So, it has to be present in $ \mathcal{C}^{T_2}$.
\par Next, we prove the reverse direction , i.e., if $(\mathcal{Z}, [t_x, t_y]) \in \mathcal{C}^{T_2}$, then $(\mathcal{Z}, [t_x, t_{y}^{'}]) \in \mathcal{C}^{T_1}_{ex}$. Assume that, we have the entire link set till time stamp $T_2$. Then by Lemma 7 of \cite{banerjee2019enumeration}, for $(\mathcal{Z}, [t_x, t_y])$ there must exist a clique $A_1= (\mathcal{Z}, [t_x^{'}, t_x^{'} + \Delta])$, where $t_{x}^{'} \in [t_x, t_y]$. In Lemma \ref{lemma:3} and \ref{Lemma:5}, we have already shown that the candidate vertex set associated with each $(\Delta, \gamma)$-clique is correct and complete. So, for the duration $[t_{x}^{'}, t_{x}^{'} + \Delta]$ there can be $\binom{| \mathcal{Z} |}{2}$ possible initialized cliques, and each of them will have the candidate set containing all the vertices of $\mathcal{Z}$. Let's say, one such clique is $A_0= (\{u,v\}, [t_x^{'}, t_x^{'} + \Delta])$. By executing only Procedure \ref{proc:nodeadd}, the initialized clique $A_0$ will generate $A_1= (\mathcal{Z}, [t_x^{'}, t_x^{'} + \Delta])$. As highlighted in the description of Algorithm \ref{Algo:1} that the order of execution of the enumeration process is irrelevant, we make the following arguments.
By Procedure \ref{proc:leftts}, $A_1$ will be emerged to $A_2 = (\mathcal{Z}, [t_x, t_x^{'} + \Delta])$ by extending it's left time stamp. Next, $A_2$ will be extended to $(\mathcal{Z}, [t_x, t_y])$ by Procedure \ref{proc:rightts}. This extension will verify the $(\Delta, \gamma)$-clique property in every last $\Delta$ duration and cross through a clique state $A_3 = (\mathcal{Z}, [t_x, t_y^{'}])$. 
Now, while processing $\mathcal{C}^{T_1}$, $A_2$ will not be able to reach $(\mathcal{Z}, [t_x, t_y])$. However, $A_3$ must be reached from $A_2$, while building $\mathcal{C}^{T_1}$. Hence, it will be included in $\mathcal{C}^{T_1}_{ex}$ as $t_y \geq T_1$. Thus, the extension of $A_3$ will be performed in building $\mathcal{C}^{T_2 \setminus T_1}$. So, $\mathcal{C}^{T_1}_{ex}$ will contain the clique $(\mathcal{Z}, [t_x, t_{y}^{'}])$. This concludes the proof of the lemma statement.
\end{proof}

\begin{mylemma} \label{lemma:6}
It is sufficient to expand only the right timestamp of a clique in $\mathcal{C}^{T_1}_{ex}$ to build the final maximal clique set $\mathcal{C}^{T_2}$.
\end{mylemma}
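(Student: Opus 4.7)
The plan is to take an arbitrary maximal clique $(\mathcal{Z}, [t_x, t_y]) \in \mathcal{C}^{T_2}$ with $t_x < T_1 \leq t_y$, locate its seed $(\mathcal{Z}, [t_x, t_y']) \in \mathcal{C}^{T_1}_{ex}$ with $t_y' \leq t_y$ via Lemma \ref{lemma:4}, and then argue that neither \texttt{Expand\_Vertex\_Set()} nor \texttt{Extend\_Left\_TS()} applied to this seed can yield a clique that is not already produced either by (i) \texttt{Extend\_Right\_TS()} on the same seed, or (ii) the fresh initialization on $\mathcal{E}^{[T_1-\Delta, T_2]}$ performed at Line 10 of Algorithm \ref{Algo:1}. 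Combined, these two claims imply that running only the right-timestamp extension on cliques drawn from $\mathcal{C}^{T_1}_{ex}$ is enough.

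First I would dispose of left extensions. Since $(\mathcal{Z}, [t_x, t_y'])$ was emitted during the build of $\mathcal{C}^{T_1}$, \texttt{Extend\_Left\_TS()} was already executed on it with the link set $\mathcal{E}^{T_1}$ and could not push $t_x$ further left; equivalently, no $t < t_x$ makes $(\mathcal{Z}, [t, t_y'])$ a $(\Delta, \gamma)$-clique under $\mathcal{E}^{T_1}$. Because every link in $\mathcal{E}^{(T_1, T_2]}$ carries a timestamp strictly greater than $T_1 \geq t_x$, no newly arrived link can create any edge with timestamp below $t_x$. Hence the left boundary of a $\mathcal{C}^{T_1}_{ex}$ clique is frozen and \texttt{Extend\_Left\_TS()} on such a clique is vacuous.

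Next I would dispose of vertex expansions on $\mathcal{C}^{T_1}_{ex}$ cliques. Suppose, for contradiction, some maximal $(\mathcal{Z} \cup \{u\}, [t_x, t_y]) \in \mathcal{C}^{T_2}$ with $u \notin \mathcal{Z}$ could only be produced by adding $u$ to the seed coming from $\mathcal{C}^{T_1}_{ex}$. By Lemma \ref{lemma:t1-delta}, the window $[T_1-\Delta, T_2]$ is wide enough that at least one pair clique $(\{v,w\}, [t_x', t_x'+\Delta])$ with $v, w \in \mathcal{Z} \cup \{u\}$ and $[t_x', t_x'+\Delta] \subseteq [t_x, t_y]$ is initialized at Line 10 of Algorithm \ref{Algo:1}. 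By Lemmas \ref{lemma:3} and \ref{Lemma:5}, $u$ lies in the candidate set $N_G(\{v,w\})$ carried forward with that pair. Invoking the order-independence Note preceding Procedure \ref{proc:nodeadd} and repeating the chain-of-expansions argument from the proof of Lemma \ref{Lemma:5}, this pair clique alone reaches $(\mathcal{Z} \cup \{u\}, [t_x, t_y])$ inside $\mathcal{C}^{T_2 \setminus T_1}$, contradicting the assumption that the vertex expansion was essential on the $\mathcal{C}^{T_1}_{ex}$ side.

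The main obstacle I expect is the vertex-expansion case in which the required seed pair $\{v,w\}$ has its $\gamma$ required occurrences split across the $T_1$ boundary, so that $\{v,w\}$ is not a pair clique under $\mathcal{E}^{T_1}$ alone and therefore must be recreated by the fresh initialization. One must verify that Lemma \ref{lemma:t1-delta} still guarantees the existence of an initializing pair entirely inside $[T_1-\Delta, T_2]$ whose candidate set captures the missing vertex $u$; the careful rerun of the chain-of-expansions argument inside that window, together with Lemma \ref{lemma:4} for the final right-stamp extension, then closes the argument and yields the stated sufficiency.
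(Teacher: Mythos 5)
Your core reduction is exactly the paper's proof: the paper takes a maximal clique $(\mathcal{Z},[t_x,t_y])\in\mathcal{C}^{T_2}$ with $t_x<T_1\le t_y$, invokes Lemma \ref{lemma:4} to obtain a seed $(\mathcal{Z},[t_x,t_y'])\in\mathcal{C}^{T_1}_{ex}$ with the \emph{same} vertex set and the \emph{same} left timestamp, and concludes that right extension of that seed alone reaches the maximal clique. Your disposal of \texttt{Extend\_Left\_TS()} is consistent with this (the seed's $t_x$ is already final under $\mathcal{E}^{T_1}$, and no link in $\mathcal{E}^{(T_1,T_2]}$ can move it left), although the paper does not even need that argument once Lemma \ref{lemma:4} fixes $t_x$.

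The genuine gap is in your vertex-expansion case. You claim that a pair clique $(\{v,w\},[t_x',t_x'+\Delta])$ initialized at Line 10 from $\mathcal{E}^{[T_1-\Delta,T_2]}$ ``alone reaches $(\mathcal{Z}\cup\{u\},[t_x,t_y])$ inside $\mathcal{C}^{T_2\setminus T_1}$''. This fails whenever $t_x<T_1-\Delta$: \texttt{Extend\_Left\_TS()} only has access to links from $T_1-\Delta$ onward, so a chain of expansions started inside the window can never certify the $(\Delta,\gamma)$ property to the left of $T_1-\Delta$; it gets stuck at a falsely-maximal left endpoint $\ge T_1-\Delta$. The paper states this failure mode explicitly in the discussion following Procedure \ref{proc:leftts}, and it is precisely why $\mathcal{C}^{T_1}_{ex}$ and Procedure \ref{proc:removal} exist at all, so you cannot appeal to the fresh initialization to recover such cliques. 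The correct and much shorter disposal is to apply Lemma \ref{lemma:4} to the larger maximal clique $(\mathcal{Z}\cup\{u\},[t_x,t_y])$ itself: it has its own seed in $\mathcal{C}^{T_1}_{ex}$ that already contains $u$ and is already anchored at $t_x$, so vertex addition is never required on any $\mathcal{C}^{T_1}_{ex}$ clique. With that replacement your argument collapses to the paper's one-step reduction to Lemma \ref{lemma:4}.
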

\begin{proof}
To prove this statement, we need to show that Algorithm \ref{Algo:1} does not miss any maximal cliques in $\mathcal{C}^{T_2}$, whose left timestamp is less than $T_1$ and right time stamp is greater than or equal to $T_1$. We prove this statement by contradiction. Let's assume, $(\mathcal{Z}, [t_x, t_y]) \notin \mathcal{C}^{T_{2}}$ is a maximal $(\Delta, \gamma)$-clique, and $t_x < T_1$ and $t_y \geq T_1$.
\par Now, $\mathcal{C}^{T_1}_{ex}$ contains all the cliques whose $t_y \geq T_1$ and formed while running Algorithm \ref{Algo:1} with the link set $\mathcal{E}^{T_1}$. Hence, we need to show that there must be a clique $(\mathcal{Z}, [t_x, t_y^{'}]) \in \mathcal{C}^{T_1}_{ex}$ with $t_{y}^{'} \leq t_y$, whose right time stamp extension is sufficient to get the maximal clique $(\mathcal{Z}, [t_x, t_y])$. According to Lemma \ref{lemma:4}, such $(\mathcal{Z}, [t_x, t_y^{'}])$ exists in $\mathcal{C}^{T_1}_{ex}$. Alternatively, $(\mathcal{Z}, [t_x, t_y])$ belongs to $\mathcal{C}^{T_2}$. Hence, the statement is proved.
\end{proof}

\begin{procedure}[htb]
\caption{Extending a clique towards right in the time horizon()} \label{proc:rightts}
     \SetKwFunction{FRight}{Extend\_Right\_TS}
	  \SetKwProg{Pn}{Function}{:}{\KwRet}
      \Pn{\FRight{$(\mathcal{Z}, [t_x,t_y])$, $\mathcal{E}^{ [T_{1}-\Delta, T_{2}]}$}}{
           $\text{flag} = TRUE$\;
		$t_{yr} = min_{u,v \in \mathcal{Z}} \ t_{yuv}$  \tcp*{last $\gamma^{th}$ occurrence time of an edge $(u,v)$ within $[t_x, t_y+1]$}
		\If{$t_{yr}+\Delta > t_y$}{
			$\text{flag} = FALSE$\;
			\If{$(\mathcal{Z}, [t_x, t_{yr}+\Delta]) \notin \mathcal{C}_{im}$}{
				add $(\mathcal{Z}, [t_x, t_{yr}+\Delta])$ to $\mathcal{C}^{I}$ and $\mathcal{C}_{im}$\;
			}
		}
		\KwRet \text{flag}\;
      }
\end{procedure}
\paragraph{Procedure \texttt{Expand\_Right\_TS()}: } This procedure takes a $(\Delta, \gamma)$\mbox{-}clique from $\mathcal{C}^{I}$ and the edge list for the current update cycle, as inputs, and returns a Boolean flag indicating whether the input clique is maximal or not. For a clique $(\mathcal{Z}, [t_x, t_y])$, the trivial way of extending $t_y$ is as follows: For every pair of vertices $u,v \in \mathcal{Z}$, the last $\gamma$-th occurrence time stamp within $[t_x, t_y + 1]$ is $t_{yuv}$. If the resultant after adding $\Delta$ to the earliest of all $t_{yuv}$ is more than $t_y$, then $(\mathcal{Z}, [t_x, t_y])$ is not maximal and the new clique $(\mathcal{Z}, [t_x, t_{yr} + \Delta])$ is formed. Now, for the cliques which are initialized in the current update cycle or whose time interval $[t_x, t_y]$ falls within $[T_1 - \Delta, T_2]$, it is easy get $t_{yr}$ with the current set of links. Similar to Lemma \ref{lemma:t1-delta}, it can be concluded that for the cliques initialized by $\mathcal{C}^{T_1}_{ex}$, it is possible to get the $t_{yr}$ values with the links $\mathcal{E}^{ [T_{1}-\Delta, T_{2}]}$. This results in the following theorem.
\begin{mytheorem}
 Procedure \ref{proc:rightts} is able to extend any $(\Delta, \gamma)$-clique using the set of links in the current update cycle only, irrespective of it's initialization. 
\end{mytheorem}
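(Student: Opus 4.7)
The plan is to split the argument into two cases according to whether the input clique $(\mathcal{Z},[t_x,t_y])$ was initialized in the current update cycle or was inherited from $\mathcal{C}^{T_1}_{ex}$, and then to show that in both cases the computation of $t_{yr}=\min_{u,v\in\mathcal{Z}} t_{yuv}$ in Line 3 of Procedure \ref{proc:rightts} depends only on links that lie in $\mathcal{E}^{[T_1-\Delta,T_2]}$.

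The first case is essentially immediate. If $(\mathcal{Z},[t_x,t_y])$ was initialized during the current cycle, then by construction $[t_x,t_y]\subseteq[T_1-\Delta,T_2]$, so every link needed to identify the last $\gamma$-th occurrence time $t_{yuv}$ of each static edge $(u,v)\in\binom{\mathcal{Z}}{2}$ inside $[t_x,t_y+1]$ already lies in $\mathcal{E}^{[T_1-\Delta,T_2]}$, and Procedure \ref{proc:rightts} can be evaluated verbatim.

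The heart of the proof is the second case, where $(\mathcal{Z},[t_x,t_y])\in\mathcal{C}^{T_1}_{ex}$ and it is possible that $t_x<T_1-\Delta$, so that many of the links witnessing the clique are not in the current link batch. The key observation I would use is that because $(\mathcal{Z},[t_x,t_y])$ is a valid $(\Delta,\gamma)$-clique, the definition forces at least $\gamma$ occurrences of every pair $(u,v)\in\binom{\mathcal{Z}}{2}$ inside the last window $[t_y-\Delta,t_y]$; equivalently, the $\gamma$-th-from-latest occurrence $t_{yuv}$ of $(u,v)$ in $[t_x,t_y+1]$ must satisfy $t_{yuv}\ge t_y-\Delta$. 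Consequently the minimum $t_{yr}$ also satisfies $t_{yr}\ge t_y-\Delta\ge T_1-\Delta$, because $t_y\ge T_1$ for any clique in $\mathcal{C}^{T_1}_{ex}$. Thus, to identify $t_{yuv}$ for each pair it suffices to examine the links in the window $[t_y-\Delta,t_y+1]\subseteq[T_1-\Delta,T_2]$, and the test $t_{yr}+\Delta>t_y$ together with the construction of the new clique $(\mathcal{Z},[t_x,t_{yr}+\Delta])$ depends only on links in $\mathcal{E}^{[T_1-\Delta,T_2]}$. This is precisely the property established in Lemma \ref{lemma:t1-delta}, which I would cite for rigor.

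Combining the two cases, the procedure's right-extension test and update are correctly executable without any access to $\mathcal{E}^{T_1}\setminus\mathcal{E}^{[T_1-\Delta,T_2]}$, regardless of how the clique was initialized. The main obstacle in a fully formal write-up will be to justify cleanly why the minimum over all pairs of the last $\gamma$-th occurrence time stays bounded below by $t_y-\Delta$; this follows from the $(\Delta,\gamma)$-clique definition applied to the trailing window $[t_y-\Delta,t_y]$, but care is needed when the interval length $t_y-t_x$ is itself smaller than $\Delta$, in which case the bound $t_{yr}\ge t_x\ge T_1-\Delta$ still holds for cliques inherited from $\mathcal{C}^{T_1}_{ex}$ whose earliest qualifying configuration must already sit inside the retained window.
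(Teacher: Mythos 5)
Your proof is correct and follows essentially the same route as the paper's: the same two-case split (cliques inherited from $\mathcal{C}^{T_1}_{ex}$ versus cliques initialized from $\mathcal{E}^{[T_1-\Delta,T_2]}$), with the inherited case resting on Lemma \ref{lemma:t1-delta}. Your explicit lower bound $t_{yuv}\ge t_y-\Delta\ge T_1-\Delta$, derived from applying the $(\Delta,\gamma)$-clique definition to the trailing $\Delta$-window, is a more self-contained justification of the step the paper simply delegates to Lemma \ref{lemma:t1-delta} and to Lemmas 6--7 of the cited prior work.
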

\begin{proof}
In the theorem statement, `irrespective of it's initialization' refers two cases: i) cliques coming from $\mathcal{C}^{T_1}_{ex}$, and ii) cliques initialized with link set $\mathcal{E}^{[T_1- \Delta, T_2]}$. For Case 1, it is already shown in lemma \ref{lemma:t1-delta}, that the links from $T_1 - \Delta$ is sufficient to extend the right  timestamp of the cliques in $\mathcal{C}^{T_1}_{ex}$. Now, for the later case, Lemma 6 and 7 of \cite{banerjee2019enumeration} together prove the correctness of the initialized cliques and it's extension in right time stamp. Hence, the statement of the theorem is proved. 
\end{proof}
\begin{procedure}[htb]
\caption{Extending a clique towards left in the time horizon()}  \label{proc:leftts}
     \SetKwFunction{FM}{Extend\_Left\_TS}
	  \SetKwProg{Pn}{Function}{:}{\KwRet}
      \Pn{\FM{$(\mathcal{Z}, [t_x,t_y])$, $\mathcal{E}^{ [T_{1}-\Delta, T_{2}]}$}}{
           $\text{flag} = TRUE$\;
           $t_{xl} = max_{u,v \in \mathcal{Z}} \ t_{xuv}$  \tcp*{first $\gamma^{th}$ occurrence time of an edge $(u,v)$ within $ [t_x-1, t_y]$}
		\If{$t_{xl} - \Delta < t_x$}{
			$\text{flag} = FALSE$\;
			\If{$(\mathcal{Z}, [t_{xl}-\Delta, t_y]) \notin \mathcal{C}_{im}$}{
				add $(\mathcal{Z}, [t_{xl}-\Delta, t_y])$ to $\mathcal{C}^I$ and $\mathcal{C}_{im}$\;
			}
		}
		\KwRet \text{flag}\;
      }
\end{procedure}
\paragraph{Procedure \texttt{Expand\_Left\_TS()}:} Similar to Procedure \ref{proc:rightts}, this procedure takes a $(\Delta, \gamma)$\mbox{-}Clique from $\mathcal{C}^{I}$ and the edge list for the current update cycle, as inputs, and returns a Boolean flag indicating whether the input clique is maximal or not. For a clique $(\mathcal{Z}, [t_x, t_y])$, the trivial way of extending $t_x$ is as follows: For every pair of vertices $u,v \in \mathcal{Z}$, let the first $\gamma$-th occurrence time stamp within $[t_x -1, t_y]$ is $t_{xuv}$. If subtracting $\Delta$ from the latest of all $t_{xuv}$ is less than $t_x$, then $(\mathcal{Z}, [t_x, t_y])$ is not maximal and the new clique $(\mathcal{Z}, [t_{xl} - \Delta, t_{y}])$ is formed. Assume that, the entire set of links till $T_2$ ($\mathcal{E}^{T_2}$) is being processed to get $\mathcal{C}^{T_2}$. Now, it is easy to observe that for a maximal clique $(\mathcal{Z}, [t_x, t_y])$ in $\mathcal{C}^{T_2}$, if $[t_x, t_y]$ lies entirely within $ [T_1 - \Delta, T_2]$, Procedure \ref{proc:leftts} can reach to the maximal clique by extending the start time stamp towards left. However, if $T_1 - \Delta$ lies within the $[t_x, t_y]$ and the first gamma edges of every pair of vertices in $\mathcal{Z}$ does not belong to $\mathcal{E}^{ [T_{1}-\Delta, T_{2}]}$, Procedure \ref{proc:leftts} will fail to get the maximal cliques from $\mathcal{C}^I$, initialized using $\mathcal{E}^{ [T_{1}-\Delta, T_{2}]}$. This scenario returns a non-maximal cliques by identifying them as falsely maximal. However, we do not miss the maximal ones as the algorithm uses the cliques from $\mathcal{C}^{T_1}_{ex}$, which has it's $t_x$ fixed and the $t_y$ is extended correctly by Procedure \ref{proc:rightts}. The following lemma highlights this claim.

\begin{procedure}[h]
\caption{Removal of the sub cliques()}  \label{proc:removal}
\SetKwFunction{FMain}{EOC\_Remove\_Sub\_Cliques}
	  \SetKwProg{Pn}{Function}{:}{\KwRet}
      \Pn{\FMain{$T_1$}}{
      \If{$T_1 == -1 $}{
      \KwRet \;
       }
      $\mathcal{C}_{check} \longleftarrow \emptyset$  \;
      $R_{dic} = dict()$ \;
      \For{all $(\mathcal{Z}, [t_x, t_y]) \in \mathcal{C}^{T_2 \setminus T_1}$}{
      \If{$\mathcal{Z} \notin R_{dic}.keys()$}{
      $\text{The new `key' } \mathcal{Z} \text{ is added to } R_{dic}$\;
      $R_{dic}[\mathcal{Z}] \longleftarrow \emptyset$\;
      }
      add $[t_x, t_y]$ in $R_{dic}[\mathcal{Z}]$\;
      \If{$t_x \leq T_1 $}{
        add $(\mathcal{Z}, [t_x, t_y])$ in $\mathcal{C}_{check}$\;
        }
      }
      
      \For{$(\mathcal{Z}, [t_x, t_y]) \in \mathcal{C}_{check}$}{
      \If{$\mathcal{Z} \in R_{dic}.keys()$}{
      \If{$|\{ [t_{{x}^{'}}, t_{{y}^{'}}]:  [t_{{x}^{'}}, t_{{y}^{'}}] \in R_{dic}[\mathcal{Z}] \text{ and }  [t_x, t_y] \subset  [t_{{x}^{'}}, t_{{y}^{'}}]  \}| \geq 1 $}{
         remove $(\mathcal{Z}, [t_x, t_y])$ from $\mathcal{C}^{T_{2} \setminus T_{1}}$\;
         \textbf{continue}\;
         }
        }
        $temp = \{(\mathcal{Z}^{'}, [t_{{x}^{'}}, t_{{y}^{'}}]) : (\mathcal{Z}^{'}, [t_{{x}^{'}}, t_{{y}^{'}}]) \in \mathcal{C}^{T_2 \setminus T_1} \text{ and } \mathcal{Z} \subset \mathcal{Z}^{'} \}$\;
        \For{$ \text{All }(\mathcal{Z}^{'}, [t_{{x}^{'}}, t_{{y}^{'}}]) \in temp$}{
          \If{$[t_x, t_y] \subseteq  [t_{{x}^{'}}, t_{{y}^{'}}]$}{
          remove $(\mathcal{Z}, [t_x, t_y])$ from $\mathcal{C}^{T_{2} \setminus T_{1}}$\;
          \textbf{break}\;}
        }
      }
      \KwRet \;
     }
\end{procedure}
\paragraph{Procedure \texttt{EOC\_Remove\_Sub\_Cliques()}:} In the description of Procedure \ref{proc:leftts}, we have realized that some of the non-maximal cliques are declared as falsely maximal of $\mathcal{C}^{T_2 \setminus T_1}$ before invoking Procedure \ref{proc:removal}. However, it removes such cliques and generate the final maximal clique set. It identifies the cliques with $t_x \leq T_1$, which are possibly non-maximal candidates (verified in Lemma \ref{lemma:maximality}). The cliques which are formed with the links $\mathcal{E}^{ [T_{1}-\Delta, T_{2}]}$, unable to verify it's extendibility towards left. This results in having $t_x$ greater than its actual value from it's maximal counterpart. So, for an identified clique $(\mathcal{Z}, [t_x, t_y])$, there are two possibilities; i) with same $\mathcal{Z}$ their exist a $[t_{x^{'}}, t_{y^{'}}] $ which contains $[t_x, t_y]$, or ii) there exist a clique $(\mathcal{Z}^{'}, [t_{x^{'}}, t_{y^{'}}])$ such that $\mathcal{Z} \subset \mathcal{Z}^{'}$ and $[t_x, t_y] \subseteq [t_{x^{'}}, t_{y^{'}}]$, resulting $(\mathcal{Z}, [t_x, t_y])$ as non-maximal. Hence, $(\mathcal{Z}, [t_x, t_y])$ is removed from $\mathcal{C}^{T_{2} \setminus T_{1}}$.
\par Procedure \ref{proc:removal} takes $T_1$ as input, the time stamp till which the maximal cliques are enumerated previously. When the algorithm runs for the first time (or with first batch of links), then the input of the procedure becomes $-1$ and returns from Line 3. For  other cases, the removal of sub-cliques are done from Line 4 to 25. We keep all the cliques to be checked for maximality condition in $\mathcal{C}_{check}$ and set it to $\emptyset$ in Line 4. We prepare a dictionary $R_{dic}$ to hold all the cliques produced in $\mathcal{C}^{T_2 \setminus T_1}$. The keys of $R_{dic}$ is the vertex set of a clique and the value is the list containing all the time intervals when the corresponding vertex set has formed the $(\Delta, \gamma)$-cliques. For all the cliques in $\mathcal{C}^{T_2 \setminus T_1}$, it is added in $R_{dic}$ from line 7 to 10. Also, if left time stamp of the clique is less than or equal to $T_1$, it is added in $\mathcal{C}_{check}$ in line 11.
Now, for each clique $(\mathcal{Z},[t_x, t_y])$ in $\mathcal{C}_{check}$, two of the following things are checked. (i) $[t_x, t_y]$ is proper subset of any of the values in $R_{dic}[\mathcal{Z}]$, (ii) $\mathcal{Z}$ is subset of any other keys in $R_{dic}$ with $[t_x, t_y]$ is subset of that cliques as well. The clique can be removed if any of the mentioned case becomes true. The first case is checked within Line 14 to 17. If $\mathcal{Z}$ appears in $R_{dic}.keys()$, it computes the set having the intervals which are proper superset of $[t_x, t_y]$ in Line 15. Now, if the cardinality of that set is greater than 1, i.e., there exist at least one clique which contains the $(\mathcal{Z}, [t_x, t_y])$ temporally within it. Hence, $(\mathcal{Z}, [t_x, t_y])$ is not maximal and removed from $\mathcal{C}^{T_2 \setminus T_1}$ in line 16, and continues to pick next clique from $\mathcal{C}_{check}$. If Case (i) is false, it tries to check for case (ii). In line 18, it constructs a set $temp$ with the cliques from $\mathcal{C}^{T_2 \setminus T_1}$, whose vertex set is proper superset of $\mathcal{Z}$. Now, for each cliques in $temp$, it is checked if the time interval is also super set of $[t_x, t_y]$ in Line 20. If it is true, $(\mathcal{Z},[t_x, t_y])$ is not maximal, as it is contained into another clique both in terms of set of vertices and temporally. Hence, it is removed from $\mathcal{C}^{T_2 \setminus T_1}$, and the for loop at line 19 breaks. Finally, all the non-maximal cliques are removed from $\mathcal{C}^{T_2 \setminus T_1}$. Now, we state and prove few lemmas to show the correctness of Procedure \ref{proc:removal}.
\begin{mylemma}\label{lemma:maximality}
It is sufficient to identify the cliques having $t_x \leq T_1$, as possible candidates for checking the maximality condition.
\end{mylemma}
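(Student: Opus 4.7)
The plan is to argue that any clique $(\mathcal{Z}, [t_x, t_y]) \in \mathcal{C}^{T_2 \setminus T_1}$ with $t_x > T_1$ is already correctly identified as maximal by Algorithm \ref{Algo:1}, so Procedure \ref{proc:removal} may safely restrict its attention to cliques satisfying $t_x \leq T_1$. Equivalently, no falsely maximal clique can have $t_x > T_1$.

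First I would exploit the fact that when $t_x > T_1$, the interval $[t_x, t_y]$ lies strictly in $(T_1, T_2]$, so every link needed to decide the maximality of $(\mathcal{Z}, [t_x, t_y])$ falls inside the visible window $\mathcal{E}^{[T_1 - \Delta, T_2]}$. Using this I would check each in-place test of Algorithm \ref{Algo:1} in turn. For \texttt{Extend\_Left\_TS()}, $t_{xl}$ is computed from complete data on $[t_x - 1, t_y]$, and the non-extension certificate $t_{xl} - \Delta \geq t_x$ genuinely holds in the full temporal graph, so no smaller $t_x' < t_x$ yields a $(\Delta, \gamma)$-clique. For \texttt{Extend\_Right\_TS()}, a parallel argument, together with Lemma \ref{lemma:t1-delta}, shows that $r\_flag = \texttt{TRUE}$ truly certifies non-extension on the right. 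For \texttt{Expand\_Vertex\_Set()}, Lemmas \ref{lemma:3} and \ref{Lemma:5} guarantee that the attached candidate set $N_G(\mathcal{Z})$ is complete and correct, so $v\_flag = \texttt{TRUE}$ truly certifies that no vertex can be added. Hence, whenever a clique with $t_x > T_1$ survives into $\mathcal{C}^{T_2 \setminus T_1}$, it is maximal by direct extension. The remaining possibility, that $(\mathcal{Z}, [t_x, t_y])$ sits inside some other $(\mathcal{Z}', [t_{x'}, t_{y'}]) \in \mathcal{C}^{T_2}$ with $\mathcal{Z} \subseteq \mathcal{Z}'$ and $[t_x, t_y] \subseteq [t_{x'}, t_{y'}]$, reduces to one of the three directions already ruled out (vertex addition if $\mathcal{Z} \subsetneq \mathcal{Z}'$, left or right extension otherwise) and so contradicts the flags being \texttt{TRUE}.

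The main obstacle I expect will be cleanly drawing the contrast with the $t_x \leq T_1$ case, where \texttt{Extend\_Left\_TS()} can terminate prematurely at $t_x = T_1 - \Delta$ because links earlier than $T_1 - \Delta$ are invisible in the current cycle, while the right-extension of a companion clique from $\mathcal{C}^{T_1}_{ex}$ may produce the same vertex set with a strictly smaller true $t_x$. Pinpointing that this failure mode is excluded exactly when $t_x > T_1$, and packaging the verification of the three in-place flags without redoing work already done in Lemmas \ref{lemma:3}, \ref{Lemma:5} and \ref{lemma:t1-delta}, is where the real content of the argument lies; the remainder reduces to routine bookkeeping about which interval endpoints are witnessed by which side of the cycle boundary.
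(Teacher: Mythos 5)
Your proposal is correct, but it is organized quite differently from the paper's proof. The paper argues \emph{directly} about where the falsely maximal cliques can live: it splits $\mathcal{C}^{T_2 \setminus T_1}$ by provenance (cliques grown from $\mathcal{C}^{T_1}_{ex}$ versus cliques initialized from $\mathcal{E}^{[T_1-\Delta,T_2]}$), and for each source exhibits a worst-case placement of the $\gamma$ links to bound the largest possible $t_x$ of a clique whose maximality is in doubt --- obtaining $t_x \leq T_1$ for the first source (every $\mathcal{C}^{T_1}_{ex}$-derived clique must be re-checked since only its right endpoint was extended) and $t_x \leq T_1-\Delta$ for the second (where only a blocked left extension at the window boundary can cause false maximality), then takes the union. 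You instead prove the contrapositive: any clique with $t_x > T_1$ that survives with all three flags \texttt{TRUE} is genuinely maximal, because each of \texttt{Extend\_Left\_TS()}, \texttt{Extend\_Right\_TS()} and \texttt{Expand\_Vertex\_Set()} then operates on complete link data inside $[T_1-\Delta, T_2]$, and containment in a larger clique reduces (by the hereditary property of $(\Delta,\gamma)$-cliques) to one of those three extensions. Your route is cleaner and less example-driven, and it makes explicit the one fact the paper leaves implicit --- that the only failure mode is a left extension blocked at the invisible boundary; what the paper's version buys in exchange is the finer classification of where the non-maximal cliques actually sit (at $t_x \leq T_1-\Delta$ for newly initialized ones), which is reused informally in Lemma 9. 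One small point worth making explicit in your write-up: a clique with $t_x > T_1$ cannot originate from $\mathcal{C}^{T_1}_{ex}$ (those are built from $\mathcal{E}^{T_1}$ and necessarily have $t_x \leq t_y - \Delta \leq T_1$), which is needed because the first \texttt{while} loop adds $\mathcal{C}^{T_1}_{ex}$-derived cliques to $\mathcal{C}^{T_2\setminus T_1}$ on the strength of $r\_flag$ alone.
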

\begin{proof}
Before executing Procedure \ref{proc:removal}, a non-maximal clique $(\mathcal{Z}, [t_x, t_y])$ in $\mathcal{C}^{T_2 \setminus T_1}$ can come from either of the two following sources: i) $\mathcal{C}^{T_1}_{ex}$, and ii) the cliques initialized in Line 10 of Algorithm \ref{Algo:1}.
\par In the first case, we consider all the cliques having $t_y \geq T_1$ in $\mathcal{C}^{T_1}_{ex}$. While preparing $\mathcal{C}^{T_2 \setminus T_1}$, Algorithm \ref{Algo:1} only extends the right time stamp for the cliques in $\mathcal{C}^{T_1}_{ex}$. So, it is required to verify their maximality. Now, we need to show, what the maximum value of $t_x$ is possible for the cliques in $\mathcal{C}^{T_1}_{ex}$. Consider one scenario, when for a particular vertex pair $\{u, v\}$, all the $\gamma$ links appeared consecutively at each time stamp within $[T_1 -\gamma + 1, T_1]$. By the initialization algorithm of \cite{banerjee2019enumeration}, it will generate two cliques $(\{u,v\}, [T_1 -\gamma + 1, T_1 -\gamma + 1 + \Delta])$ and $(\{u,v\}, [ T_1 -\Delta, T_1])$. For $\gamma \in [1, \Delta+1]$, we observe the condition $T_1 - \Delta \leq T_1 -\gamma + 1 \leq T_1$ as true.
\par For the second case, the cliques will be extended in all the possible three ways of expansions. As it has the link set $\mathcal{E}^{[T_1 - \Delta, T_2]}$, expanding the left time stamp is not correct with respect to all the links till $T_2$. Now, the minimum possible value for the $t_x$, can be less than $T_1 - \Delta$. Consider one scenario, for a particular vertex pair $\{u, v\}$, all the $\gamma$ links only appeared consecutively at each time stamp within $[T_1 -\gamma + 1, T_1]$. Then, the maximal clique generated out of it, will be $A = (\{u, v\}, [T_1 - \Delta, T_1 -\gamma + 1 + \Delta ])$. Now, if $(u,v)$ appears on $T_1 - \Delta - 1$, then $A$ is not maximal and it's left time stamp should be expanded. So, the maximum value possible for such non-maximal cliques will be $T_1 - \Delta$. Hence, it is required to check all the cliques with $t_x \leq T_1 - \Delta$.
\par Combining both the cases, it is sufficient to identify the cliques having $t_x \leq T_1$, as possible candidates for checking the maximality condition.
\end{proof}
\begin{mylemma} \label{lemma:extendend}
$\mathcal{C}^{T_{1}}_{ex}$ contains all the cliques for updating $\mathcal{C}^{T_1}$ to $\mathcal{C}^{T_2}$.
\end{mylemma}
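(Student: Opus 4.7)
The plan is to decompose the target set $\mathcal{C}^{T_{2}}$ according to how each maximal clique sits relative to the cut time $T_{1}$, and then argue that every clique whose update actually depends on information from the previous batch admits a representative in $\mathcal{C}^{T_{1}}_{ex}$. Concretely, for any $(\mathcal{Z},[t_x,t_y]) \in \mathcal{C}^{T_{2}}$ exactly one of the following holds: (a) $t_y < T_{1}$, in which case the clique already lives in $\mathcal{C}^{T_{1}} \setminus \mathcal{C}^{T_{1}}_{ex}$ and requires no update; (b) $t_x \geq T_{1}-\Delta$, in which case the clique is produced from scratch while processing $\mathcal{E}^{[T_{1}-\Delta,T_{2}]}$ in Line 10 of Algorithm \ref{Algo:1} (whose correctness follows from Lemma \ref{lemma:3} and Lemma \ref{Lemma:5}); or (c) $t_x < T_{1}$ and $t_y \geq T_{1}$, the genuinely ``boundary-crossing'' case. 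Only case (c) needs $\mathcal{C}^{T_{1}}_{ex}$, so the lemma reduces to showing that $\mathcal{C}^{T_{1}}_{ex}$ captures every such clique.

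For case (c), I would invoke Lemma \ref{lemma:4} directly: it asserts that a maximal $(\mathcal{Z},[t_x,t_y]) \in \mathcal{C}^{T_{2}}$ with $t_x < T_{1}$ and $t_y \geq T_{1}$ exists if and only if some $(\mathcal{Z},[t_x,t_y^{'}]) \in \mathcal{C}^{T_{1}}_{ex}$ with $t_y^{'} \leq t_y$ exists. The ``only if'' direction is precisely what is needed here. To see that this representative is actually usable, I would then appeal to Lemma \ref{lemma:6}, which guarantees that extending only the right timestamp of the clique in $\mathcal{C}^{T_{1}}_{ex}$ suffices to recover $(\mathcal{Z},[t_x,t_y])$ — matching exactly the behaviour of Algorithm \ref{Algo:1}, where the first \texttt{while}-loop (Lines 3--9) processes $\mathcal{C}^{T_{1}}_{ex}$ purely through \texttt{Extend\_Right\_TS()}.

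To close the loop, I would verify that the inclusion criterion in Algorithm \ref{Algo:1} (Lines 8--9, and again Lines 22--23) is exactly ``$t_y \geq T_{1}$,'' so that nothing crossing the boundary is silently dropped when $\mathcal{C}^{T_{1}}_{ex}$ is assembled during the previous update cycle. Combining this with the completeness already established for $\mathcal{C}^{T_{1}}$ (by induction on update cycles, using the correctness argument of \cite{banerjee2019enumeration} for the base case) gives the claim.

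The main obstacle I expect is not a deep combinatorial one but a bookkeeping hazard: one must rule out edge cases where the maximal clique in $\mathcal{C}^{T_{2}}$ has $t_y$ strictly larger than any $t_y^{'}$ ever reached while processing $\mathcal{E}^{T_{1}}$, yet the shorter precursor $(\mathcal{Z},[t_x,t_y^{'}])$ was itself dominated by a longer $(\mathcal{Z}^{'},[t_x,t_y^{''}])$ and hence never emitted into $\mathcal{C}^{T_{1}}_{ex}$. This is where Lemma \ref{lemma:4}'s proof (via the order-independence note) is critical, so I would make the forward argument lean explicitly on that note to justify that the precursor clique is produced before any dominance check would discard it.
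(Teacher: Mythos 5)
Your proof is correct in substance but follows a genuinely different route from the paper's. You decompose $\mathcal{C}^{T_{2}}$ by position relative to $T_{1}$ and reduce the only nontrivial case --- the boundary-crossing cliques with $t_x < T_{1} \leq t_y$ --- to the reverse direction of Lemma \ref{lemma:4}, with Lemma \ref{lemma:6} guaranteeing that right-extension alone recovers the maximal clique; under the paper's own dependency graph this is legitimate and non-circular (the paper's proof likewise cites Lemma \ref{lemma:6}, which cites Lemma \ref{lemma:4}), and it makes explicit that Lemma \ref{lemma:extendend} is essentially a corollary of Lemma \ref{lemma:4}. The paper instead argues directly that the inclusion threshold $t_y \geq T_{1}$ is exactly right: part (i) shows that lowering it to $T_{1}-k$ would force redundant processing of $\mathcal{E}^{[T_{1}-\Delta-k,\,T_{2}]}$ (via Lemma \ref{lemma:t1-delta}), and part (ii) exhibits an extremal configuration --- $\gamma$ consecutive links of a pair $\{u,v\}$ confined to $[T_{1}-\Delta,\,T_{1}-\Delta+\gamma-1]$ followed by a link at $T_{1}+1$ --- showing that the precursor $(\{u,v\},[T_{1}-2\Delta+\gamma-1,\,T_{1}])$ with $t_y$ exactly $T_{1}$ must be retained, so the threshold cannot be raised. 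Your version buys a cleaner logical skeleton; the paper's buys a justification of \emph{minimality} of the threshold (design rationale beyond the bare ``contains all'' claim) and a concrete witness. Two small repairs to yours: the three cases are not mutually exclusive (when $T_{1}-\Delta \leq t_x < T_{1} \leq t_y$ both (b) and (c) apply), so say ``at least one'' rather than ``exactly one''; and since Lemma \ref{lemma:4}'s own reverse-direction proof already quietly invokes the $t_y \geq T_{1}$ inclusion rule, you should acknowledge that most of the substantive work is being delegated there rather than re-proved.
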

\begin{proof}
While building the maximal cliques till time stamp $T_1$, all the intermediate cliques having $t_y \geq T_1$ are kept in $\mathcal{C}^{T_{1}}_{ex}$. To prove the lemma statement, we need to show that $t_y \geq T_1$ is sufficient condition, to have all the cliques for extending in right (Refer Lemma \ref{lemma:6}). Here, we divide the proof in two parts: i) $t_y \ge T_1 - k$, ii) $t_y \ge T_1 + k$, where $k \in \mathbb{Z}^{+}$. 
For part (i), it needs to have the links from $t_y - \Delta$ to $T_2$, to extend it's right time stamp by Procedure \ref{proc:rightts}. According to Lemma \ref{lemma:t1-delta}, it will have to process the link set $\mathcal{E}^{[T_1 - \Delta -k, T_2]}$, which will incur the redundant processing.
For part (ii), the maximum possible value for $t_y$ for the cliques generated with link set $\mathcal{E}^{T_1}$ can be greater than $T_1$. Now consider a scenario, where for a vertex pair $\{u, v\}$, the consecutive $\gamma$ links have only appeared within $[T1-\Delta , T_1 - \Delta + \gamma - 1]$. Hence, two cliques will be generated with the initialization of \cite{banerjee2019enumeration}, as $A_0 = (\{u, v\}, [T_1 - \Delta, T_1])$ and $A_1 = (\{u, v\}, [T_1 - 2\Delta + \gamma - 1, T_1 - \Delta + \gamma -1])$. Now if there exist a link $(u,v,t)$ with $t = T_1 + 1$, then the maximal clique becomes $A=(\{u, v\}, [T_1 - 2\Delta + \gamma - 1, T_1 + 1])$. To get the maximal clique $A$, $\mathcal{C}^{T_{1}}_{ex}$ must contain the clique $A_3=(\{u, v\}, [T_1 - 2\Delta + \gamma - 1, T_1])$. Hence, it is required to have the smallest possible value for the extension of $t_y$ is $T_1$.
From part (i) and (ii), it is proved that the $t_y \geq T_1$ is sufficient condition, to have all the cliques for extending in right. So, $\mathcal{C}^{T_{1}}_{ex}$ contains all the cliques for updating $\mathcal{C}^{T_1}$ to $\mathcal{C}^{T_2}$.
\end{proof}
\begin{mylemma} \label{lemma:9}
Without execution of Procedure \ref{proc:removal}, $\mathcal{C}^{T_{2} \setminus T_{1}} \cup (\mathcal{C}^{T_{1}} \setminus \mathcal{C}^{T_{1}}_{ex})$ contains all the maximal cliques along with some non-maximal ones.
\end{mylemma}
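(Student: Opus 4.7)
The plan is to establish the two components of the claim separately: (i) the union $\mathcal{C}^{T_{2} \setminus T_{1}} \cup (\mathcal{C}^{T_{1}} \setminus \mathcal{C}^{T_{1}}_{ex})$ contains every maximal clique of $\mathcal{C}^{T_{2}}$, and (ii) the union may also contain some cliques that are not maximal in $\mathcal{C}^{T_{2}}$. For (i), I would partition an arbitrary maximal clique $(\mathcal{Z}, [t_x, t_y]) \in \mathcal{C}^{T_{2}}$ into three cases according to how $[t_x, t_y]$ sits relative to the cut point $T_{1}$.

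Case~A: $t_y < T_{1}$. The clique lies entirely within the segment processed for $\mathcal{C}^{T_{1}}$. Since it is maximal in $\mathcal{C}^{T_{2}}$, it was also maximal in $\mathcal{C}^{T_{1}}$, and because $t_y < T_{1}$ the $t_y \geq T_{1}$ criterion for inclusion in $\mathcal{C}^{T_{1}}_{ex}$ fails, so it sits in $\mathcal{C}^{T_{1}} \setminus \mathcal{C}^{T_{1}}_{ex}$. Case~B: $t_y \geq T_{1}$ and $t_x \geq T_{1} - \Delta$. The clique is contained in the window $[T_{1}-\Delta, T_{2}]$, so by Lemmas~\ref{lemma:3} and~\ref{Lemma:5} together with Lemma~7 of \cite{banerjee2019enumeration}, the initialization on $\mathcal{E}^{[T_{1}-\Delta, T_{2}]}$ followed by the three expansion procedures reaches and deposits it in $\mathcal{C}^{T_{2}\setminus T_{1}}$. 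Case~C: $t_y \geq T_{1}$ and $t_x < T_{1}-\Delta$. By Lemmas~\ref{lemma:4} and~\ref{lemma:6}, there exists a clique $(\mathcal{Z}, [t_x, t_y']) \in \mathcal{C}^{T_{1}}_{ex}$ with $t_y' \leq t_y$ whose right-extension via \texttt{Extend\_Right\_TS()} recovers $(\mathcal{Z}, [t_x, t_y])$ and deposits it into $\mathcal{C}^{T_{2}\setminus T_{1}}$.

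For (ii), I would point to the failure mode of \texttt{Extend\_Left\_TS()} already flagged in its description: a clique initialized from $\mathcal{E}^{[T_{1}-\Delta, T_{2}]}$ whose true leftmost time stamp lies strictly below $T_{1}-\Delta$ cannot be stretched past $T_{1}-\Delta$, because the links needed to witness its first $\gamma$ occurrences are outside the current window. The algorithm consequently declares the truncated version maximal and inserts it into $\mathcal{C}^{T_{2}\setminus T_{1}}$; simultaneously, the genuine maximal clique arrives through the $\mathcal{C}^{T_{1}}_{ex}$ route of Case~C, so both variants coexist in the union while only the longer one is truly maximal. A small instance along the lines of Figure~\ref{fig:lemma1} suffices as an explicit witness.

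The main obstacle is Case~C: one has to argue convincingly that every maximal clique straddling the cut, no matter how far left its $t_x$ lies, really does arise from some intermediate clique retained in $\mathcal{C}^{T_{1}}_{ex}$ via right-extension alone. Lemmas~\ref{lemma:4} and~\ref{lemma:6} were set up precisely to shoulder this burden, so the present lemma reduces to invoking them once the case partition is in place; the remainder is bookkeeping.
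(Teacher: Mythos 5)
Your proposal is correct and follows essentially the same route as the paper's proof: a case analysis on where $[t_x,t_y]$ sits relative to the cut at $T_1$, delegating the completeness claim to Lemmas~\ref{lemma:3}, \ref{Lemma:5}, \ref{lemma:4} and the right-extension lemmas, and witnessing the non-maximal cliques via the \texttt{Extend\_Left\_TS()} failure mode identified in Lemma~\ref{lemma:maximality}. The only cosmetic difference is the case boundary --- the paper sends every straddling clique through the $\mathcal{C}^{T_1}_{ex}$ route (its Case~3), whereas you route those with $t_x \ge T_1-\Delta$ through the fresh initialization on $\mathcal{E}^{[T_1-\Delta,T_2]}$ --- but both routes are justified by the same supporting lemmas.
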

\begin{proof}
We denote $\mathcal{C}^{T_{2} \setminus T_{1}} \cup (\mathcal{C}^{T_{1}} \setminus \mathcal{C}^{T_{1}}_{ex})$ as $\hat{\mathcal{C}}^{T_{2}}$. Assume, a maximal clique $(\mathcal{Z}, [t_x, t_y])$ is not present in $\hat{\mathcal{C}}^{T_{2}}$. We prove the lemma statement with contradiction that $(\mathcal{Z}, [t_x, t_y])$ should be in $\hat{\mathcal{C}}^{T_{2}}$. To simplify the proof, we classify $(\mathcal{Z}, [t_x, t_y])$ to be in any of the three of the following cases.
\begin{itemize}
    \item \textbf{Case 1:} $[t_x, t_y] \subseteq [T_0, T_1)$,
    \item \textbf{Case 2:} $[t_x, t_y] \subseteq (T_1, T_2]$, 
    \item \textbf{Case 3:} $\{t_x < T_1 \And t_y \geq T_1\}$, or $\{t_x \leq T_1 \And t_y > T_1 \}$.
\end{itemize}
For both Case 1 and 2, it can be proved from Lemma 6 of \cite{banerjee2019enumeration} that the initialization is correct to get the maximal cliques with link set $\mathcal{E}^{T_1}$ and $\mathcal{E}^{[T_1 - \Delta, T_2]}$, respectively. Now, Lemma \ref{lemma:3} and \ref{Lemma:5}, together verifies that the candidate set associated with the initialized cliques is correct and complete. It shows that from any initialized clique $(\{u, v\}, [t_x^{'}, t_x^{'}+\Delta])$ where $u, v  \in \mathcal{Z}$ and $ [t_x^{'}, t_x^{'}+\Delta] \subseteq [t_x, t_y]$, $(\mathcal{Z}, [t_x^{'}, t_x^{'}+\Delta])$ will be generated. Now, Lemma 7 of \cite{banerjee2019enumeration} show that $(\mathcal{Z}, [t_x, t_y])$ will be formed from $(\mathcal{Z}, [t_x^{'}, t_x^{'}+\Delta])$. Hence, $(\mathcal{Z}, [t_x, t_y])$ will be in $\mathcal{C}^{T_1}$ (also not in $\mathcal{C}^{T_1}_{ex}$), and $\mathcal{C}^{T_{2} \setminus T_{1}}$ (also not in $\mathcal{C}^{T_{2} \setminus T_{1}}_{ex}$) for Case 1 and Case 2, respectively. Hence, $(\mathcal{Z}, [t_x, t_y])$ should be in $\hat{\mathcal{C}}^{T_{2}}$.
\par For Case 3, $(\mathcal{Z}, [t_x, t_y])$ has to be initialized from $\mathcal{C}^{T_1}_{ex}$. According to Lemma \ref{lemma:4} and \ref{lemma:extendend}, for the maximal clique $(\mathcal{Z}, [t_x, t_y])$ there will be a clique $(\mathcal{Z}, [t_x, t_y^{'}])$ with $t_y^{'} \leq t_y$ in $\mathcal{C}^{T_1}_{ex}$. Hence, $(\mathcal{Z}, [t_x, t_y])$ will only be in $\mathcal{C}^{T_2 \setminus T_1}$ and not in $(\mathcal{C}^{T_{1}} \setminus \mathcal{C}^{T_{1}}_{ex})$. So, $(\mathcal{Z}, [t_x, t_y])$ is present in $\hat{\mathcal{C}}^{T_{2}}$.
\par To prove that $\hat{\mathcal{C}}^{T_{2}}$ will contain non-maximal cliques as well, it is enough to show one such non-maximal clique exists in $\hat{\mathcal{C}}^{T_{2}}$. The proof of Lemma \ref{lemma:maximality} shows that such non-maximal clique will exist. It concludes the proof of the lemma statement.
\end{proof}
\begin{mylemma} \label{lemma:10}
Procedure \ref{proc:removal} correctly removes all the non-maximal cliques, while building $\mathcal{C}^{T_2}$ from $\mathcal{C}^{T_1}$.
\end{mylemma}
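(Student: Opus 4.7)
The plan is to show two complementary facts: (i) every non-maximal clique that survives into $\hat{\mathcal{C}}^{T_2} := \mathcal{C}^{T_{2}\setminus T_{1}} \cup (\mathcal{C}^{T_{1}} \setminus \mathcal{C}^{T_{1}}_{ex})$ is caught and deleted by Procedure~\ref{proc:removal}, and (ii) no genuinely maximal clique is ever deleted. Together with Lemma~\ref{lemma:9} (which tells us $\hat{\mathcal{C}}^{T_2}$ already contains every maximal clique), these two facts imply that the output of Procedure~\ref{proc:removal} is exactly $\mathcal{C}^{T_2}$.

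For (i), first I would invoke Lemma~\ref{lemma:maximality} to restrict attention to cliques with $t_x \leq T_1$: every non-maximal clique in $\hat{\mathcal{C}}^{T_2}$ must satisfy this bound, and the procedure places precisely such cliques into $\mathcal{C}_{check}$ in Lines~11--12. Next, I would classify any non-maximal $(\mathcal{Z},[t_x,t_y]) \in \mathcal{C}_{check}$ by the reason it fails Definition~\ref{Def:maximal}: either (a) some $(\mathcal{Z},[t_{x'},t_{y'}]) \in \hat{\mathcal{C}}^{T_2}$ exists with $[t_x,t_y] \subsetneq [t_{x'},t_{y'}]$, or (b) some $(\mathcal{Z}',[t_{x'},t_{y'}]) \in \hat{\mathcal{C}}^{T_2}$ exists with $\mathcal{Z} \subsetneq \mathcal{Z}'$ and $[t_x,t_y] \subseteq [t_{x'},t_{y'}]$. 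The key observation is that the dictionary $R_{dic}$ populated in Lines~6--10 stores, for every vertex set appearing anywhere in $\mathcal{C}^{T_2 \setminus T_1}$, the full list of time intervals it attains; I would need to argue that the witnessing containing clique in case (a) or (b) must itself lie in $\mathcal{C}^{T_2 \setminus T_1}$ (and hence in $R_{dic}$ or discoverable via the set $temp$ in Line~18), because anything in $\mathcal{C}^{T_1} \setminus \mathcal{C}^{T_1}_{ex}$ is already guaranteed maximal with respect to the links up to time $T_1$ and so cannot strictly contain a clique whose right endpoint reaches into $(T_1,T_2]$. Case (a) is then handled by the test in Lines~14--17, and case (b) by the loop in Lines~18--23.

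For (ii), I would argue by contradiction: suppose Procedure~\ref{proc:removal} removes a maximal $(\mathcal{Z},[t_x,t_y])$. Then either the condition in Line~15 fired, giving a clique $(\mathcal{Z},[t_{x'},t_{y'}]) \in R_{dic}$ with $[t_x,t_y] \subsetneq [t_{x'},t_{y'}]$, or the test in Line~20 fired, giving $(\mathcal{Z}',[t_{x'},t_{y'}]) \in \mathcal{C}^{T_2 \setminus T_1}$ with $\mathcal{Z} \subsetneq \mathcal{Z}'$ and $[t_x,t_y] \subseteq [t_{x'},t_{y'}]$. In either case the witness itself is a $(\Delta,\gamma)$-clique (since it was produced by Algorithm~\ref{Algo:1}, whose correctness for the link set $\mathcal{E}^{[T_1-\Delta,T_2]}$ follows from Lemmas~\ref{lemma:3} and~\ref{Lemma:5}), and it properly extends $(\mathcal{Z},[t_x,t_y])$ either temporally or by vertex, directly violating the maximality assumption. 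This contradiction shows that only non-maximal cliques are removed.

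The main obstacle I anticipate is the coverage claim in step (i): it must be checked that whenever a non-maximal $(\mathcal{Z},[t_x,t_y])$ sits in $\hat{\mathcal{C}}^{T_2}$ because some strictly larger clique $C^\star$ also sits in $\hat{\mathcal{C}}^{T_2}$, the witness $C^\star$ actually ends up in $\mathcal{C}^{T_2 \setminus T_1}$ (and therefore in $R_{dic}$ or $temp$), rather than only in $\mathcal{C}^{T_1} \setminus \mathcal{C}^{T_1}_{ex}$. This reduces to showing that if $C^\star$ strictly contains a clique whose interval overlaps $(T_1-\Delta,T_2]$, then $C^\star$ itself has right endpoint $\geq T_1$ and hence was fed forward through $\mathcal{C}^{T_1}_{ex}$ and re-processed in building $\mathcal{C}^{T_2 \setminus T_1}$ via Lemma~\ref{lemma:extendend}; once this bookkeeping step is justified, the rest of the argument is routine case analysis on the two failure modes of Definition~\ref{Def:maximal}.
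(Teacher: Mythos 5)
Your proposal is correct and follows the same basic route as the paper's proof of Lemma~\ref{lemma:10}, but it is considerably more careful. The paper's own argument is only a few lines: it cites Lemma~\ref{lemma:maximality} to restrict attention to cliques with $t_x \leq T_1$, notes that a non-maximal clique must be contained temporally or by vertex set in a maximal one, and then asserts that ``from the description of Procedure~\ref{proc:removal} it is easy to observe'' that such cliques are removed. You add two things the paper leaves implicit. First, the soundness direction --- that no genuinely maximal clique is ever deleted --- which the paper does not argue inside this lemma at all, even though it is needed for Theorem~\ref{Th:2}. Second, and more substantively, you isolate the real gap: Procedure~\ref{proc:removal} searches for a containing witness only inside $\mathcal{C}^{T_2\setminus T_1}$ (via $R_{dic}$ and $temp$), so one must show that the witness does not live solely in $\mathcal{C}^{T_1}\setminus\mathcal{C}^{T_1}_{ex}$, which the procedure never inspects. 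Your resolution of this is essentially right, though your stated sufficient condition (``interval overlaps $(T_1-\Delta,T_2]$'') is weaker than what you actually need; the cleaner observation is that every clique placed in $\mathcal{C}_{check}$ has $t_y \geq T_1$ --- it either descends from $\mathcal{C}^{T_1}_{ex}$, or from an initialization over $\mathcal{E}^{[T_1-\Delta,T_2]}$ whose intervals satisfy $t_x \geq T_1-\Delta$ and $t_y \geq t_x+\Delta \geq T_1$ --- hence any maximal clique containing it also has right endpoint at least $T_1$ and is therefore produced in $\mathcal{C}^{T_2\setminus T_1}$ by Lemmas~\ref{lemma:4} and~\ref{lemma:extendend}. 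With that small patch your argument is a complete proof, and a more rigorous one than the paper's.
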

\begin{proof}
The cliques having it's left timestamp less than or equal to $T_1$, are the only possible candidates, which can become a non-maximal clique in $\mathcal{C}^{T_2}$. It follows from Lemma \ref{lemma:maximality}. According to Lemma \ref{lemma:extendend}, it is evident that $\mathcal{C}^{T_2 \setminus T_1}$ will contain the maximal cliques along with some non-maximal cliques. By definition, the non-maximal $(\Delta, \gamma)$-cliques will contain temporally or by vertex sets into the maximal ones. Now, from the description of Procedure \ref{proc:removal}, it is easy to observe that it can remove a clique if this is contained into at least one another clique by temporally or by vertex sets. Hence, Procedure \ref{proc:removal} will remove all the non-maximal cliques while building $\mathcal{C}^{T_2}$ from $\mathcal{C}^{T_1}$.
\end{proof}

From Lemma \ref{lemma:10}, it is clear that $\mathcal{C}^{T_2 \setminus T_1}$ will not contain any non-maximal cliques of $\mathcal{C}^{T_2}$. Now, $(\mathcal{C}^{T_1} \setminus \mathcal{C}^{T_1}_{ex})$ will contain the maximal cliques only, as the minus operation will remove the cliques which are declared as falsely maximal due to the non-availability of the entire link set till $T_2$. Hence, $\mathcal{C}^{T_{2} \setminus T_{1}} \cup (\mathcal{C}^{T_{1}} \setminus \mathcal{C}^{T_{1}}_{ex})$ will contain the maximal cliques only till time stamp $T_2$. From these, we state the following theorems.
\begin{mytheorem}\label{Th:2}
 All the cliques in $\mathcal{C}^{T_{2}}$ are maximal cliques till time stamp $T_2$.
\end{mytheorem}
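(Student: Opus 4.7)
The plan is to derive Theorem~\ref{Th:2} directly from the earlier lemmas by verifying two directions for the set $\mathcal{C}^{T_2} = \mathcal{C}^{T_2 \setminus T_1} \cup (\mathcal{C}^{T_1} \setminus \mathcal{C}^{T_1}_{ex})$ constructed in Line~25 of Algorithm~\ref{Algo:1}: first, that every element of this union is maximal with respect to $\mathcal{E}^{T_2}$, and second, that no maximal clique of $\mathcal{E}^{T_2}$ is missing. I would assemble the argument by handling the two constituent sets of the union separately and then pairing the result with a completeness claim.

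For the constituent $\mathcal{C}^{T_1} \setminus \mathcal{C}^{T_1}_{ex}$, I would use the inductive invariant that $\mathcal{C}^{T_1}$ is exactly the maximal clique set till time $T_1$, and observe that the only members of $\mathcal{C}^{T_1}$ which could cease to be maximal upon the arrival of $\mathcal{E}^{(T_1, T_2]}$ are those whose right timestamp $t_b$ satisfies $t_b \geq T_1$, because a clique entirely contained in $[T_0, T_1)$ cannot interact with any new link. By Lemma~\ref{lemma:extendend}, every such potentially extendable clique is already captured in $\mathcal{C}^{T_1}_{ex}$, so the set difference removes exactly the at-risk cliques and retains only those whose maximality persists till $T_2$. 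For the constituent $\mathcal{C}^{T_2 \setminus T_1}$, I would invoke Lemma~\ref{lemma:10}, which certifies that Procedure~\ref{proc:removal} has already pruned every falsely-maximal clique introduced by the limited left-extension capacity of Procedure~\ref{proc:leftts} on the link set $\mathcal{E}^{[T_1 - \Delta, T_2]}$.

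Completeness follows from Lemma~\ref{lemma:9}: an arbitrary maximal clique $(\mathcal{Z}, [t_x, t_y])$ of $\mathcal{C}^{T_2}$ falls into one of the three temporal cases enumerated there, and each case places it in either $\mathcal{C}^{T_1} \setminus \mathcal{C}^{T_1}_{ex}$ or $\mathcal{C}^{T_2 \setminus T_1}$. The key subtlety, rather than any calculation, is to ensure that the removal operations do not inadvertently delete a genuinely maximal clique; I would address this by leaning on the biconditional of Lemma~\ref{lemma:4} to confirm that $\mathcal{C}^{T_1}_{ex}$ contains precisely the cliques that need re-examination, and on Lemma~\ref{lemma:maximality} to confirm that Procedure~\ref{proc:removal} restricts its candidates to cliques with $t_x \leq T_1$, so that any maximal clique lying strictly within $[T_0, T_1)$ or $(T_1, T_2]$ is untouched. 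Combining the two ``nothing-too-much-removed'' statements with the ``everything-is-included'' statement of Lemma~\ref{lemma:9} closes the proof.
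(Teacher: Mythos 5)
Your proposal is correct and follows essentially the same route as the paper: the paper likewise derives Theorem~\ref{Th:2} from the decomposition $\mathcal{C}^{T_2} = \mathcal{C}^{T_2\setminus T_1} \cup (\mathcal{C}^{T_1}\setminus\mathcal{C}^{T_1}_{ex})$, citing Lemma~\ref{lemma:10} for the first constituent and arguing that the set-minus strips out exactly the cliques whose maximality could be jeopardised by the links arriving after $T_1$ (your added justification via $t_b \geq T_1$ and Lemma~\ref{lemma:extendend} just makes the paper's one-line assertion explicit). Your completeness paragraph built on Lemma~\ref{lemma:9} is really the content of Theorem~\ref{Th:3} rather than of this statement, but including it does no harm.
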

\begin{mytheorem}\label{Th:3}
  $\mathcal{C}^{T_{2}}$ contains all the maximal cliques till time stamp $T_2$.
\end{mytheorem}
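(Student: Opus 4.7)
My plan is to derive Theorem \ref{Th:3} as a near-immediate corollary of Lemma \ref{lemma:9} and Lemma \ref{lemma:10}, using the definition of $\mathcal{C}^{T_2}$ from Line 25 of Algorithm \ref{Algo:1}:
\[
\mathcal{C}^{T_2} = \mathcal{C}^{T_2 \setminus T_1} \cup (\mathcal{C}^{T_1} \setminus \mathcal{C}^{T_1}_{ex}),
\]
where $\mathcal{C}^{T_2 \setminus T_1}$ has already been cleaned by \texttt{EOC\_Remove\_Sub\_Cliques()}. I would fix an arbitrary maximal $(\Delta,\gamma)$-clique $(\mathcal{Z}, [t_x, t_y])$ of the temporal network restricted to links up to time $T_2$, and show it must lie in this union.

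For containment before cleanup, I would invoke Lemma \ref{lemma:9} directly: it places $(\mathcal{Z}, [t_x, t_y])$ in $\mathcal{C}^{T_2 \setminus T_1} \cup (\mathcal{C}^{T_1} \setminus \mathcal{C}^{T_1}_{ex})$ via its three-case analysis on how $[t_x, t_y]$ lies relative to $T_1$. That lemma itself draws on the full supporting stack---Lemma \ref{lemma:3} and Lemma \ref{Lemma:5} for completeness and correctness of the candidate vertex set, Lemma \ref{lemma:4} and Lemma \ref{lemma:extendend} for the sufficiency of $\mathcal{C}^{T_1}_{ex}$ in the straddling case, and Lemma \ref{lemma:t1-delta} for the adequacy of the link window $[T_1-\Delta, T_2]$---so no new machinery needs to be built at this step.

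To finish, I need to argue that the cleanup in Procedure \ref{proc:removal} does not discard $(\mathcal{Z}, [t_x, t_y])$. By inspecting the removal rules, a clique is deleted only when it is properly contained in some other clique of $\mathcal{C}^{T_2 \setminus T_1}$---either temporally (same vertex set, strictly larger interval) or in the vertex set (proper superset with an enclosing interval). Maximality in the sense of Definition \ref{Def:maximal} rules out both possibilities, so $(\mathcal{Z}, [t_x, t_y])$ survives the cleanup; note also that $(\mathcal{C}^{T_1} \setminus \mathcal{C}^{T_1}_{ex})$ is untouched by Procedure \ref{proc:removal}. Combined with Lemma \ref{lemma:10} (guaranteeing no non-maximal clique persists), this yields $(\mathcal{Z}, [t_x, t_y]) \in \mathcal{C}^{T_2}$.

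The main obstacle I foresee is not algebraic but bookkeeping: one must verify that in the straddling case ($t_x \leq T_1 \leq t_y$), the precursor clique $(\mathcal{Z}, [t_x, t_y'])$ obtained from $\mathcal{C}^{T_1}_{ex}$ via Lemma \ref{lemma:4} is actually extended by \texttt{Extend\_Right\_TS()} all the way to $t_y$ during the current update cycle, rather than halting at some intermediate $t_y' < t_y$. Once this extension chain is traced through, Theorem \ref{Th:3} follows immediately.
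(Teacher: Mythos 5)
Your proposal is correct and follows essentially the same route as the paper: the paper also obtains Theorem \ref{Th:3} as a direct consequence of Lemma \ref{lemma:9} (the union $\mathcal{C}^{T_{2} \setminus T_{1}} \cup (\mathcal{C}^{T_{1}} \setminus \mathcal{C}^{T_{1}}_{ex})$ already contains every maximal clique before cleanup) together with Lemma \ref{lemma:10} and the observation that Procedure \ref{proc:removal} only deletes cliques properly contained in others, which maximality forbids. Your explicit check that the removal rules cannot discard a maximal clique, and your flag about tracing the right-extension chain in the straddling case, merely make explicit details the paper leaves implicit in its supporting lemmas.
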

Together Theorem \ref{Th:2}, and \ref{Th:3} complete the correctness of the proposed methodology. Now, we analyze the time and space complexity of the proposed methodology. Initially, we start with Algorithm \ref{Algo:1}. Let, $m_{1}$, $m_{2}$, and $m_{3}$ denote the number of links till time stamp $T_{1}-\Delta$, from $T_{1}-\Delta$ to $T_{1}$, and from $T_{1}$ to $T_{2}$, respectively. Now, the number of links from time stamp $T_{1}-\Delta$ to $T_{2}$ are $(m_2+m_3)$. As per the analysis shown in  \cite{banerjee2019enumeration}, in the worst case, the size of $\mathcal{C}_{ex}^{T_{1}}$ can be $\mathcal{O}(2^{n}(m_1+m_2 - \gamma+1))$. Each clique can be of size $\mathcal{O}(n)$. Hence, copying the cliques from $\mathcal{C}_{ex}^{T_{1}}$ to $\mathcal{C}^{I}$ requires $\mathcal{O}(2^{n}n(m_1+m_2 - \gamma+1))$ time. All the statements in Line $2$ are intialization statement, where the first two requires $\mathcal{O}(1)$ time, whereas the third one requires $\mathcal{O}(2^{n}n(m_1+m_2 - \gamma+1))$ time. In Line $4$, removing a clique requires $\mathcal{O}(n)$ time. Complexity analysis of the \texttt{Extend\_Right\_TS} Procedure (i.e., Procedure 3) has been analyzed little later. Checking the condition of the \texttt{if} statement in Line $6$ and $8$ requires $\mathcal{O}(1)$ time, and putting the cliques into $\mathcal{C}^{T_2 \setminus T_1}$ and  $\mathcal{C}_{ex}^{T_{1}}$ requires $\mathcal{O}(n)$ time. Now, it is important to understand how many times the \texttt{while} loop of Line $3$ will execute. In the worst case, all the cliques of $\mathcal{C}_{ex}^{T_{1}}$ may extend, and hence the  \texttt{while} loop will execute for $\mathcal{O}(|\mathcal{C}_{ex}^{T_{1}}|.(T_2 - T_1))=\mathcal{O}(2^{n}(m_1+m_2 - \gamma+1)(T_2 - T_1))$. As per Lemma $1$ of \cite{banerjee2019enumeration} complexity of executing Line $10$ requires $\mathcal{O}(\gamma(m_2+m_3))$ time. In the worst case the size of $\mathcal{C}^{I}$ could be $\mathcal{O}(\gamma(m_2+m_3))$. Hence, copying the cliques of $\mathcal{C}^{I}$ into $\mathcal{C}_{im}$ requires $\mathcal{O}(\gamma(m_2+m_3))$ time.
\par Inside the next \texttt{while} loop, removing a clique in Line $13$ from $\mathcal{C}^{I}$ requires $\mathcal{O}(n)$ time. In Line $14$ condition checking of the \texttt{if} statement requires $\mathcal{O}(1)$ time. Now, the \texttt{if} condition checks whether $t_y - t_x == \Delta$ or not. Hence in the worst case, there can be $\Delta + 1$ links between any two vertices and all the vertices are connected within that $\Delta$ duration. So, preparing the static graph at Line 15, requires $\mathcal{O}(n^2\Delta)$ time as identified by the maximum possible number of links within a $\Delta$ duration. Associating $N_{G}(\mathcal{Z})$ to $(\mathcal{Z},[t_x,t_y])$ requires $\mathcal{O}(|\mathcal{Z}|(n-|\mathcal{Z}|))$ time. In the worst case, this quantity will be $\mathcal{O}(n^{2})$. Now, as per the sequential steps of Algorithm \ref{Algo:1}, subsequently we proceed to analyze the time and space requirement for the Procedures \ref{proc:nodeadd},\ref{proc:rightts},\ref{proc:leftts}, and \ref{proc:removal}.
\par Now, we start with Procedure \ref{proc:nodeadd}. As mentioned in the analysis of Algorithm 2 of \cite{banerjee2019enumeration}, the maximum number of intermediate cliques are $\mathcal{O}(2^{n}(m_2+m_3 -\gamma +1))$. For each of these cliques, time requirement to execute Procedure \ref{proc:nodeadd} is as follows. As mentioned previously, for any intermediate clique $(\mathcal{Z},[t_x,t_y])$, $|N_{G}(\mathcal{Z})|$ can be at most $\mathcal{O}(n)$. Hence, the \texttt{for} loop in Line $3$ will execute $\mathcal{O}(n)$ times in the worst case. To check any clique $(\mathcal{Z},[t_x,t_y])$ holds the $(\Delta, \gamma)$\mbox{-}Clique property or not, we need to check for all the links for the vertices of $\mathcal{Z}$. In the worst case, this may require $\mathcal{O}(n(m_2+m_3))$ time. So, the condition checking of the \texttt{if} statement in Line $4$ requires $\mathcal{O}(n(m_2+m_3))$ time. Setting the `flag'
in Line $5$ requires $\mathcal{O}(1)$ time. Now, as the maximum number of intermediate cliques are $\mathcal{O}(2^{n}(m_2+m_3 -\gamma +1))$, hence in the condition checking of the \texttt{if} statement in Line $6$, the clique $(\mathcal{Z}\cup \{u\},[t_x,t_y])$ needs to be compared with  $\mathcal{O}(2^{n}(m_2+m_3 -\gamma +1))$ number of cliques. If the vertex ids of the clique are always stored in the sorted order then two $(\Delta, \gamma)$\mbox{-}cliques can be compared in $\mathcal{O}(n)$ time. If the cliques in the $\mathcal{C}_{im}$ are stored in the sorted of $t_x$ then the number of comparisons will be $\mathcal{O}(\log (2^{n}(m_2+m_3 -\gamma +1)))=\mathcal{O}(n+ \log(m_2+m_3 -\gamma +1))$. Hence, the total time requirement for the condition checking of the \texttt{if} statement in Line $6$ requires $\mathcal{O}(n^{2}+ n.\log(m_2+m_3 -\gamma +1))$ time. Adding the new clique in $\mathcal{C}_{im}$ and $\mathcal{C}^{I}$ requires $\mathcal{O}(n)$ time. Hence, the total time requirement for Procedure \ref{proc:nodeadd} is of $\mathcal{O}(n(n(m_2+m_3)+n^{2}+ n.\log(m_2+m_3 -\gamma +1) +n))= \mathcal{O}(n^{2}(m_2+m_3)+ n^{3})$.
\par Next, we analyze Procedure \ref{proc:rightts}. Setting the \texttt{flag} in Line $2$ requires $\mathcal{O}(1)$ time. Finding $t_{yr}$ in Line $3$ requires $\mathcal{O}(m_2+m_3)$ time. Condition checking of the \texttt{if} statement in Line $4$ requires $\mathcal{O}(1)$ time. As mentioned in the analysis of Procedure \ref{proc:nodeadd}, checking for the belongingness of any clique  $(\mathcal{Z},[t_x,t_y])$ requires $\mathcal{O}(n^{2}+ n.\log(m_2+m_3 -\gamma +1))$ time. Hence, the running time for the Procedure \ref{proc:rightts} is of $\mathcal{O}(m_2+m_3+n^{2}+ n.\log(m_2+m_3 -\gamma +1))$ time. As Procedure \ref{proc:leftts} is identical to Procedure \ref{proc:rightts}, hence time requirement for Procedure \ref{proc:leftts} will also be of $\mathcal{O}(m_2+m_3+n^{2}+ n.\log(m_2+m_3 -\gamma +1))$. 
\par Now, we analyze Procedure \ref{proc:removal}. It is easy to follow that all the statements from Line $2$ to $5$ require $\mathcal{O}(1)$ time. Number of keys in the dictionary $R_{dic}$ is of $\mathcal{O}(2^{n})$. It is easy to follow that the number of times the \texttt{for} loop in Line $6$ will execute $\mathcal{O}(2^{n}(m_2+m_3-\gamma+1))$ times. Condition checking of the \texttt{if} statement in Line $7$ requires $\mathcal{O}(2^{n}.n)$ time. Executing Line $8$ and $9$ require $\mathcal{O}(n)$, and $\mathcal{O}(1)$ time, respectively. Appending the time duration of the clique corresponding to the vertex set of the clique as `key' requires $\mathcal{O}(1)$ time. The condition checking of the \texttt{if} statement and at Line $11$ requires $\mathcal{O}(1)$ time, and adding clique $(\mathcal{Z}, [t_x,t_y])$ at Line $12$ requires $\mathcal{O}(n)$ time. Hence, time requirement from Line $2$ to $12$ is of $\mathcal{O}(2^{n}(m_2+m_3-\gamma+1)(n.2^{n}+n)) = \mathcal{O}(2^{2n}.n.(m_2+m_3-\gamma+1))$. Now, it is easy to follow that the in the worst case the size of $\mathcal{C}_{check}$ will be $\mathcal{O}(2^{n}(m_2+m_3-\gamma+1))$. So, the \texttt{for} loop in Line $13$ will run $\mathcal{O}(2^{n}(m_2+m_3-\gamma+1))$ times. Condition checking of the \texttt{if} statement in 
Line $14$ requires $\mathcal{O}(2^{n}.n)$ time. Now, let $f_{max}$ denotes the maximum number of cliques with the same vertex set. Hence, the condition checking of the \texttt{if} statement in Line $15$ requires $\mathcal{O}(f_{max})$ time. Then, removing the clique from $\mathcal{C}^{T_2 \setminus T_1}$ requires $\mathcal{O}(n)$ time. The cardinality of `temp' in Line $18$ can be given by the following equation:
\begin{equation}
    |temp|= \bigg [\binom{n}{|\mathcal{Z}|+1}+ \binom{n}{|\mathcal{Z}|+2}+ \ldots + \binom{n}{|\mathcal{Z}|+(n-|\mathcal{Z}|)} \bigg ]. \mathcal{O}(f_{max})
\end{equation}
In the worst case $|temp|$ may converges to $\mathcal{O}(2^{n}.f_{max})$. Hence, the \texttt{for} loop in Line $19$ will run for $\mathcal{O}(2^{n}.f_{max})$ times. It is easy to observe that execution of the condition checking of the \texttt{if} statement, removing the clique from $\mathcal{C}^{T_2 \setminus T_1}$, require $\mathcal{O}(1)$ and $\mathcal{O}(n)$ respectively. Hence, the total time requirement from Line $13$ to $22$ is as follows: $\mathcal{O}(2^{n}(m_2+m_3-\gamma+1)(n.2^{n}+ n.2^{n}(f_{max} + n) + n.2^{n}.f_{max}))=\mathcal{O}(2^{2n}.n.(m_2+m_3-\gamma+1).(f_{max}+n))$. Hence, the total running time of Procedure \ref{proc:removal} is $\mathcal{O}(2^{2n}.n.(m_2+m_3-\gamma+1).(f_{max}+n))$.
\begin{table}[]
    \centering
    \begin{tabular}{|c|c|}
    \hline
        Procedure & Time  \\ \hline
        \texttt{Expand\_Vertex\_Set} \ref{proc:nodeadd} & $\mathcal{O}(n^{2}(m_2+m_3)+ n^{3})$  \\ \hline
        \texttt{Extend\_Left\_TS} \ref{proc:leftts} & $\mathcal{O}(m_2+m_3+n^{2}+ n.\log(m_2+m_3 -\gamma +1))$ \\ \hline
        \texttt{Extend\_Right\_TS} \ref{proc:rightts} & $\mathcal{O}(m_2+m_3+n^{2}+ n.\log(m_2+m_3 -\gamma +1))$ \\ \hline
        \texttt{EOC\_Removal\_Sub\_Cliques} \ref{proc:removal} & $\mathcal{O}(2^{2n}.n.(m_2+m_3-\gamma+1).(f_{max}+n))$ \\ \hline
    \end{tabular}
    \caption{Computational Time Required by the Procedures}
    \label{tab:time_proc}
\end{table}
\par Now, the final task is to add up the step wise running time of Algorithm \ref{Algo:1} to get the running time of the proposed methodology. After adding up running time from Line $1$ to $11$ will be of $\mathcal{O}(2^{n}(m_1+m_2-\gamma+1)(T_2-T_1)(m_2+m_3+n^{2}+n \log(m_2+m_3-\gamma+1)))$. Also, running time of Line $12$ to $23$ of Algorithm \ref{Algo:1} requires $\mathcal{O}(2^{n}.(m_2+m_3-\gamma+1) (n^3 + n^2(m_2 + m_3)) )$. In Line $25$, we are performing set minus operation between $\mathcal{C}^{T_{1}}_{ex}$ and $\mathcal{C}^{T_{1}}$. Now, performing set minus between two sets with $k_1$ and $k_2$ elements requires $\mathcal{O}(k_1, k_2)$ number of operations. In the worst case size of both $\mathcal{C}^{T_{1}}_{ex}$ and $\mathcal{C}^{T_{1}}$ can be $\mathcal{O}(2^{n}(m_1+m_2-\gamma+1))$ and the size of a $(\Delta,\gamma)$-clique could be $\mathcal{O}(n)$, performing this set minus operation requires $\mathcal{O}(2^{2n}n^{2}(m_1+m_2-\gamma+1)^{2})$. Now, the number of cliques in $\mathcal{C}^{T_2 \setminus T_1}$ can be $\mathcal{O}(2^{n}(m_2+m_3-\gamma+1) + 2^{n}(m_1+m_2-\gamma+1)(T_2 - T_1)) $. The line $25$ can be executed by copying the elements of $\mathcal{C}^{T_1} \setminus \mathcal{C}^{T_1}_{ex}$ into $\mathcal{C}^{T_2 \setminus T_1}$ and add the reference to a new variable $\mathcal{C}^{T_2}$. Now, as the number elements in $\mathcal{C}^{T_1} \setminus \mathcal{C}^{T_1}_{ex}$ can be $\mathcal{O}(2^{n}(m_1+m_2-\gamma+1))$, copying that requires $\mathcal{O}(n 2^{n}(m_1+m_2-\gamma+1))$ time. Hence, the total time of line $25$ is $\mathcal{O}(2^{2n}n^{2}(m_1+m_2-\gamma+1)^{2} + n 2^{n}(m_1+m_2-\gamma+1) ) = \mathcal{O}(2^{2n}n^{2}(m_1+m_2-\gamma+1)^{2})$.  The time complexity of the proposed methodology is of $\mathcal{O}(2^{n}(m_1+m_2-\gamma+1)(T_2-T_1)(m_2+m_3+n^{2}+n \log(m_2+m_3-\gamma+1)) + 2^{n}.(m_2+m_3-\gamma+1) (n^3 + n^2(m_2 + m_3)) + 2^{2n}n^{2}(m_1+m_2-\gamma+1)^{2})$.
\par Now, we turn our attention to the space requirement. It can be observed from the Algorithm \ref{Algo:1} that the space requirement of the proposed methodology is basically the sum of the space requirement of the following individual structures: $\mathcal{C}^{I}$, $\mathcal{C}_{im}$, $\mathcal{C}^{T_{2} \setminus T_{1}}$, $\mathcal{C}^{T_{2} \setminus T_{1}}_{ex}$, $\mathcal{R}_{dic}$, $\mathcal{C}_{check}$, and $Temp$. Among them the last three structures has been used in the \texttt{EOC\_Removal\_Sub\_Cliques} subroutine. Table \ref{tab:space} contains individual space requirement by different structures. So, the total space requirement of the proposed methodology is of $\mathcal{O}(n2^{n}(m_2 + m_3 - \gamma + 1) + n2^{n}(m_1 + m_2 - \gamma + 1)(T_2 - T_1)) $.

\begin{table}[]
    \centering
    \begin{tabular}{|c|c|}
    \hline
      Structure   &  Space \\ \hline
        $\mathcal{C}^I$  & $\mathcal{O}(n2^{n}(m_2 + m_3 - \gamma + 1)) $ \\ \hline
        $\mathcal{C}_{im}$  & $\mathcal{O}(n2^{n}(m_2 + m_3 - \gamma + 1) + n2^{n}(m_1 + m_2 - \gamma + 1)(T_2 - T_1)) $ \\ \hline
     $\mathcal{C}^{T_2 \setminus T_1}$  & $\mathcal{O}(n2^{n}(m_2 + m_3 - \gamma + 1)) $ \\ \hline
     $\mathcal{C}^{T_2 \setminus T_1}_{ex}$  & $\mathcal{O}(n2^{n}(m_2 + m_3 - \gamma + 1)) $ \\ \hline
     $\mathcal{R}_{dic}$  & $\mathcal{O}(n2^{n}(m_2 + m_3 - \gamma + 1)) $ \\ \hline
     $\mathcal{C}_{check}$  & $\mathcal{O}(n2^{n}(m_2 + m_3 - \gamma + 1)) $ \\ \hline
     $temp$ at Line 18 in Procedure \ref{proc:removal}  & $\mathcal{O}(n2^{n}(m_2 + m_3 - \gamma + 1)) $ \\ \hline
    \end{tabular}
    \caption{Computational space required to store different structures}
    \label{tab:space}
\end{table}
\section{Experimental Evaluation}\label{Sec:EE}
Here we describe the experimental evaluation of the proposed solution approach. This section has been arranged in the following way. Subsection \ref{Sec:DD} contains the description of the datasets. Subsection \ref{SubSec:Exp_Set_Up} contains the set up for the experimentation. The goals of the experimentation have been listed out in Subsection \ref{Sec:GE}. Finally, Subsection \ref{Sec:ER} contains the experimental results with a detailed discussion. 

\subsection{Dataset Description} \label{Sec:DD}
In this study, we use the following four publicly available temporal network datasets: 
\begin{itemize}
    \item \textbf{Infectious \cite{isella2011s}:} This dataset contains the dynamic contact information collected at the time of Infectious SocioPattern event at the science gallery of Dublin city. Content of this dataset is the collection of tuples of type $(t,u,v)$ signifying a contact between $u$ and $v$ at time $t$.

    \item \textbf{Hypertext \cite{isella2011s}:} This dataset was collected during the ACM Hypertext conference 2009, where the conference attendees voluntarily weared wireless devises and their contacts (when two attendees come to a close proximity) during the conference days are captured in this dataset.

    \item \textbf{College Message \cite{panzarasa2009patterns}:} This dataset contains the interaction information among a group of students from University of California, Irvine.
\item \textbf{Autonomous Systems (AS180) \cite{leskovec2005graphs}:} The dataset contains the daily traffic flow between routers in a communication network. The data was collected from University of Oregon Route Views Project - Online data and reports. The dataset contains 733 daily instances which span an interval of 785 days from November 8 1997 to January 2 2000. As the number of links on each day is very large,  we consider the data for the first 180 days only for the experiment. 
\end{itemize}
The first two datasets are downloaded from \url{http://www.sociopatterns.org}, and the last two from \url{https://snap.stanford.edu/data/index.html}. Table \ref{Tab:Data_Stat} gives a brief description of the datasets and Figure \ref{fig:dataset_description} shows the number of links present at each time stamp for the entire lifetime of the temporal network.

\begin{table}
\centering
\caption{Basic statistics of the datasets}
\label{Tab:Data_Stat}
    \begin{tabular}{ | p{2.5 cm} | p{1.3cm} | p{1.5cm} | p{2cm} | p{2cm} |}
    \hline
    Datasets & \#Nodes($n$) & \#Links($m$) & \#Static Edges & Lifetime/Total Duration \\ \hline
    Infectious & 410 & 17298 & 2765 & 8 Hours \\ \hline
    Hypertext & 113 & 20818 & 2196 & 2.5 Days \\ \hline
    College Message & 1899 & 59835 & 20296 & 193 Days \\ \hline
    AS180 & 4002  & 2127983 & 8957 & 180 Days \\ 
    \hline
    \end{tabular}
\end{table}

\begin{figure}
    \centering
\begin{tabular}{cc}
    \includegraphics[scale=0.37]{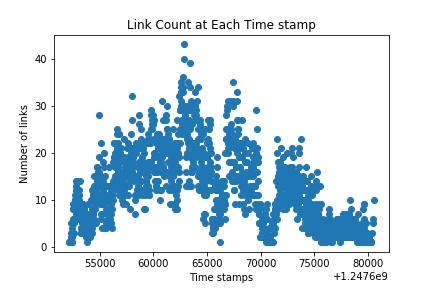}  & 
    \includegraphics[scale=0.37]{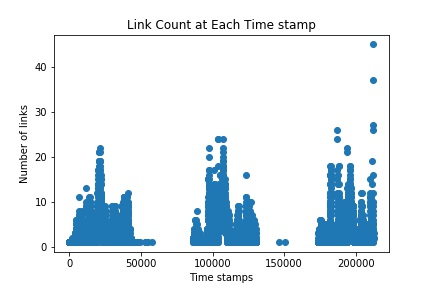} \\
    (a) Infectious & (b) Hypertext \\
     \includegraphics[scale=0.37]{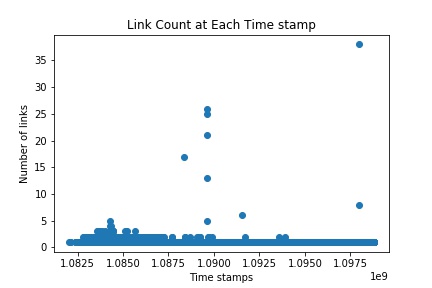} & 
     \includegraphics[scale=0.37]{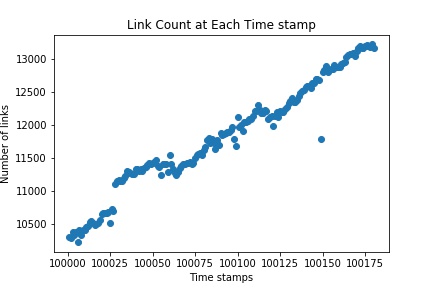} \\
     (c) College Message & (d) AS180 \\
\end{tabular}
   \caption{The link count at each time $t$ in the entire life-cycle of the datasets}
    \label{fig:dataset_description}
\end{figure}

\subsection{Experimental Setup} \label{SubSec:Exp_Set_Up}
Here, we describe the set up for our experimentation. Only the set up required in our experiments is to partition the dataset. We use the following two partitioning technique:
\begin{itemize}
    \item \textbf{Uniform Time Interval\mbox{-}Based Partitioning (EOC-UT)}: In this technique, the whole dataset is splitted into parts, where in each part the number of time stamps are equal.
    \item \textbf{Uniform Link Count\mbox{-}Based Partitioning (EOC-ULC)}: In this technique, the whole dataset is splitted into parts, where the number of time stamps in each part are equal.
\end{itemize}
Here, EOC stands for the proposed `Edge on Clique' procedure. We perform the experiments by making two partitions of the datasets, based on the mentioned partitioning schemes. We implement our proposed methodology in Python 3.6 along with NetworkX 2.2 environment, and all the experiments have been carried out in a 32-core server with 256GB RAM and 2.2 GHz processing speed.

\subsection{Goals of the Experiments} \label{Sec:GE}
The goal of the experimentation is to address the following research questions:
\begin{itemize}
    \item To understand the change in the size of different clique sets used in Algorithm \ref{Algo:1}, with respect to the change of $\Delta$ and $\gamma$ value.
    \item To understand the change in computational time and space required \emph{with} and \emph{without partition} for different partition schemes, with respect to the change of $\Delta$ and $\gamma$ value. 
    \item To understand the change in computational time and space with respect to the number of partitions.
\end{itemize}
All the experiments are done in two ways; (i) changing $\Delta$, with fixed $\gamma$ value, and (ii) varying $\gamma$, with fixed $\Delta$ value.

\subsection{Experimental Results with Discussion} \label{Sec:ER}
Here, we report the experimental results with detailed analysis, for the identified goals.
\subsubsection{Change in the Size of Different Clique Sets}
\begin{figure}
    \centering
    \begin{tabular}{cc}
       \includegraphics[scale=0.2]{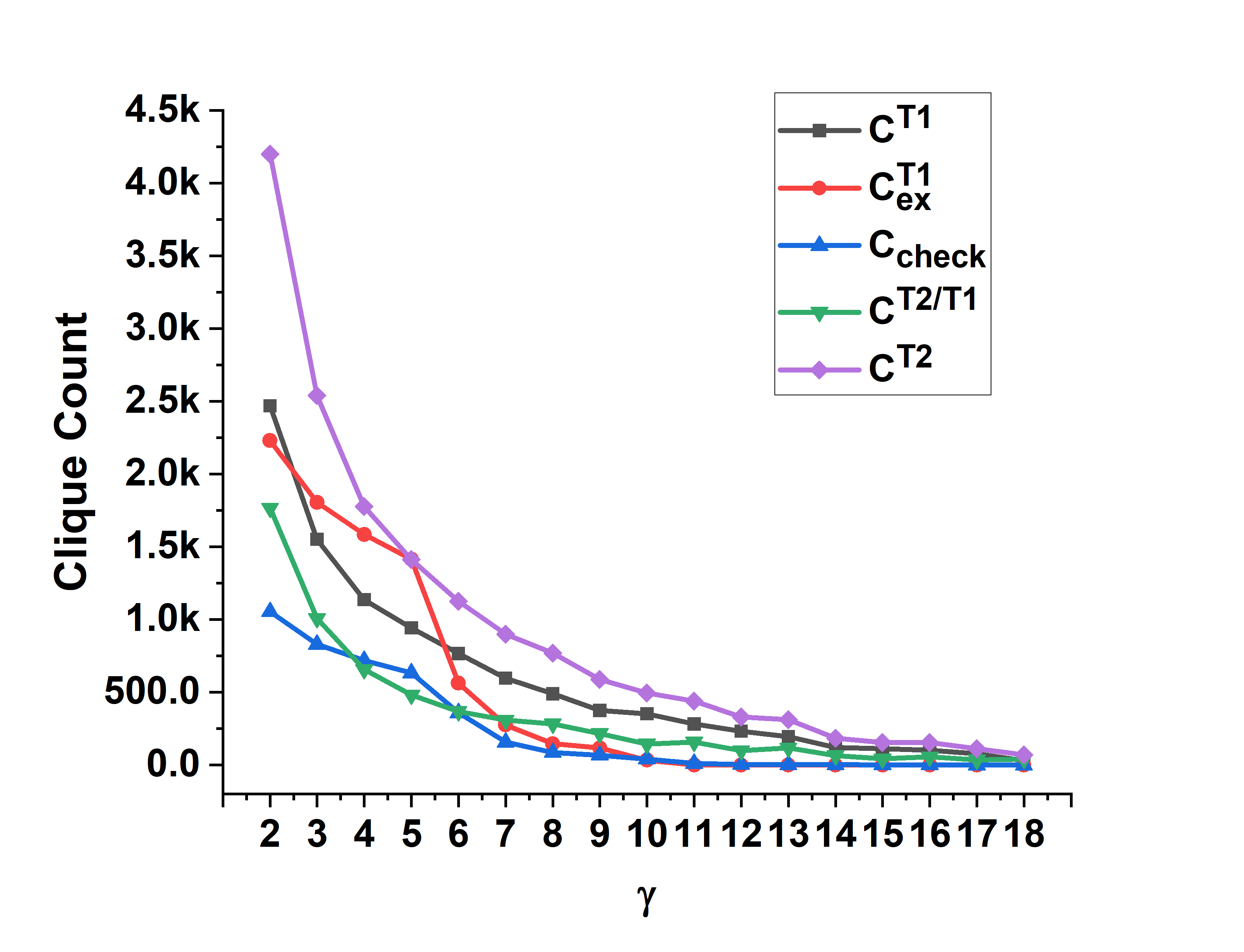}  & \includegraphics[scale=0.2]{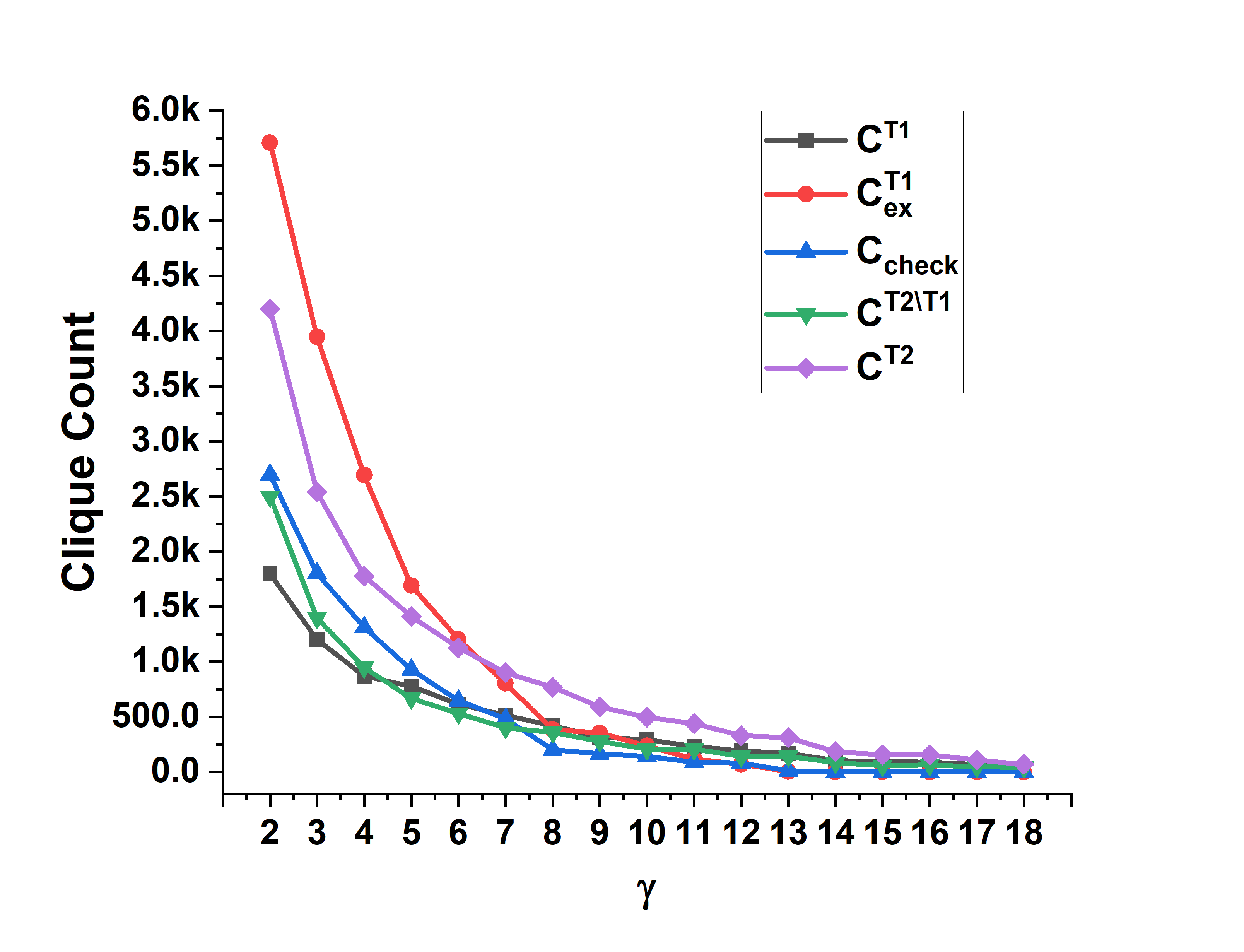}  \\
       (a) Uniform Interval-Based & (b) Uniform Link Count-Based \\
       \includegraphics[scale=0.2]{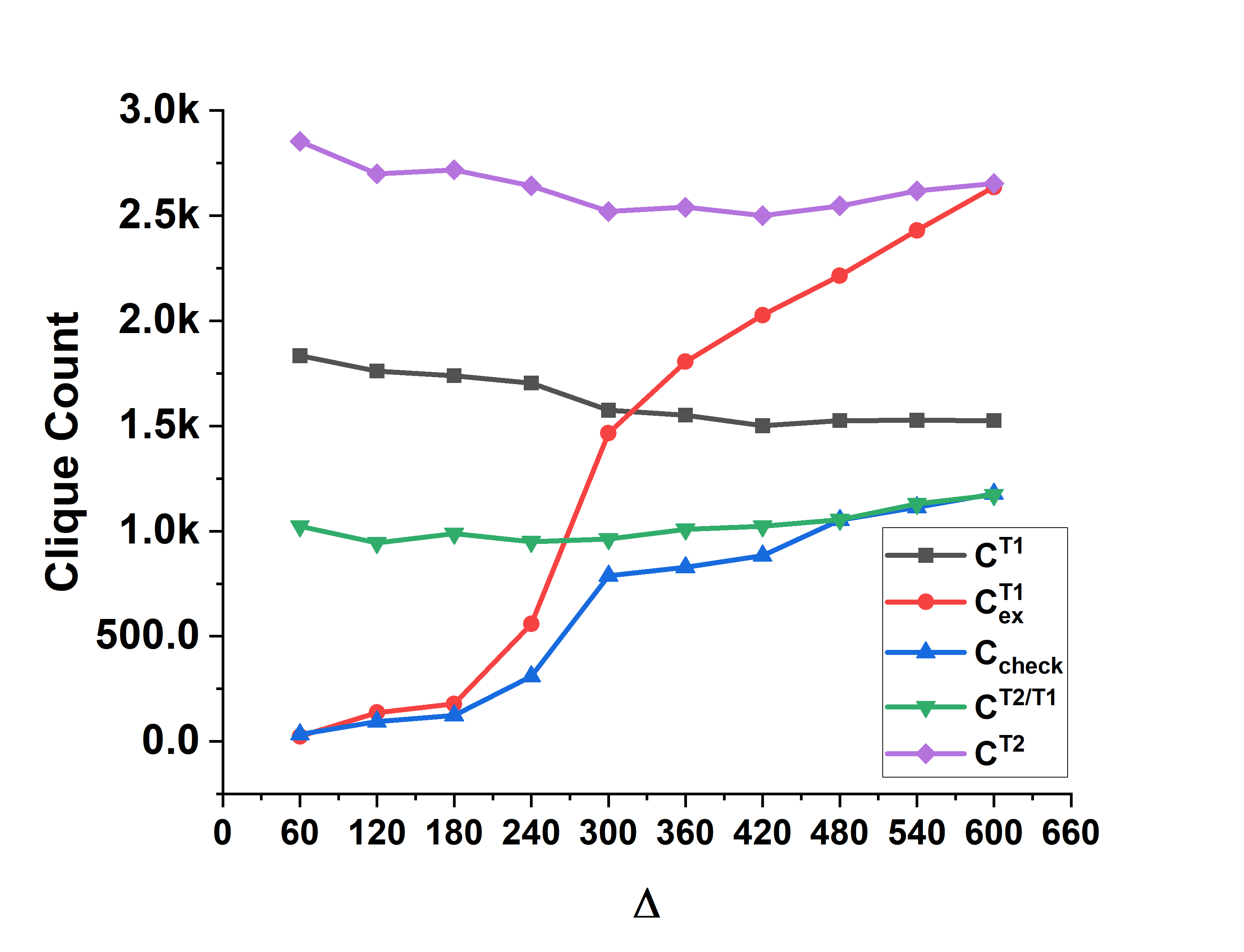}  & \includegraphics[scale=0.2]{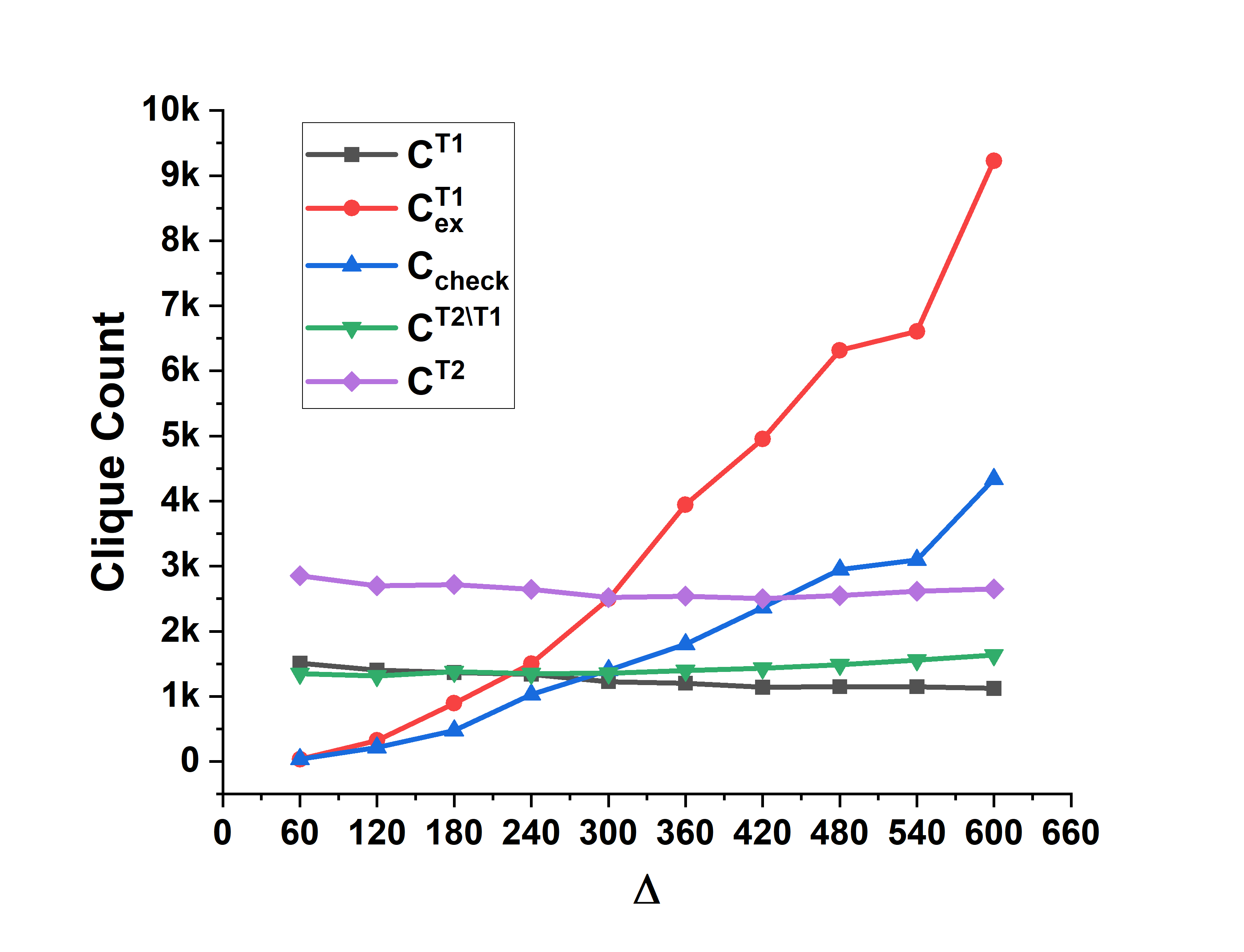} \\
       (c) Uniform Interval-Based & (d) Uniform Link Count-Based 
    \end{tabular}
    \caption{Result for the change of clique count w.r.t $\Delta$ and $\gamma$ for Infectious Dataset; (a)-(b) fixed $\Delta = 360$; (c)-(d) infectious fixed $\gamma = 3$ }
    \label{fig:Infectious}
\end{figure}

\begin{figure}
    \centering
    \begin{tabular}{cc}
       \includegraphics[scale=0.2]{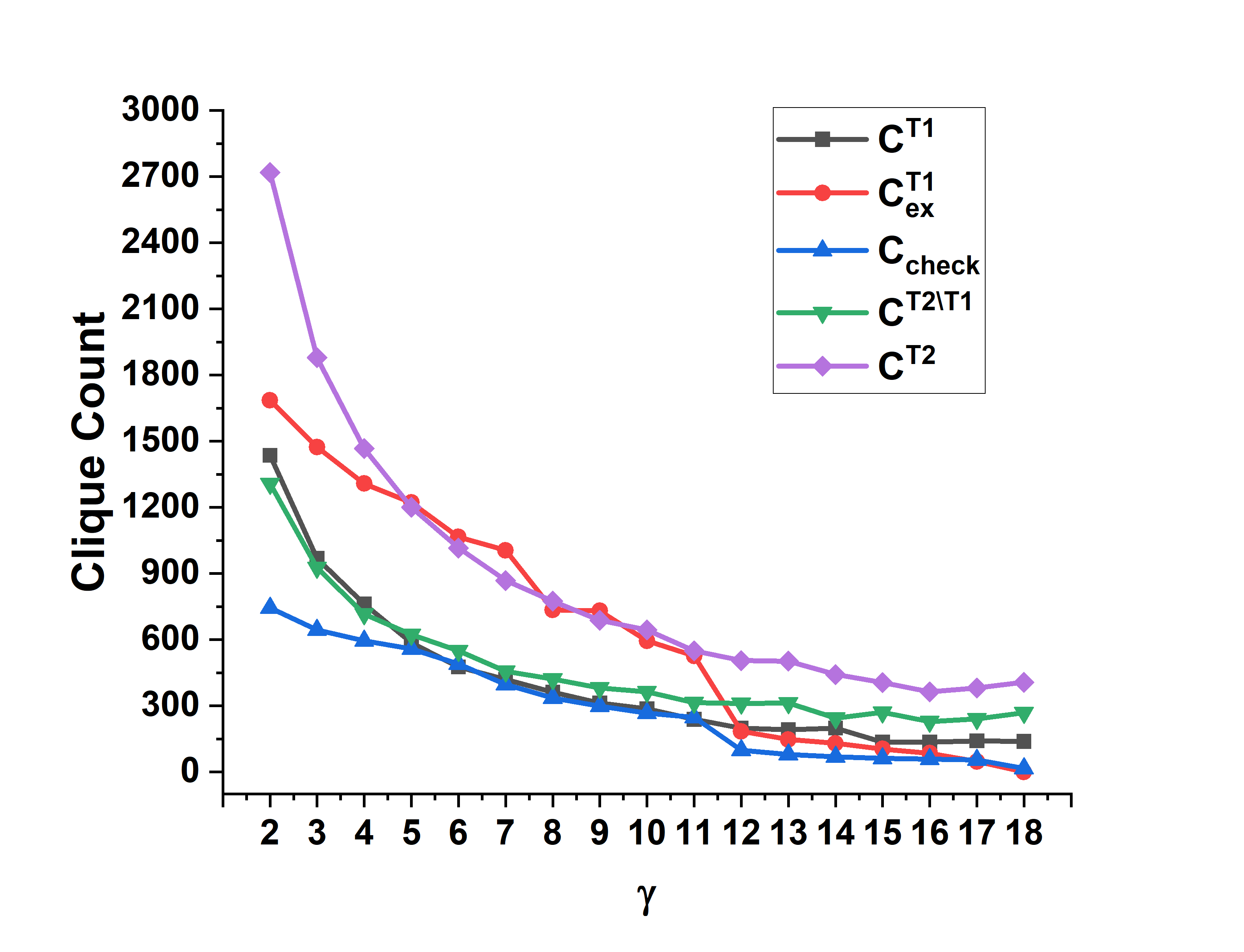}  & \includegraphics[scale=0.2]{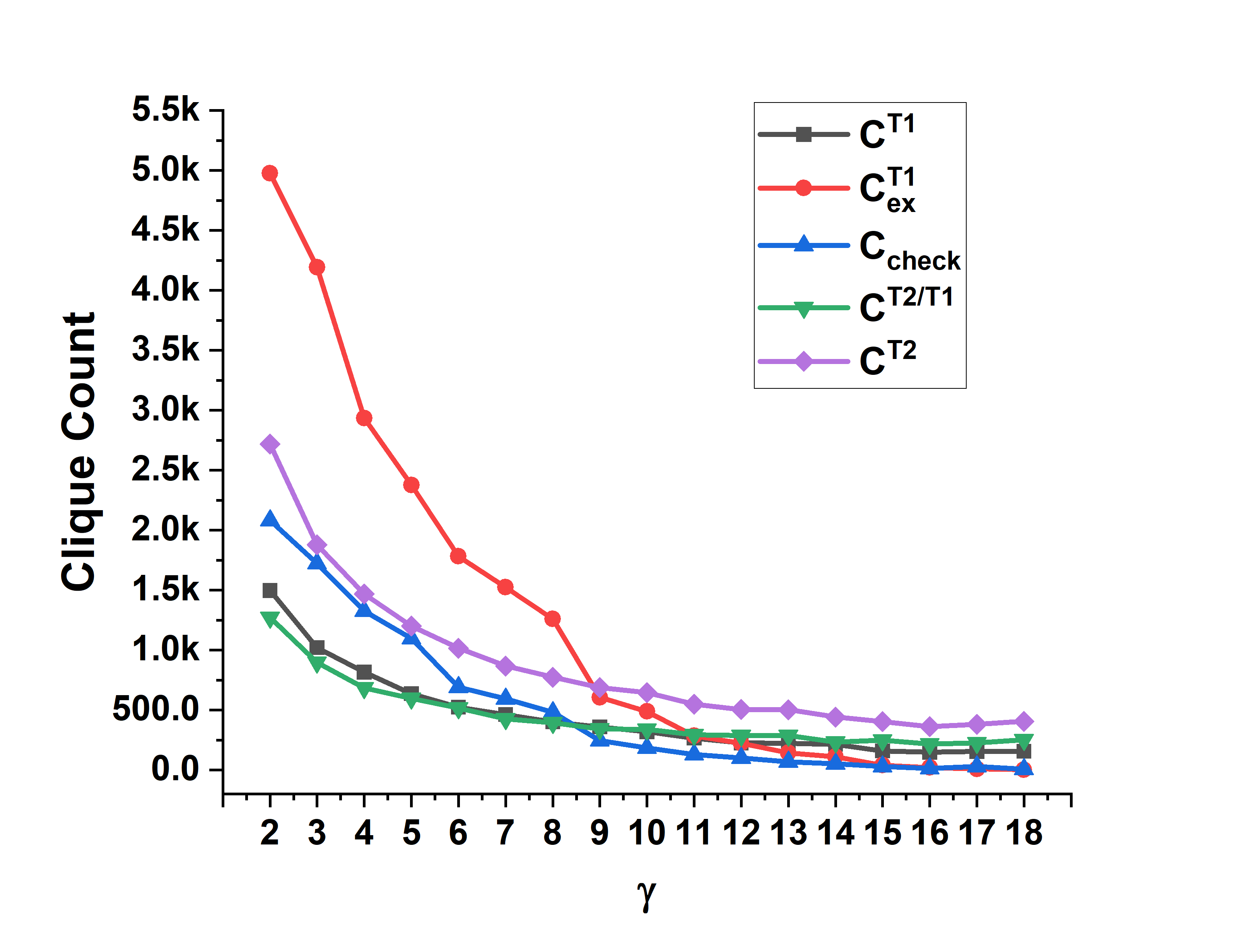}  \\
       (a) Uniform Interval-Based & (b) Uniform Link Count-Based \\
       \includegraphics[scale=0.2]{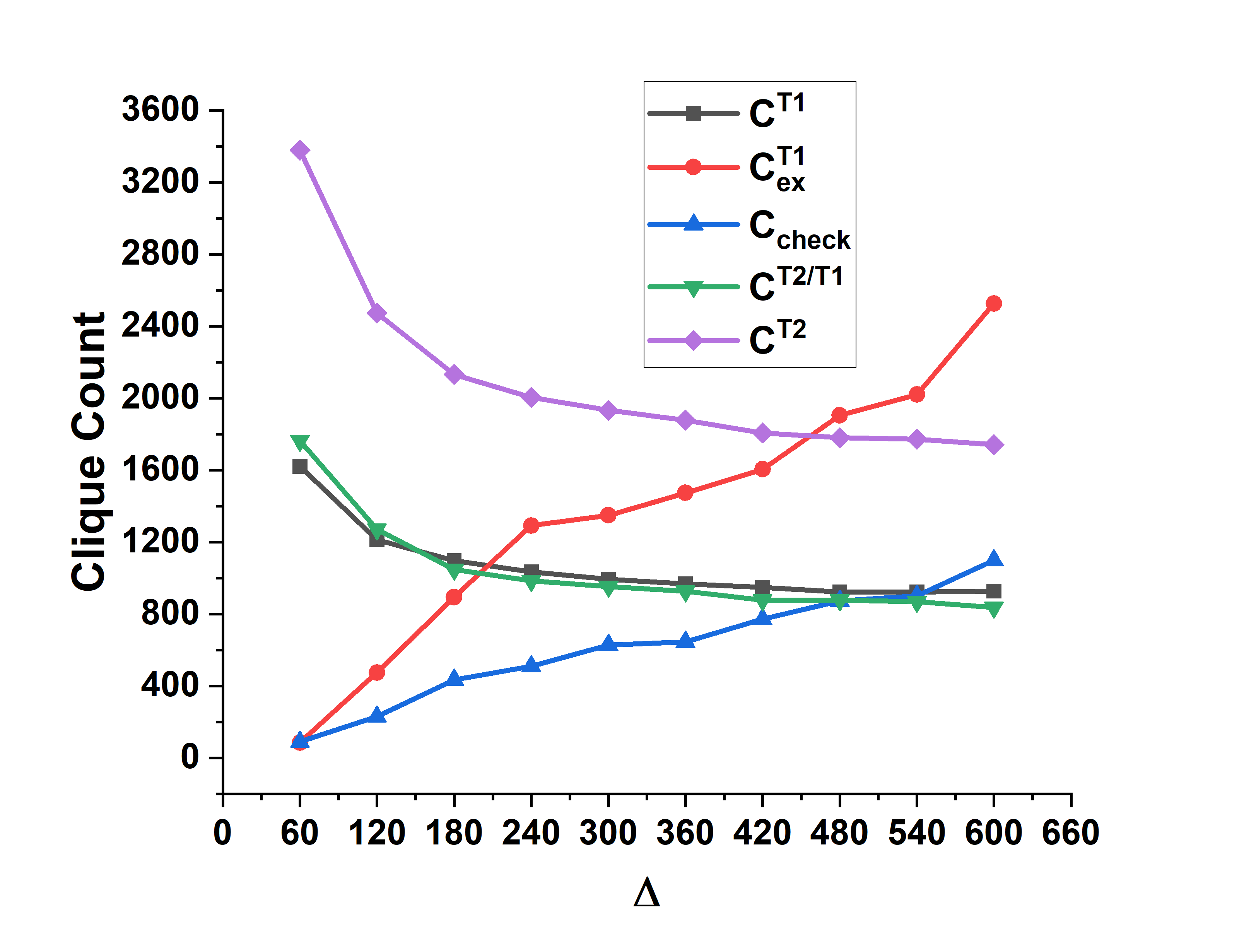}  & \includegraphics[scale=0.2]{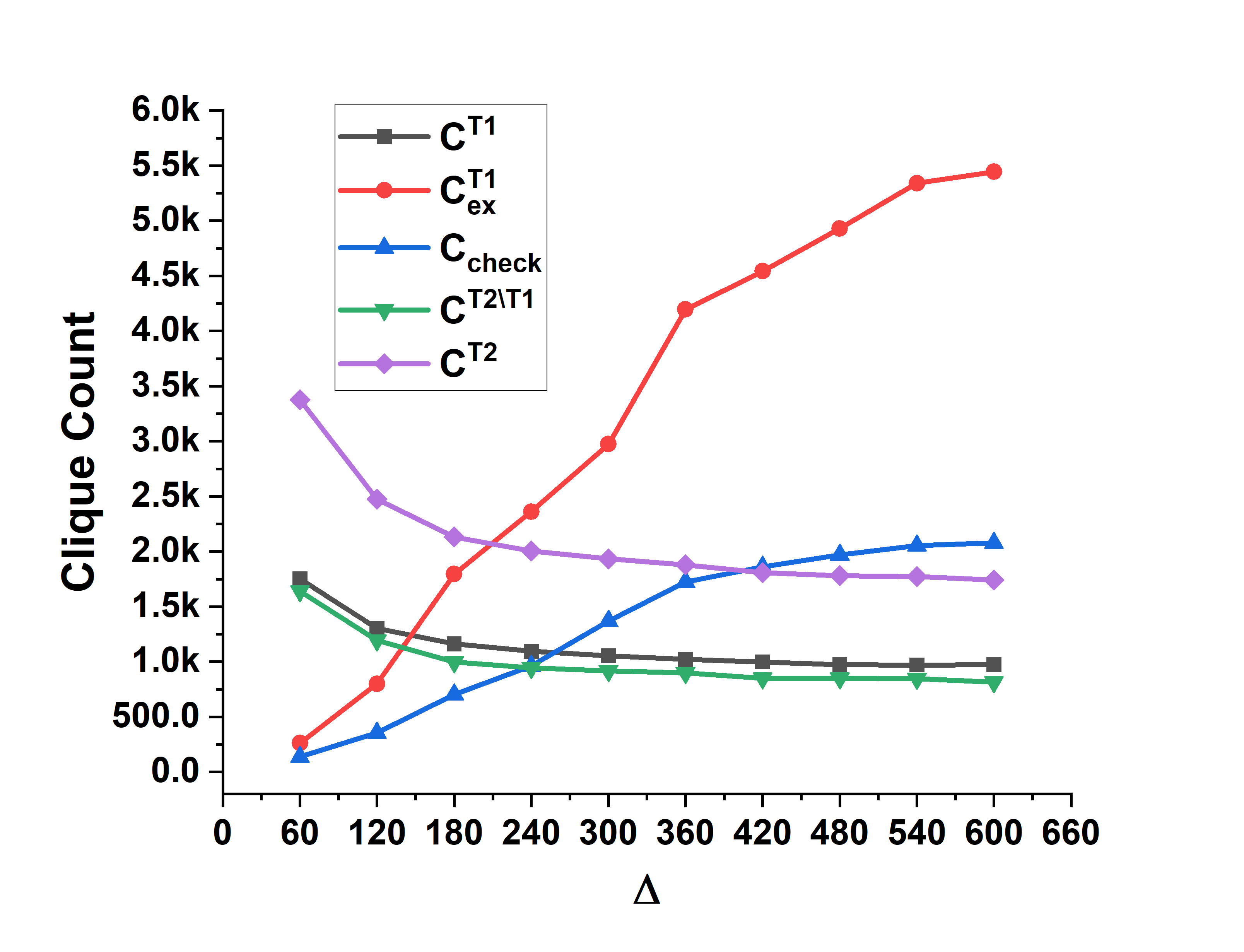} \\
       (c) Uniform Interval-Based & (d) Uniform Link Count-Based 
    \end{tabular}
    \caption{Result for the change of clique count w.r.t $\Delta$ and $\gamma$ for Hypertext Dataset; (a)-(b) fixed $\Delta = 360$; (c)-(d) Hypertext fixed $\gamma = 3$ }
    \label{fig:ht_cintact}
\end{figure}

 Figure \ref{fig:Infectious} a, and b show the plots for the change in cardinality of the clique sets $\mathcal{C}^{T_{1}}$, $\mathcal{C}_{ex}^{T_{1}}$, $\mathcal{C}^{T_{2} \setminus T_{1}}$, $\mathcal{C}^{T_{2}}$, $\mathcal{C}_{check}$ with the change in $\gamma$ for a fixed $\Delta$ (we consider $\Delta=360$), for the Infectious dataset with the uniform interval\mbox{-}based, and  uniform link count\mbox{-}based partitioning, respectively. Figure \ref{fig:Infectious} c, and d show the plots for the same in change with $\Delta$ for a fixed $\gamma$ (we consider $\gamma=3$). From the Figure \ref{fig:Infectious} a, and b it has been observed that in both the partitioning schemes, for a fixed $\Delta$, when the $\gamma$ value is increased, the cardinality of $\mathcal{C}^{T_{1}}$, $\mathcal{C}_{ex}^{T_{1}}$, $\mathcal{C}^{T_{2} \setminus T_{1}}$, $\mathcal{C}^{T_{2}}$, $\mathcal{C}_{check}$ are decreasing. The reason behind this is quite intuitive. For a fixed duration, if the frequency of contacts increases then certainly, the number of maximal cliques following this requirement is going to decrease. The decrements are very sharp till $\gamma=8$. From $\gamma=9$ to $13$ the decrements are quite gradual, and beyond $\gamma \geq 14$, the change is very less. As an example, for uniform interval\mbox{-}based partitioning scheme, when  $\gamma=2$, the value of $|\mathcal{C}^{T_{2}}|$ is $4199$ and the same for $\gamma=8$ is $569$. However, the value of $|\mathcal{C}^{T_{2}}|$ for $\gamma=9$ and $13$ are $589$ and $311$ and also for $\gamma=14$ and $18$ are $185$ and $69$. 
\par From Figure \ref{fig:Infectious} c and d, it can be observed that for a fixed $\gamma$ ($=3$ in our experiments) if the $\Delta$ value is increased gradually, the change in cardinality of $\mathcal{C}^{T_{1}}$, $\mathcal{C}^{T_{2}}$, $\mathcal{C}^{T_{2} \setminus T_{1}}$ are decreasing very slightly. As an example, for uniform time interval based partitioning scheme when the value of $\Delta =60$, $360$ and $600$, the value of $\mathcal{C}^{T_{1} }$ are $1834$, $1552$, and $1526$, respectively. The reason behind this is as follows. For a given frequency of contacts, when the duration is increased, there is a high possibility that the two or more maximal cliques get merged and this leads to the decrement in maximal clique count.  However, the gradual increment of $\Delta$ value leads to the increase of $|\mathcal{C}_{ex}^{T_{1}}|$ and $|\mathcal{C}_{check}|$. The increment of $|\mathcal{C}_{ex}^{T_{1}}|$ is sharper than that of $|\mathcal{C}_{check}|$, particularly for uniform link count\mbox{-}based partitioning. As an example, for uniform interval\mbox{-}based partitioning, when $\Delta=60$ the value of $|\mathcal{C}_{check}|$ and $|\mathcal{C}_{ex}^{T_{1}}|$ is $33$ and $24$, respectively. However, the same for $\Delta=660$, are $1178$ and $2636$. In case of uniform link count\mbox{-}based partitioning,  the value of $|\mathcal{C}_{check}|$ and $|\mathcal{C}_{ex}^{T_{1}}|$ for $\Delta=60$ are $39$ and $36$, respectively. However, when the $\Delta$ value is increased to $600$, these become $4337$ and $9228$, respectively.

\begin{figure}
    \centering
    \begin{tabular}{cc}
       \includegraphics[scale=0.2]{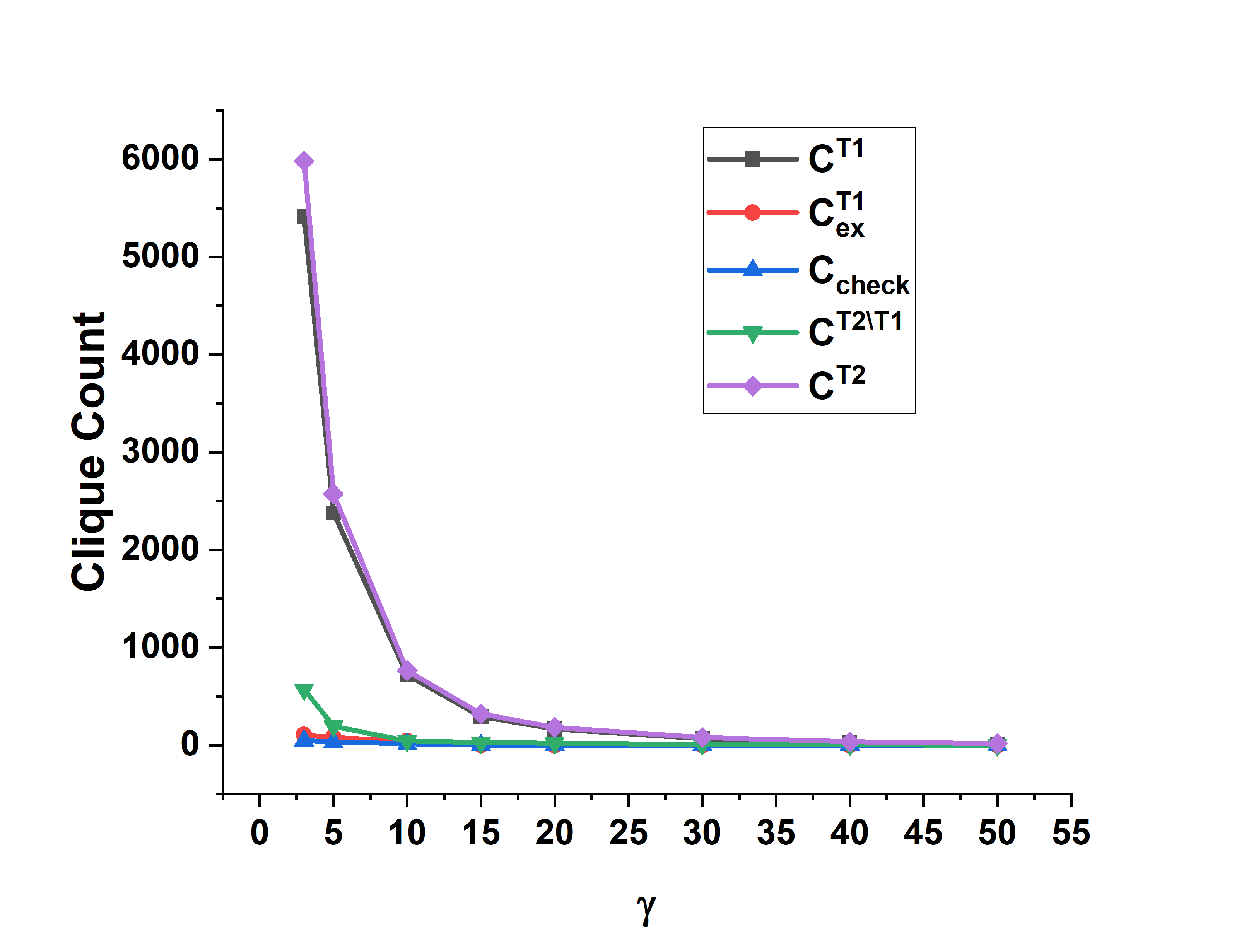}  & \includegraphics[scale=0.2]{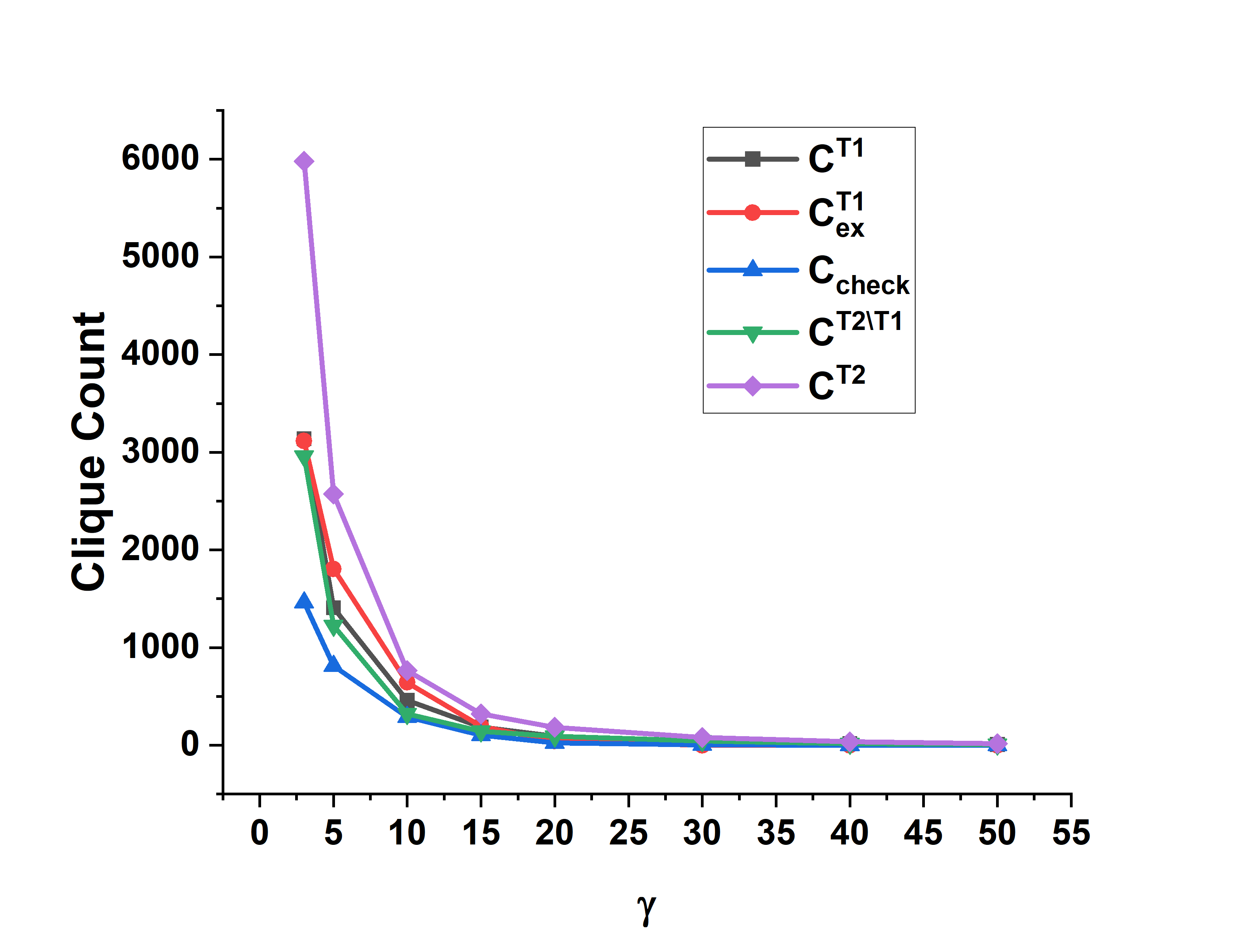}  \\
       (a) Uniform Interval-Based & (b) Uniform Link Count-Based \\
       \includegraphics[scale=0.2]{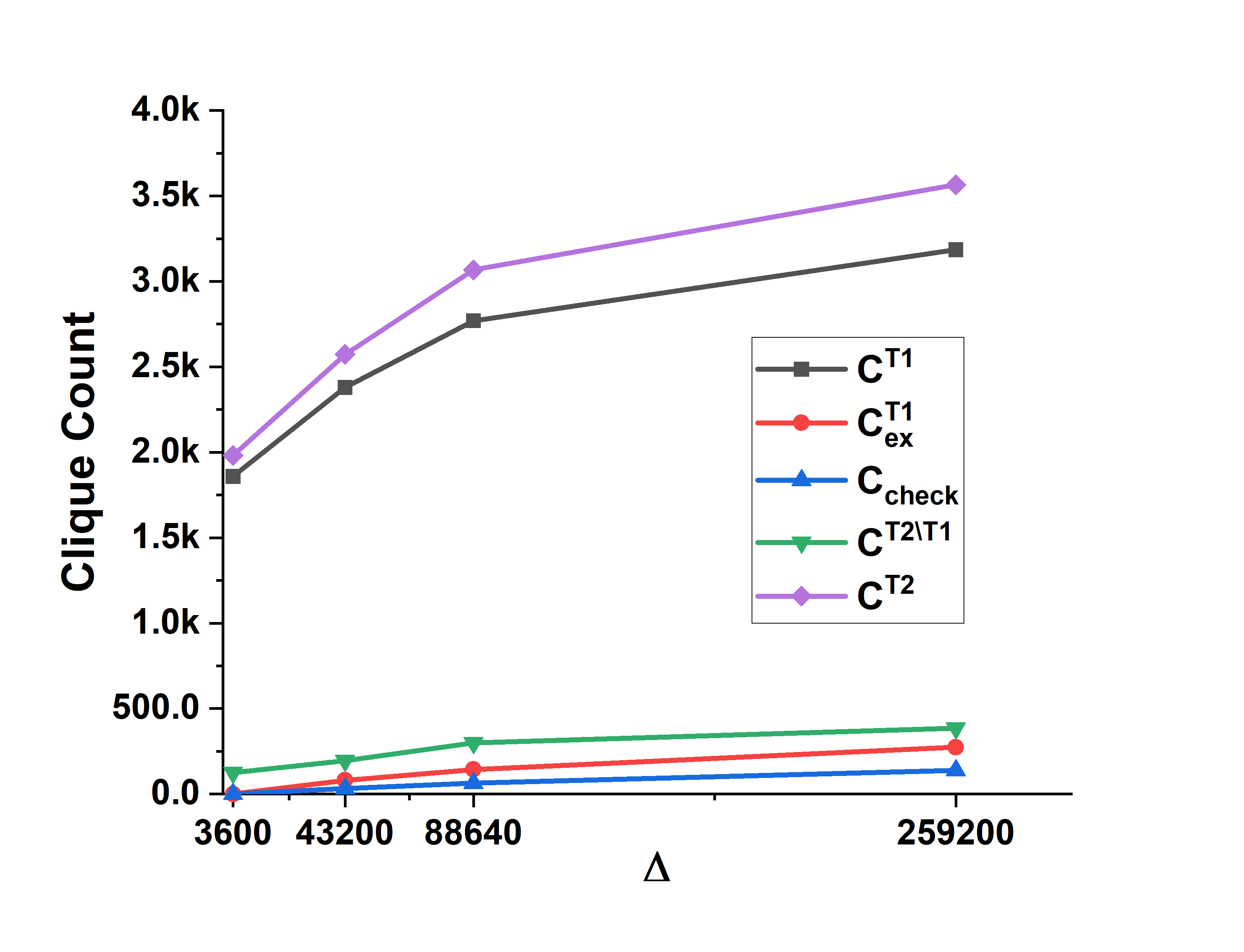}  & \includegraphics[scale=0.2]{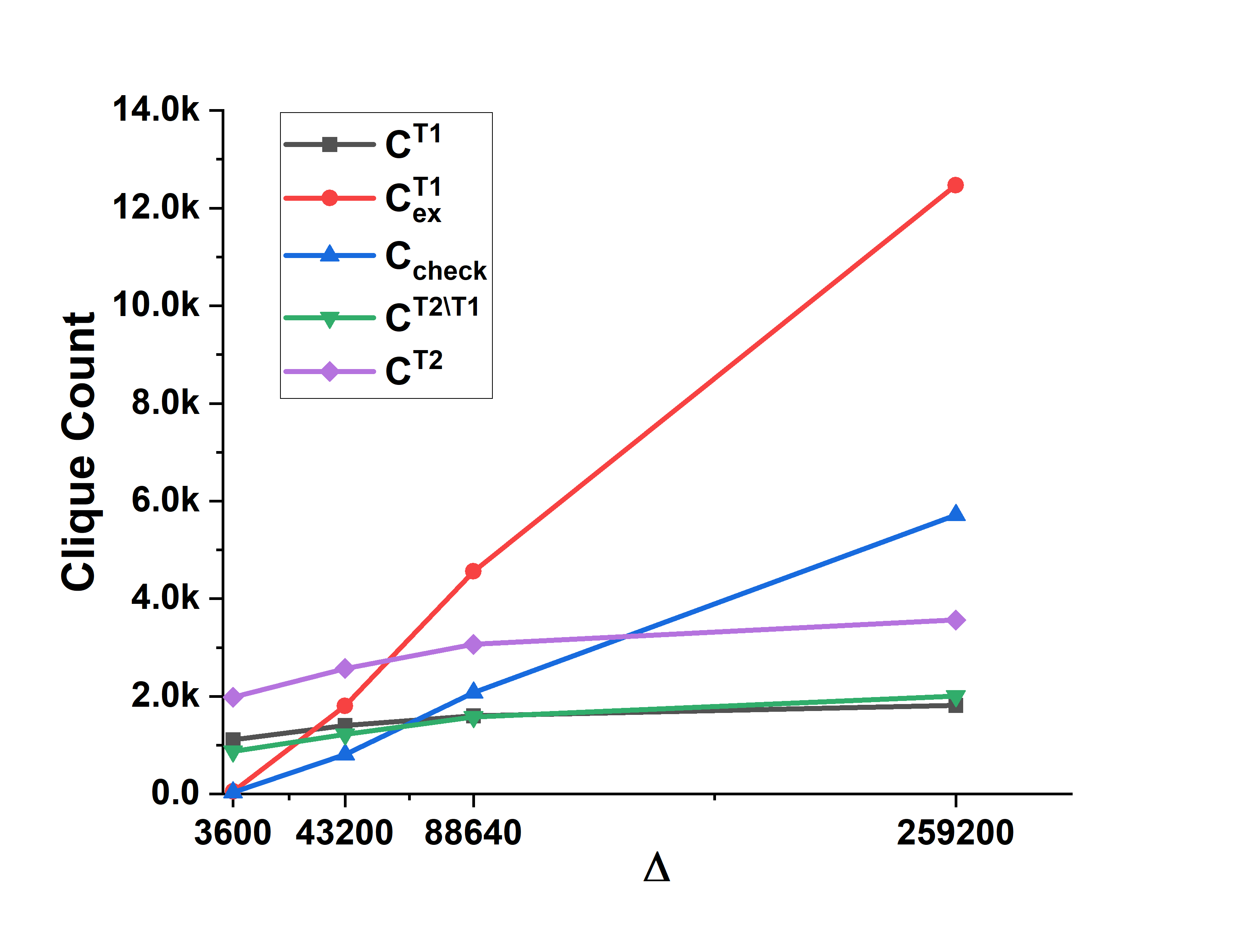} \\
       (c) Uniform Interval-Based & (d) Uniform Link Count-Based 
    \end{tabular}
    \caption{Result for the change of clique count w.r.t $\Delta$ and $\gamma$ for College Message Dataset; (a)-(b) fixed $\Delta = 43200$; (c)-(d) Collegemsg fixed $\gamma = 5$ }
    \label{fig:collegemsg}
\end{figure}

Figure \ref{fig:ht_cintact} a and b show the plots for the change in cardinality of $\mathcal{C}^{T_{1}}$, $\mathcal{C}_{ex}^{T_{1}}$, $\mathcal{C}^{T_{2} \setminus T_{1}}$, $\mathcal{C}^{T_{2}}$, $\mathcal{C}_{check}$ with the change in $\gamma$ for a fixed $\Delta$ for the Hypertext dataset. In this dataset also, we conduct our experiments with $\Delta=360$. For these two plots, our observations are same as the `Infectious' dataset, i.e., with the increment of $\gamma$, the cardinality of all these clique sets are gradually decreasing. As an example, for uniform time interval\mbox{-}based partitioning, when $\gamma=2$, $10$, and $18$ the value of $|\mathcal{C}^{T_{2} \setminus T_{1}}|$ are $1307$, $364$, and $268$, respectively. Figure \ref{fig:ht_cintact} c and d show the plots change in cardinality of all the lists with the increment of $\Delta$ for a fixed $\gamma$ value. Like Infectious dataset, in this dataset also we consider $\gamma=3$. In both the partitioning schemes, with the increment of $\Delta$, the value of $|\mathcal{C}^{T_{1}}|$, $|\mathcal{C}^{T_{2}}|$, and $|\mathcal{C}^{T_{2} \setminus T_{1}}|$ decreases, however, $|\mathcal{C}_{ex}^{T_{1}}|$ and $|\mathcal{C}_{check}|$ increases. As an example, when $\Delta=60$, the value of $|\mathcal{C}^{T_{1}}|$, $|\mathcal{C}^{T_{2}}|$, and $|\mathcal{C}^{T_{2} \setminus T_{1}}|$ are $1621$, $1765$, and $3378$, respectively. When the value of $\Delta$ has been increased to $600$ their values are $928$, $837$, and $1742$, respectively. However, when the value of $\Delta=60$ and  $\Delta=600$, the value of $|\mathcal{C}_{ex}^{T_{1}}|$ and $|\mathcal{C}_{check}|$ are $86$, $91$ and $2527$, $1099$, respectively. 

Similar to Infectious and Hypertext, Figure \ref{fig:collegemsg} shows the results for College Message dataset. Due to the large life-cycle of the dataset, we select higher value of $\Delta$ (= 43200 sec or 12 hours) compared to the earlier. Figure  \ref{fig:collegemsg} a and b show the plots for the change in cardinality of $\mathcal{C}^{T_{1}}$, $\mathcal{C}_{ex}^{T_{1}}$, $\mathcal{C}^{T_{2} \setminus T_{1}}$, $\mathcal{C}^{T_{2}}$, $\mathcal{C}_{check}$ with the change in $\gamma$ for a fixed $\Delta$. In both the partition schemes, the size of $\mathcal{C}_{ex}^{T_{1}}$, and $\mathcal{C}_{check}$ are less compared the number of maximal cliques in each partition. This helps to scale up the computation with number of partitions. Contradicting to Infectious and Hypertext, here, the clique counts increases with increasing $\Delta$ (Figure \ref{fig:collegemsg} c and d). In case of \emph{uniform link count partition}, the size of $\mathcal{C}_{ex}^{T_{1}}$ becomes 12470. It helps to exploit the partition mechanism and improve the computational efficacy of the algorithim. We will discuss this in detail in the next section.  

\begin{figure}
    \centering
    \begin{tabular}{cc}
      \includegraphics[scale=0.2]{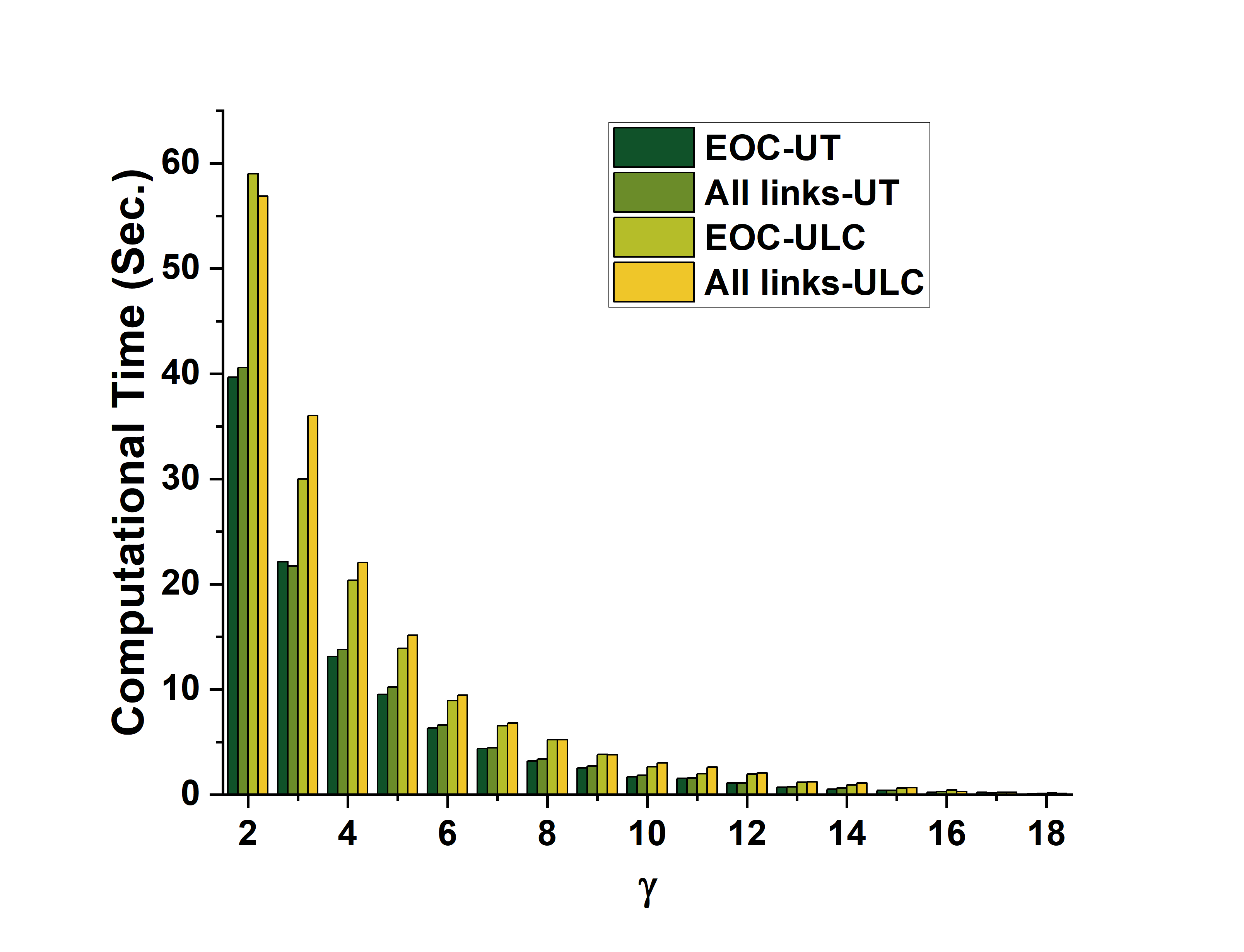}   &  \includegraphics[scale=0.2]{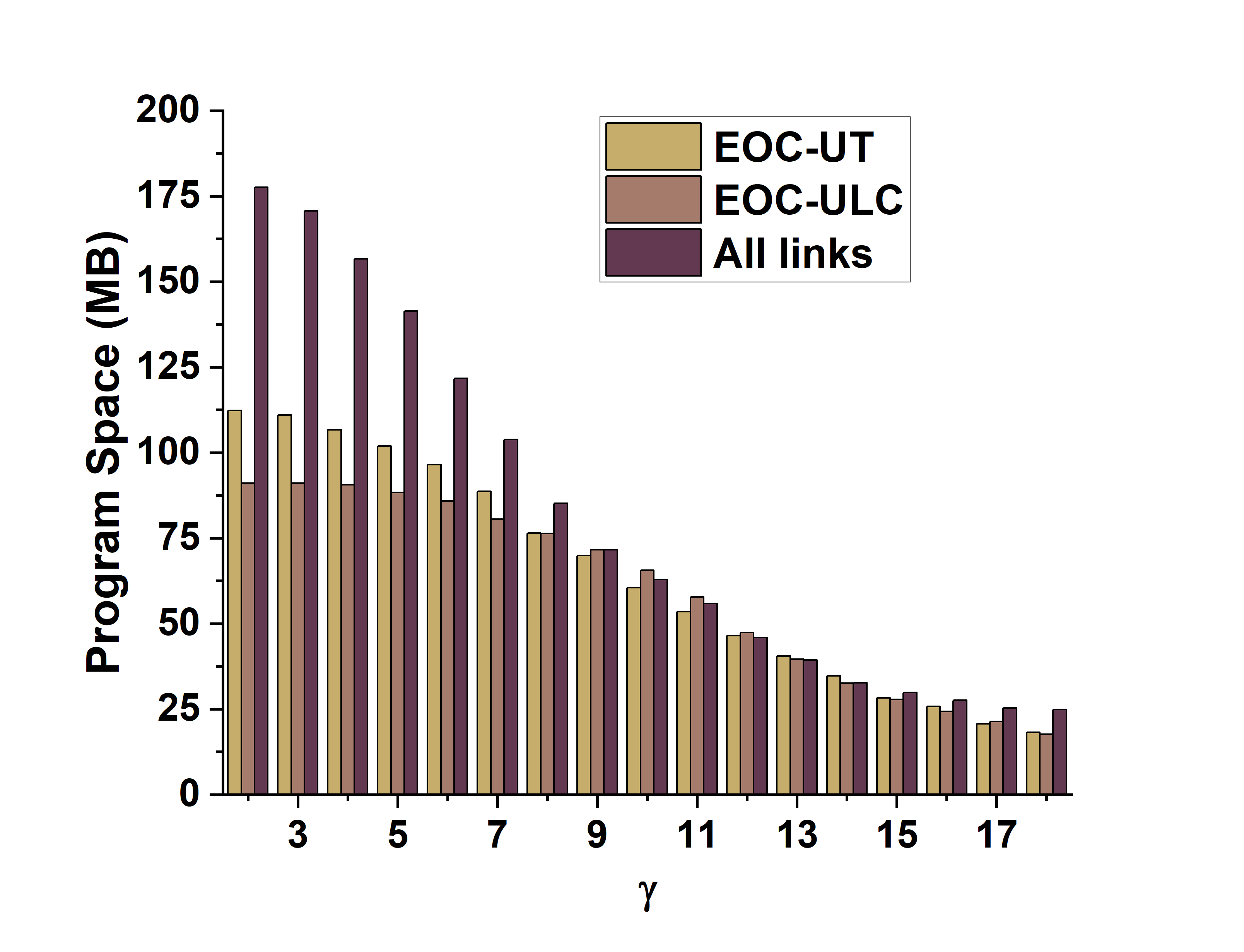}\\
        (a) Fixed - $\Delta=360$ time & (b) Fixed - $\Delta=360$ space  \\
    \includegraphics[scale=0.2]{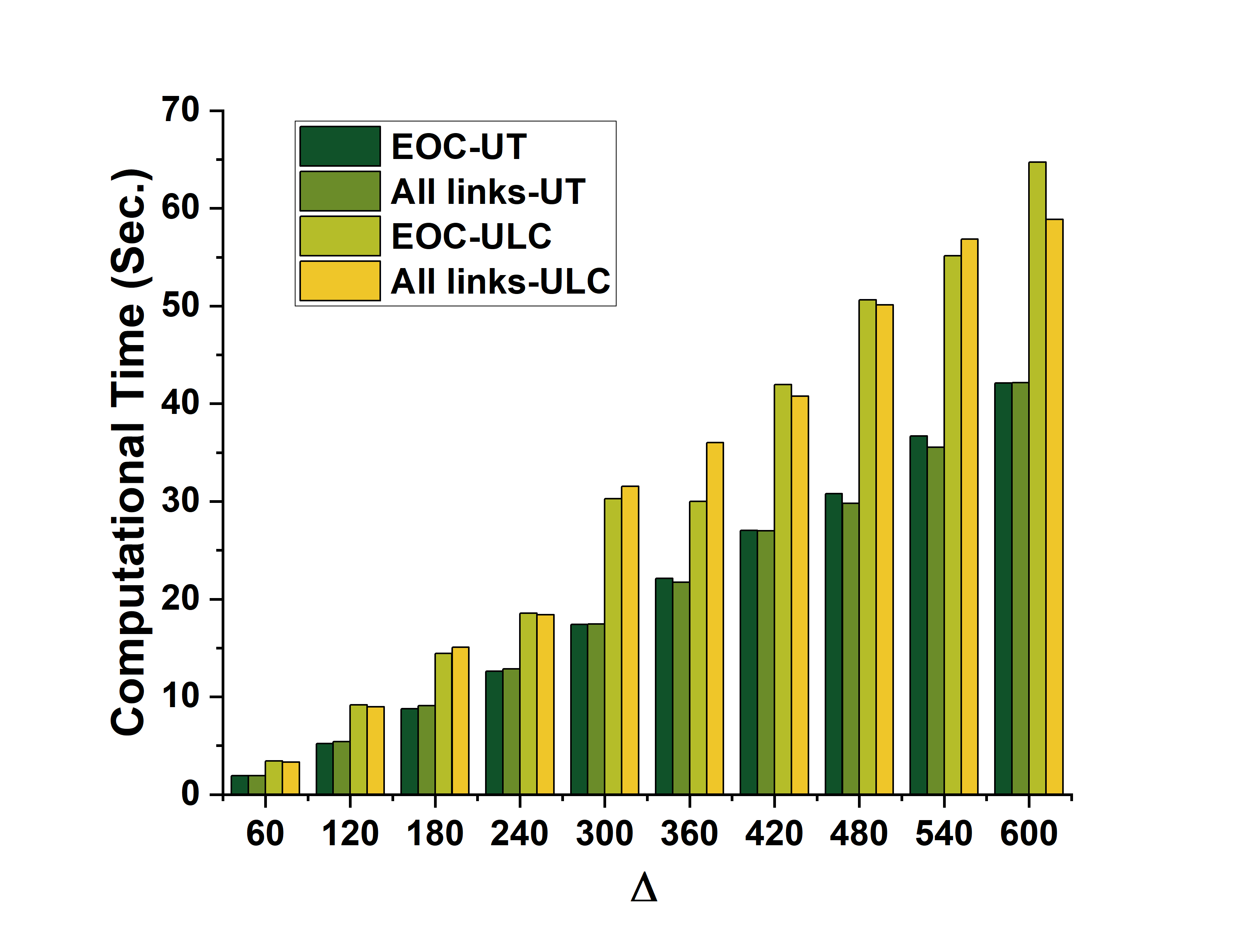}   &  \includegraphics[scale=0.2]{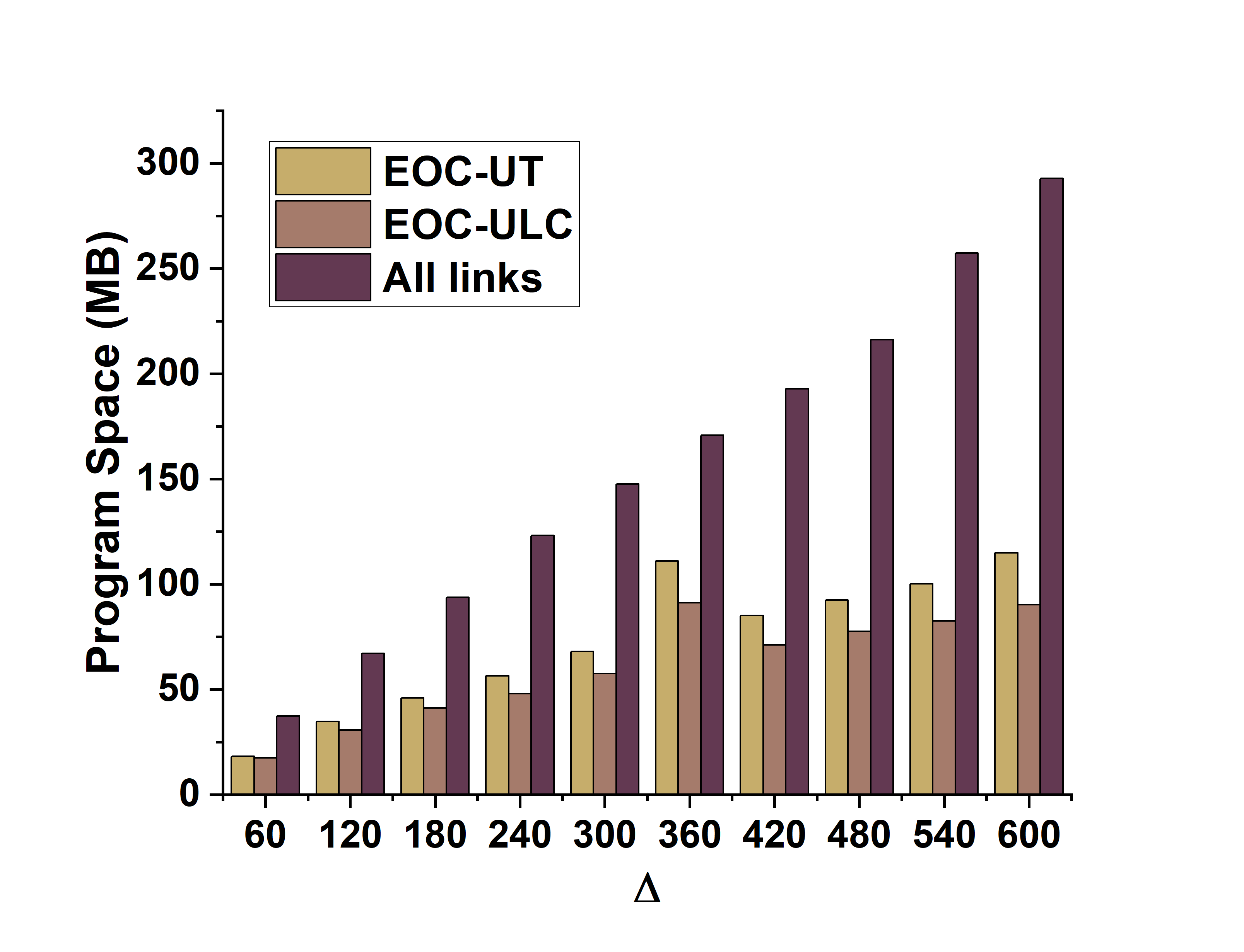}\\
        (c) Fixed - $\gamma=3$ time & (d) Fixed - $\gamma=3$ space 
    \end{tabular}
    \caption{Results for Computational Time and Space for Infectious dataset}
    \label{fig:Computational_Time_infectious}
\end{figure}

\begin{figure}
    \centering
    \begin{tabular}{cc}
      \includegraphics[scale=0.2]{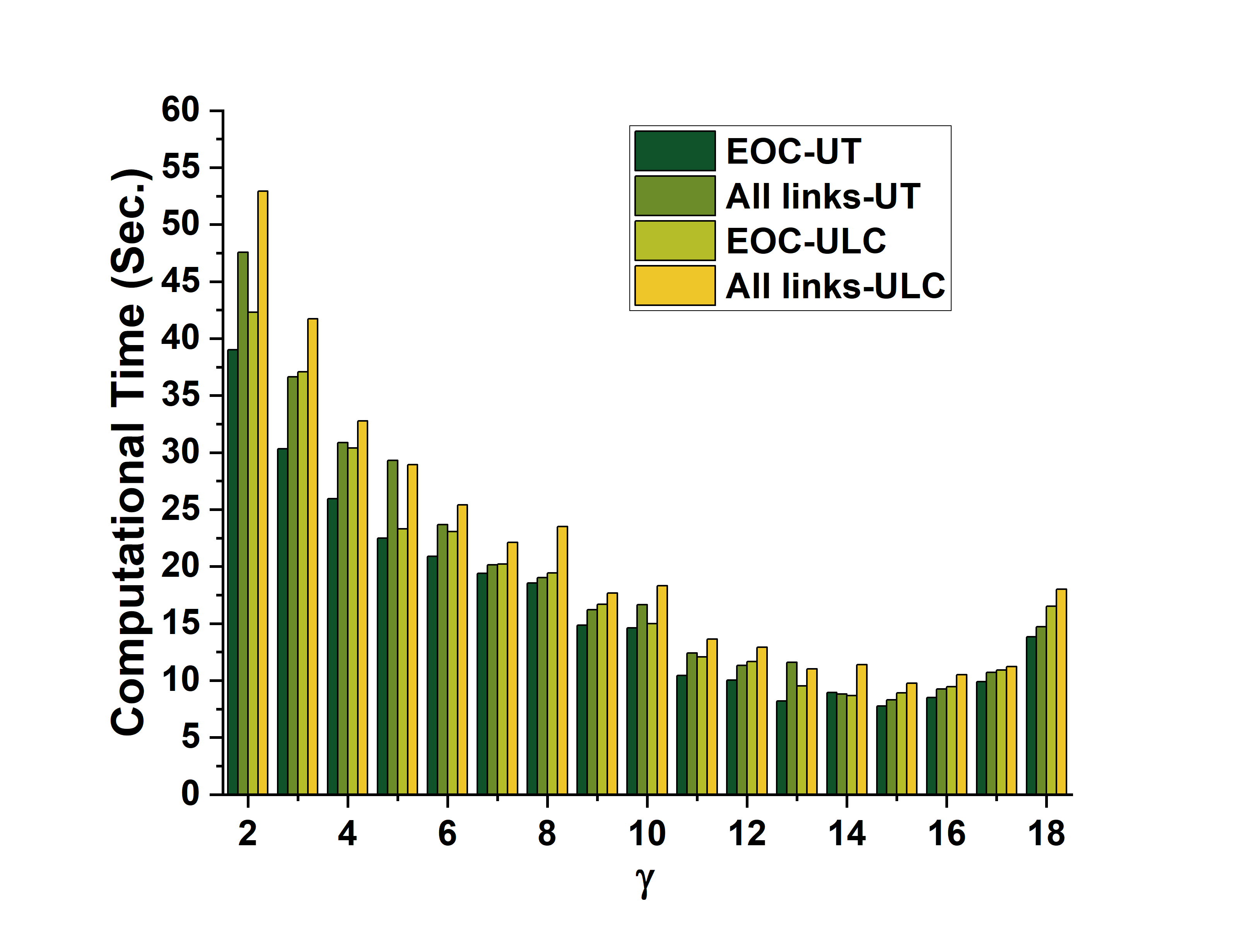}   &  \includegraphics[scale=0.2]{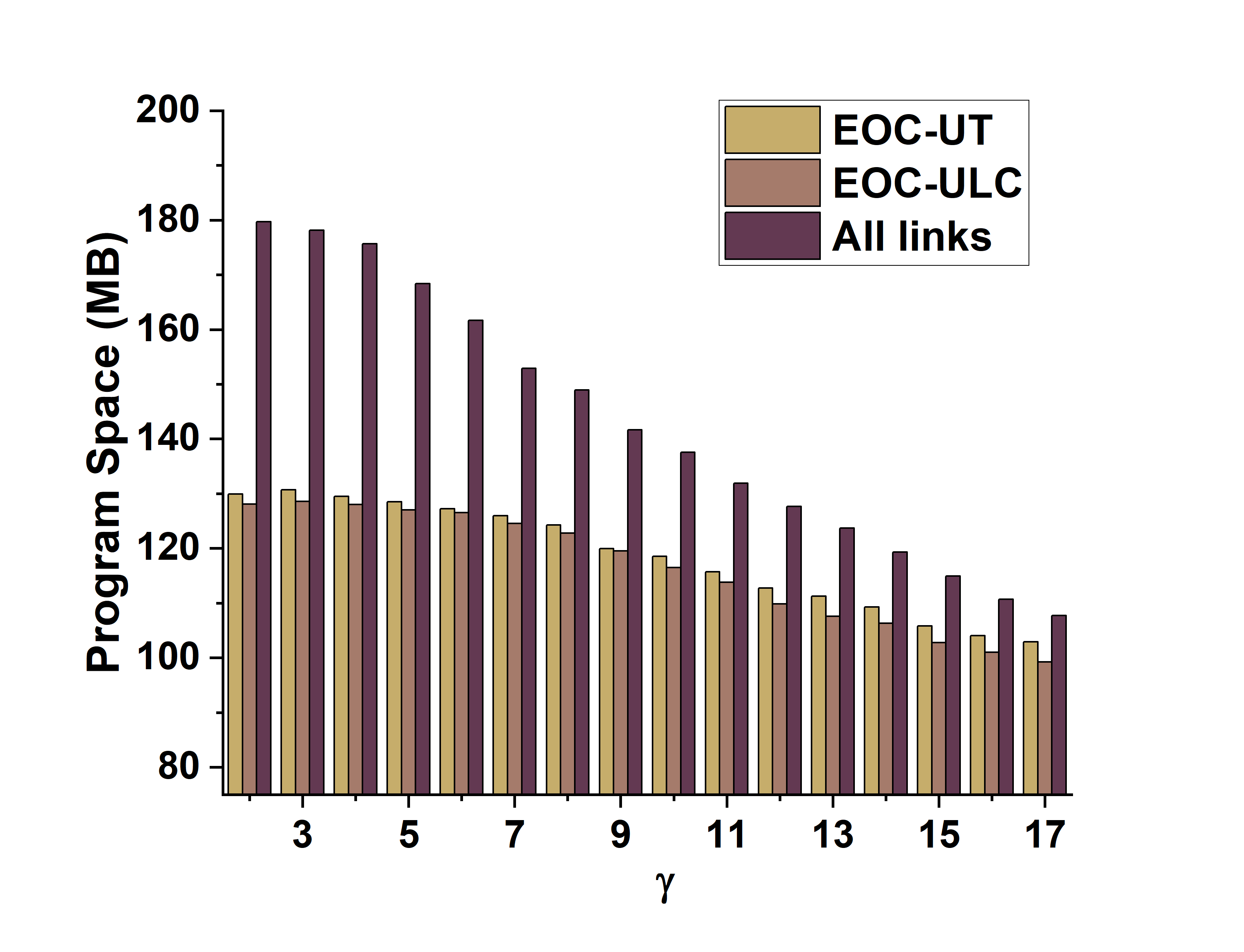} \\
        (a) Fixed - $\Delta=360$ time & (b) Fixed - $\Delta=360$ space  \\
    \includegraphics[scale=0.2]{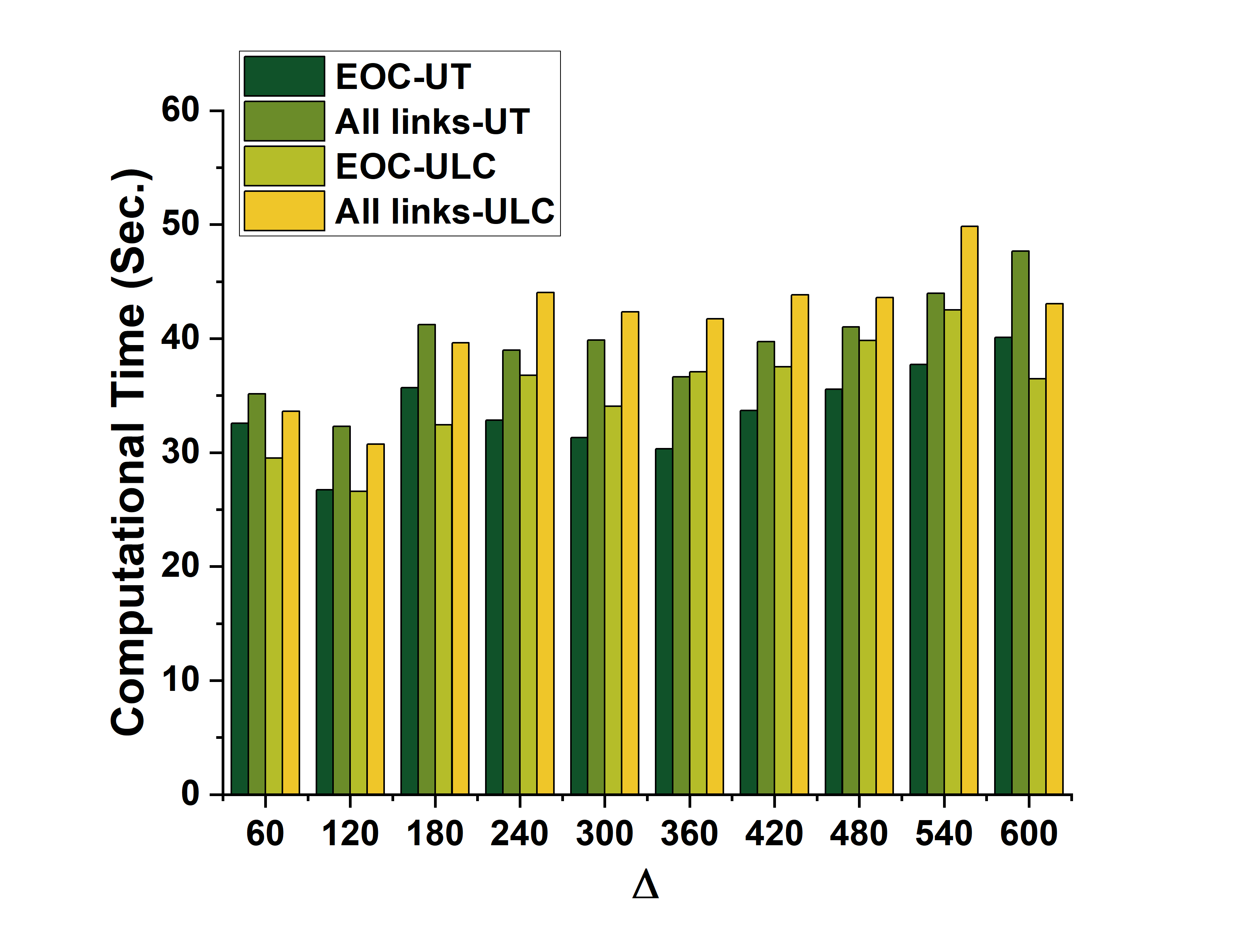}   &  \includegraphics[scale=0.2]{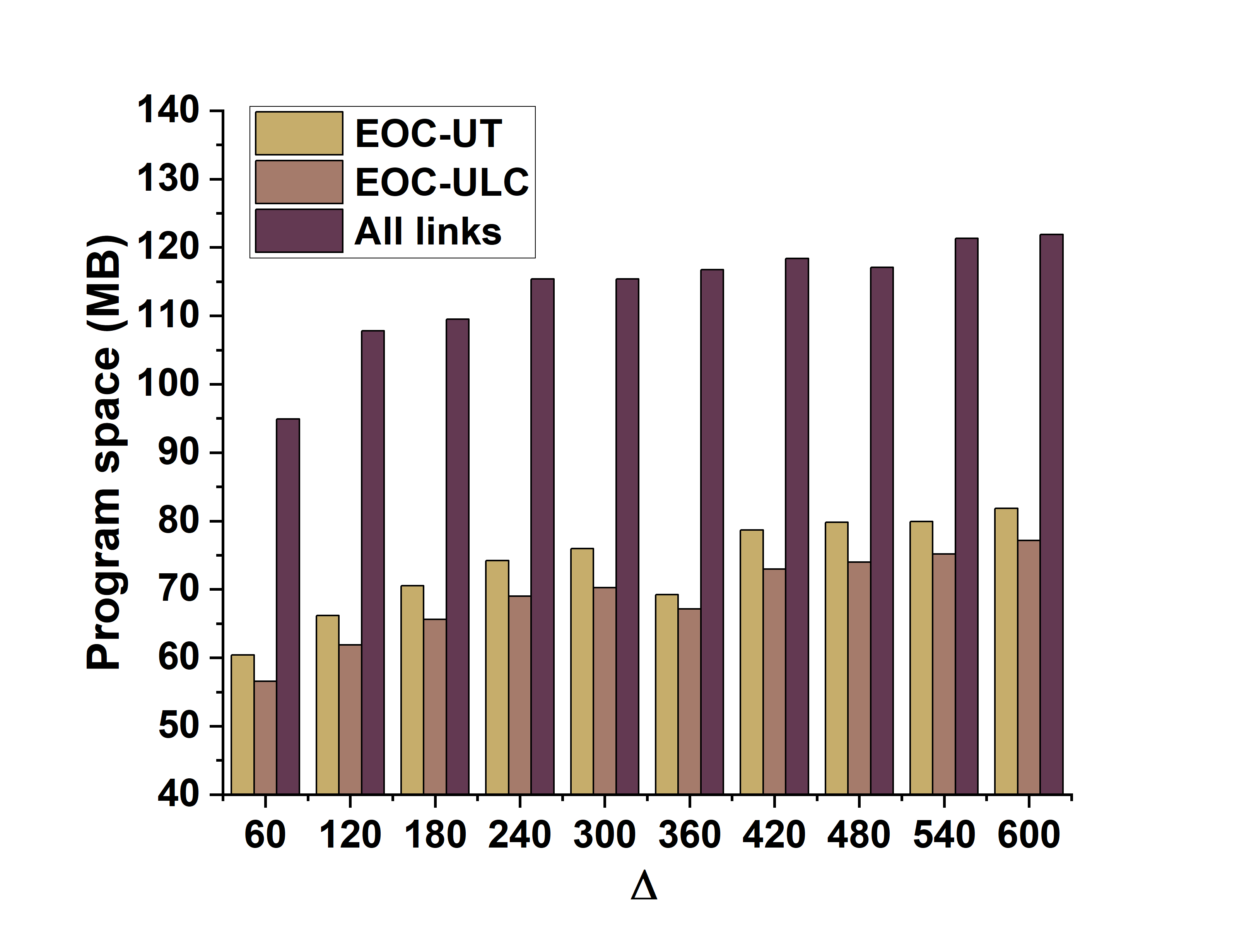}\\
           (c) Fixed - $\gamma=3$ time & (d) Fixed - $\gamma=3$ space 
    \end{tabular}
    \caption{Results for Computational Time and Space for Hypertext dataset}
    \label{fig:Computational_Time_ht09contact}
\end{figure}

\subsubsection{Computational Time and Space}
 Now, we turn our attention to discuss regarding time and space requirement of our proposed methodology. Note that, here we report the space taken by the process for execution. In case of partition, we report the total time to compute $\mathcal{C}^{T_2}$ by adding the time required to execute each partition. To compute the space, we report the maximum space required by among both the partitions. We continue our experiments in the setting as described (i.e., fixed $\Delta$ varying $\gamma$, and fixed $\gamma$ varying $\Delta$). Figures \ref{fig:Computational_Time_infectious}, \ref{fig:Computational_Time_ht09contact}, and  \ref{fig:Computational_Time_collegemsg} show the plots for the Infectious, Hypertext, College Message Datasets, respectively. For computational time, two comparing values, with or without partitions, are shown for each of the partition schemes (\emph{EOC-UT} and \emph{EOC-ULC}). For computational space, we consider three comparing values for \emph{EOC-UT}, \emph{EOC-ULC}, and with \emph{all links}. As we run all the experiments in high performance computing cluster with multiple programs running simultaneously, at the job submission, it allocates the resources (cores) according to it's availability and current workload. Hence, we run both the settings (with or without partition) in a same job to get the actual comparison. So, we report \emph{All links-UT} and \emph{All links-ULC} for computational time. 
\par Now, we discuss the computational time and space requirements for fixed $\Delta$ setting. From the Figure \ref{fig:Computational_Time_infectious} a, \ref{fig:Computational_Time_ht09contact} a, and  \ref{fig:Computational_Time_collegemsg} a, it can be observed that the computational time decreases exponentially with the increase of $\gamma$. The reason is with the increment of $\gamma$, the number of maximal cliques decreases (Refer \ref{fig:Infectious} a and b, \ref{fig:ht_cintact} a and b, \ref{fig:collegemsg} a and b). It is also observed that compared to the \emph{all links}, the partition\mbox{-}based schemes lead to an improvement in computational time and the improvement is more when the $\gamma$ value is less. This is because for lower values of $\gamma$, the size of $\mathcal{C}_{ex}^{T_{1}}$ is more, and as shown previously these cliques will only be expanded by extending its right time stamp. However, these cliques will be expanded by vertex addition, and both right and left time stamp expansion while processing all the links at a time. Hence, the larger size of $\mathcal{C}_{ex}^{T_{1}}$ leads to more improvement in computational time. For `Hypertext' dataset (Figure \ref{fig:Computational_Time_ht09contact} a ), the computational time increases for $\gamma \geq 16$ due to the growth in intermediate clique count, while building $\mathcal{C}^{T_{2} \setminus T_{1}}$. Now, we turn our attention for space requirement analysis in Figure \ref{fig:Computational_Time_infectious} b, \ref{fig:Computational_Time_ht09contact} b, and  \ref{fig:Computational_Time_collegemsg} b. For a fixed $\Delta$, when the value of $\gamma$ increases, the space requirement decreases. Also, the partition schemes lead to an improvement in terms of space compared to All-Links. As we have made two partitions, in majority of the cases the space requirement become approximately half for small value of $\gamma$. As the number of maximal cliques are comparatively more for small $\gamma$, the improvement is also significant. However, for large $\gamma$ improvement is negligible in Infectious and College Message (Figure \ref{fig:Computational_Time_infectious} b and \ref{fig:Computational_Time_collegemsg} b, respectively). Now, the space requirement by EOC-ULC is always less compared to EOC-UT. As explained earlier, the size of $\mathcal{C}^{T_1}_{ex}$ plays a major role to improve the computational time in partition-based scheme. Similarly, it effects the space requirement. Now, from the Figures \ref{fig:Infectious} a and b, \ref{fig:ht_cintact} a and b, \ref{fig:collegemsg} a and b, it can be noted that the size of $\mathcal{C}^{T_1}_{ex}$ is more in \emph{Uniform link count\mbox{-}based} compared to \emph{uniform time interval\mbox{-} based} partition scheme. It results to more improvement in space requirement in case of EOC-ULC.

\begin{figure}
    \centering
    \begin{tabular}{cc}
      \includegraphics[scale=0.2]{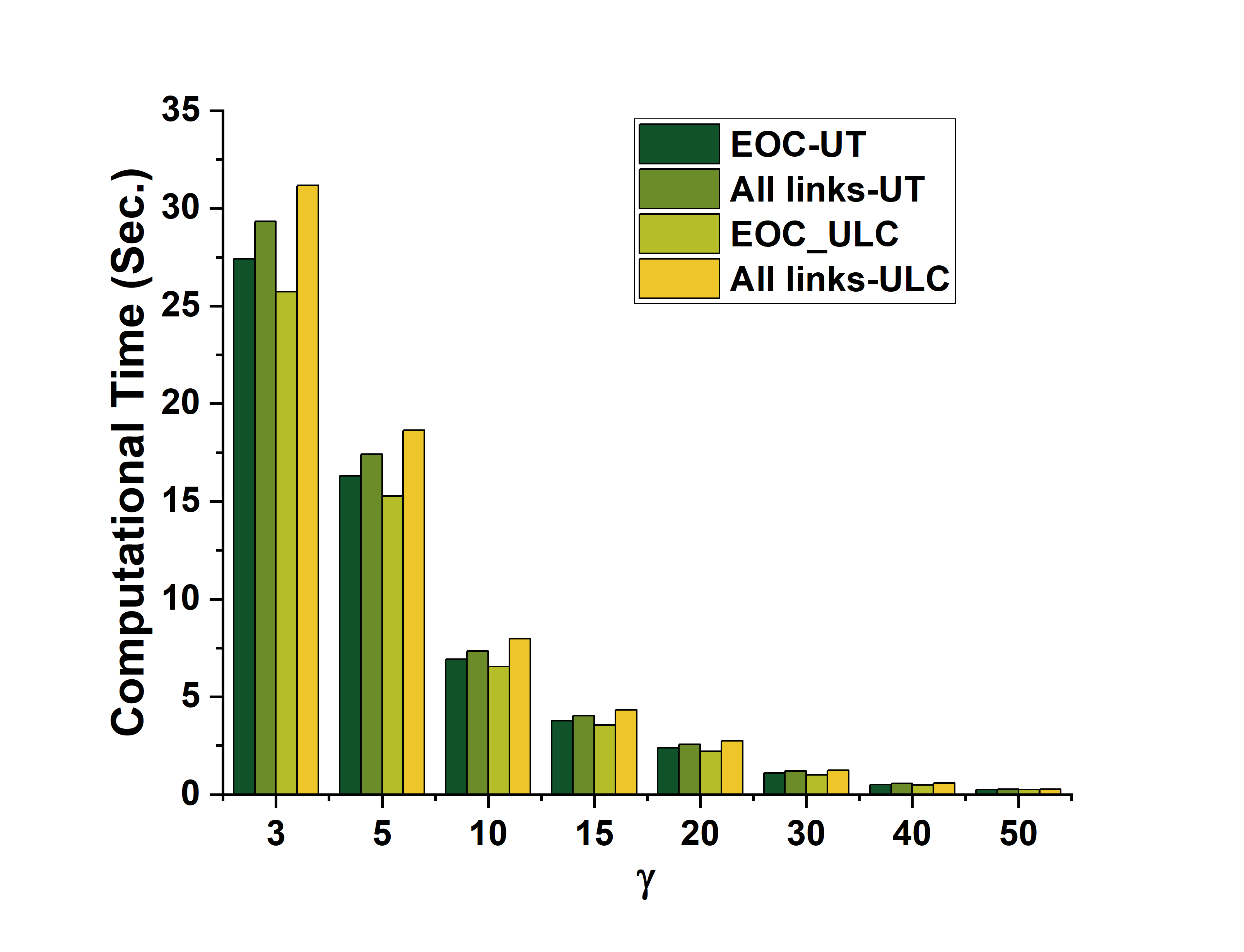}   &  \includegraphics[scale=0.2]{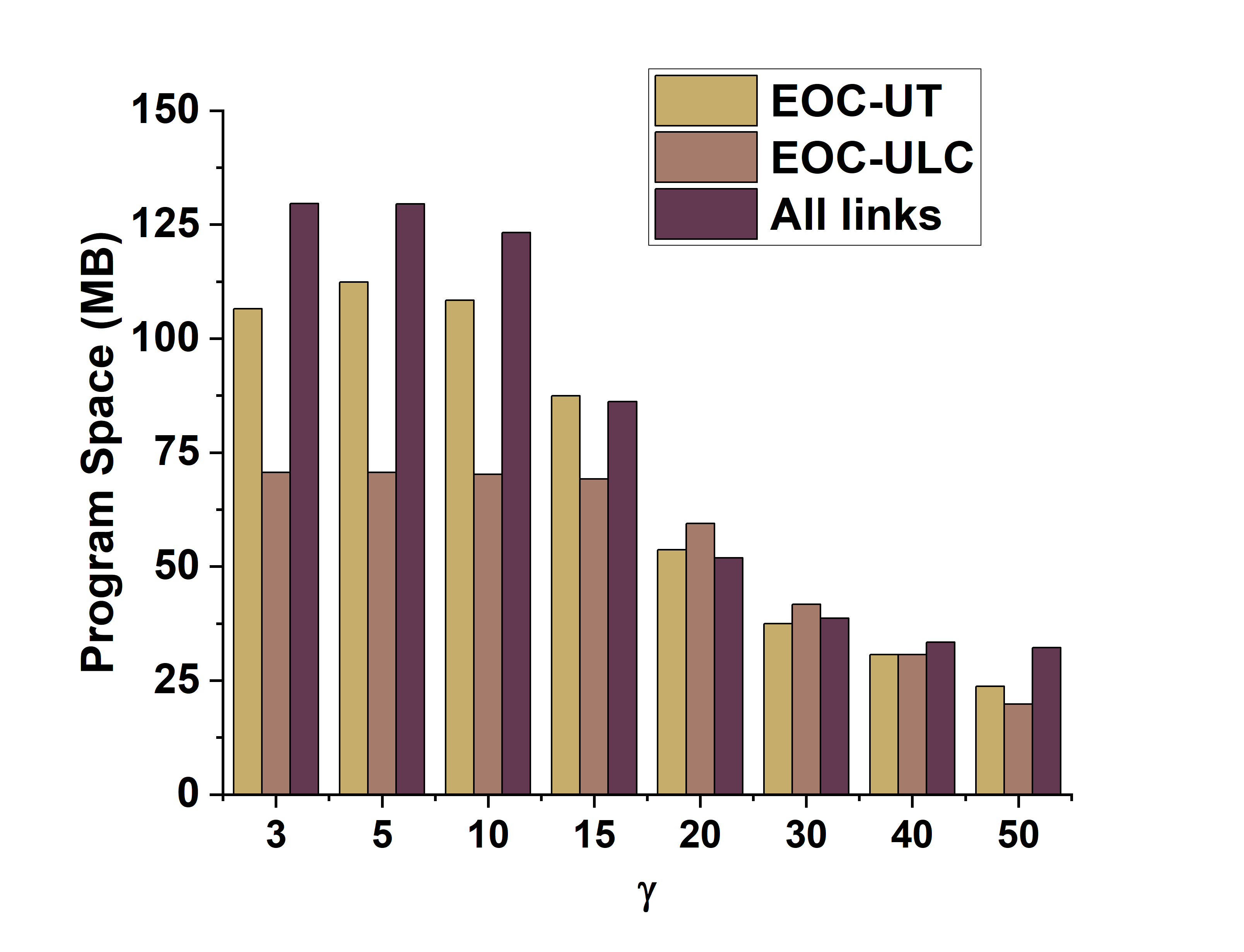} \\
        (a) Fixed - $\Delta=43200$ time & (b) Fixed - $\Delta=43200$ space   \\
    \includegraphics[scale=0.2]{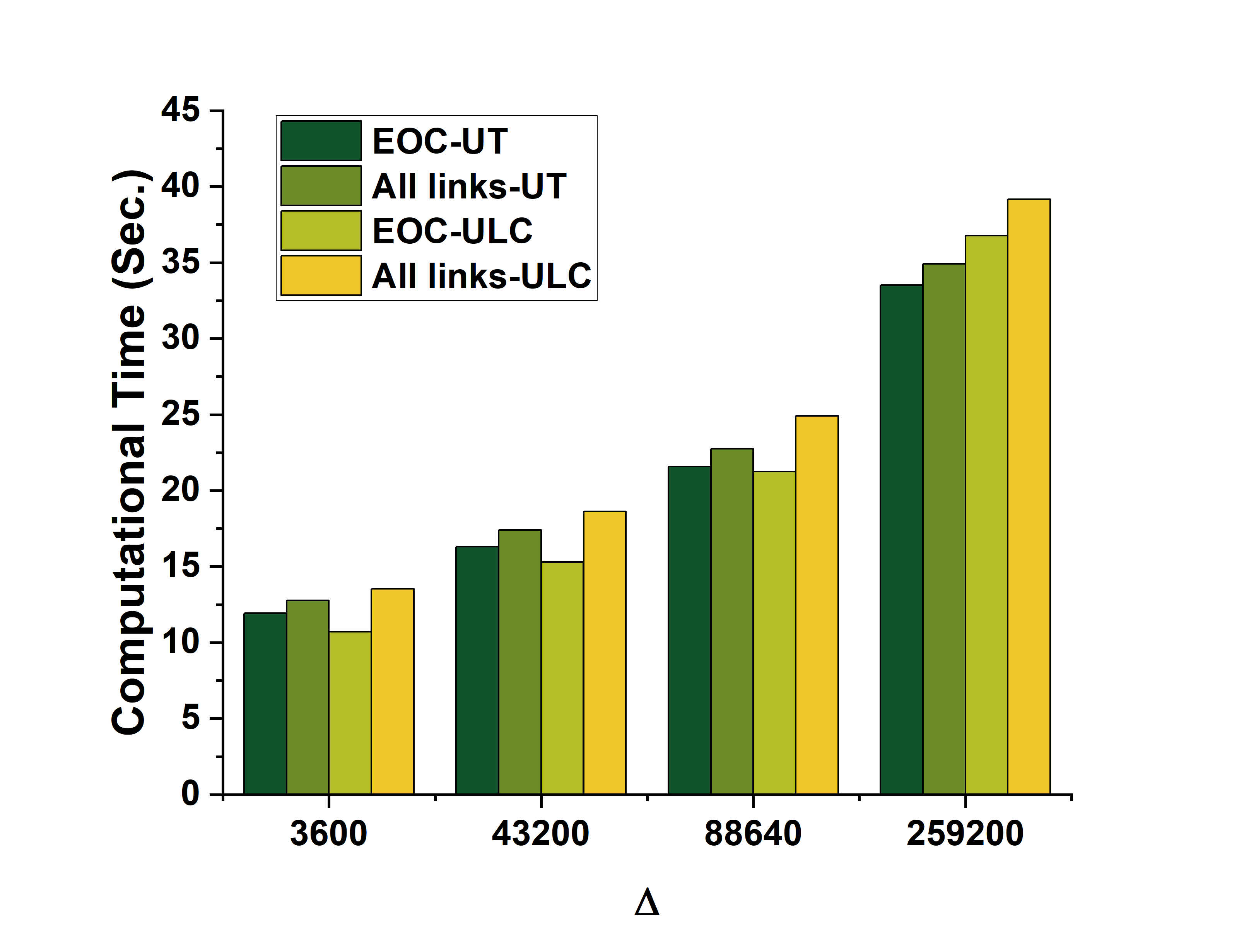}   &    \includegraphics[scale=0.2]{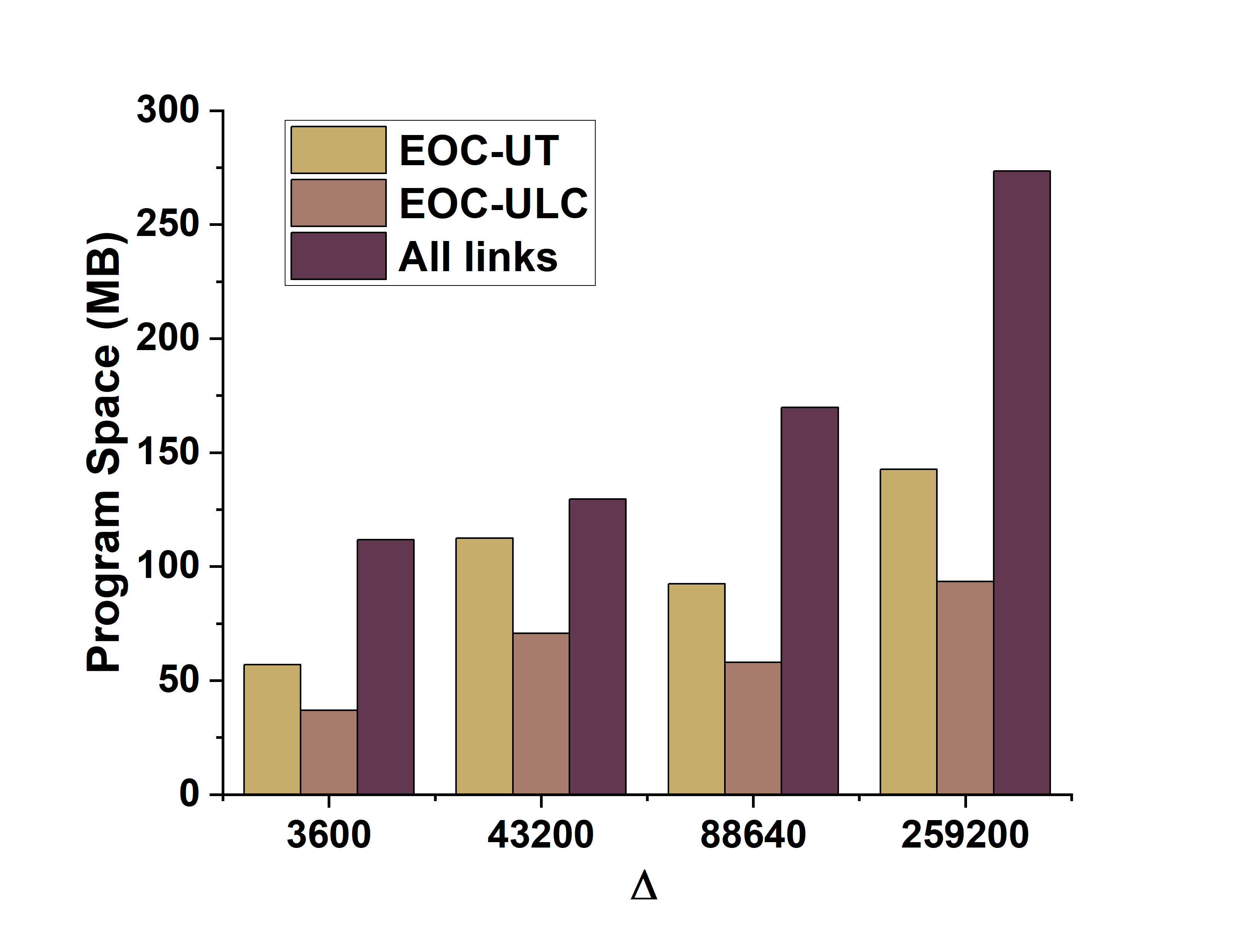}\\
          (c) Fixed - $\gamma=5$ time & (d) Fixed - $\gamma=5$ space  
    \end{tabular}
    \caption{ Results for Computational Time and Space for College Message dataset}
    \label{fig:Computational_Time_collegemsg}
\end{figure}

\par Figure \ref{fig:Computational_Time_infectious} c, \ref{fig:Computational_Time_ht09contact} c, and  \ref{fig:Computational_Time_collegemsg} c show the plots for computational time requirement for fixed $\gamma$ and varying $\Delta$. From these plots, it is observed that with the increase of $\Delta$, in most of the instances the computational time increases. It is also observed that the partition based schemes leads an improvement over the `all links'. In particular, between the EOC-UT and EOC-ULC schemes, in most of the cases, the computational time requirement by EOC-UT scheme is less. The effect is achieved due to the similar reason as described earlier for the varying $\gamma$ scenario. Similar to computational time, the space requirement grows with the increment of $\Delta$  and the improvement is more in EOC-ULC partition scheme compared to EOC-UT (Figure \ref{fig:Computational_Time_infectious} d, \ref{fig:Computational_Time_ht09contact} d, and  \ref{fig:Computational_Time_collegemsg} d). The rate of improvement also increases for large value of $\Delta$ wit a fixed $\gamma$.


\par In all the datasets, the computational time is in second. Hence, we experiment on a large dataset AS180 to observe improvement in the computational time and space in a huge scale. As the number of links in each time(day) is huge (Figure \ref{fig:dataset_description} d), we partition the entire dataset into three partitions with uniform link count based partition scheme. The result is reported in Table \ref{tab:as180_fixed_gamma}. As there data is collected on each day and 180 unique value of $t$ is present. We set the $\Delta$ as 5, 10, 15, 20. For all the cases, the improvement compared to with and without partition is significant for both time and space. 

\begin{table}
    \centering
    \begin{tabular}{|c|c||c|c||c|c|}
    \hline
         &  & \multicolumn{2}{c||}{computational time (hour)} & \multicolumn{2}{c|}{Program Space (GB)} \\
         \hline
delta & gamma & with partition & without partition & with partition & without partition \\ \hline
5 &	3 &	7.52344	& 20.42738 &	77.42066 &	133.41058 \\ \hline
10 &	3 &	8.24277	& 19.5564 &	55.2932	& 92.70591 \\ \hline
15 &	3 &	9.41666	& 17.89813 &	47.7809	& 79.45884 \\ \hline
20 &	3 &	9.52877	& 17.79293 &	44.84441 &	74.34061 \\ \hline
20 &   5 &	8.86788	& $>$ 96 & 50.54501 &	$>$ 256 GB \\ \hline

    \end{tabular}
    \caption{Results for Computational Time and Space for AS180 dataset}
    \label{tab:as180_fixed_gamma}
\end{table}

\begin{figure}
    \centering
    \begin{tabular}{cc}
      \includegraphics[scale=0.2]{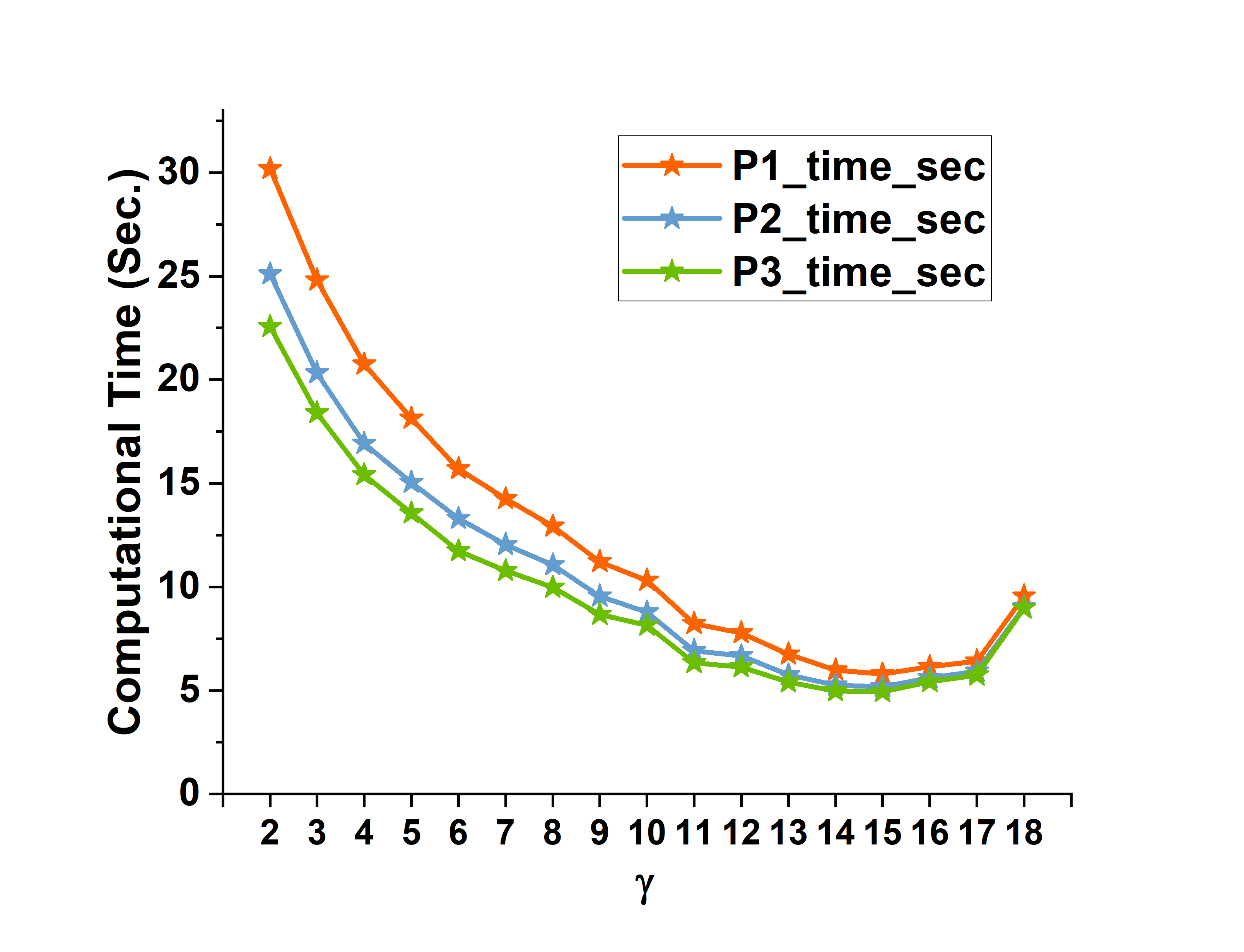}   &  \includegraphics[scale=0.2]{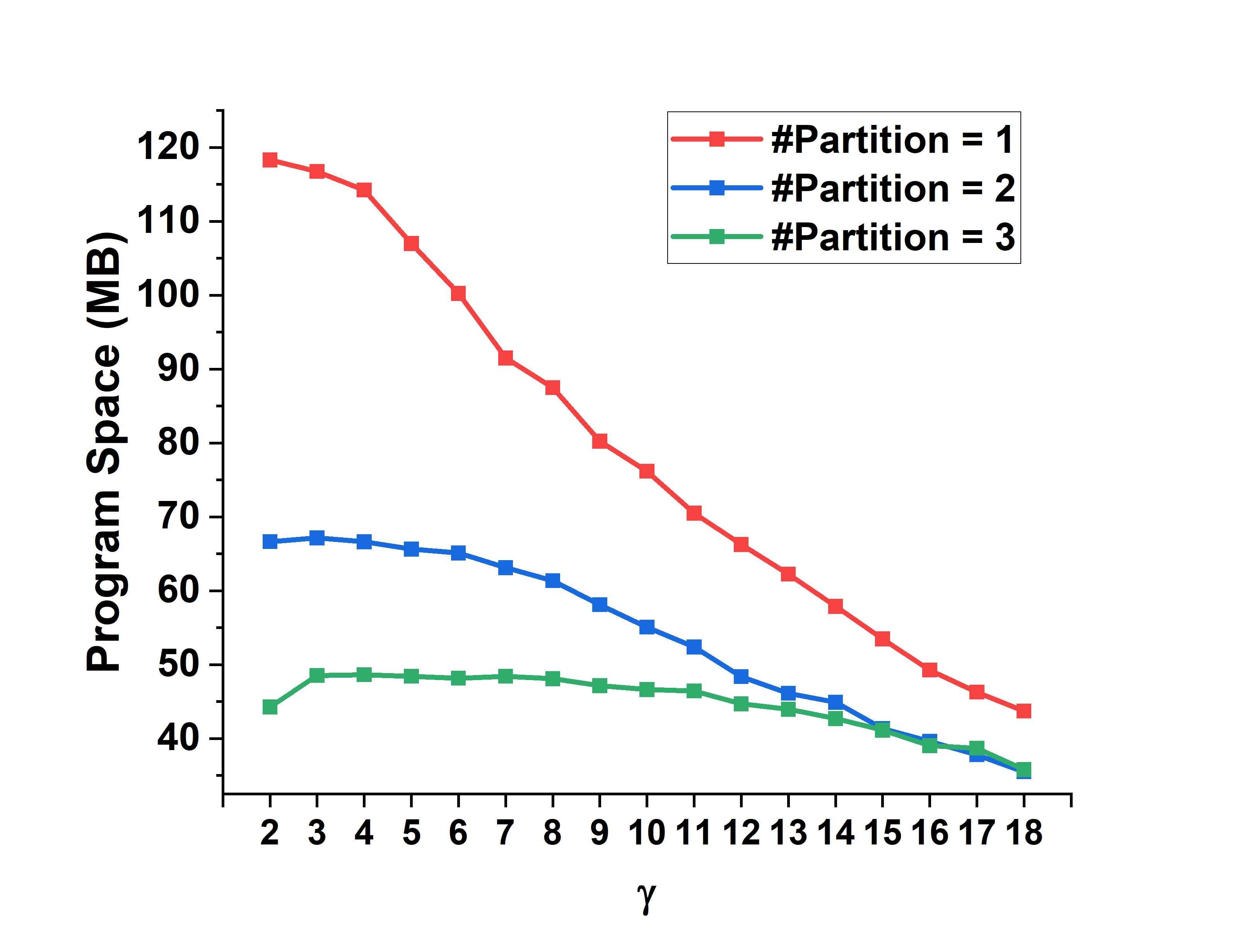} \\
        (a) Fixed - $\Delta=360$ time & (b) Fixed - $\Delta=360$ space  \\
    \includegraphics[scale=0.2]{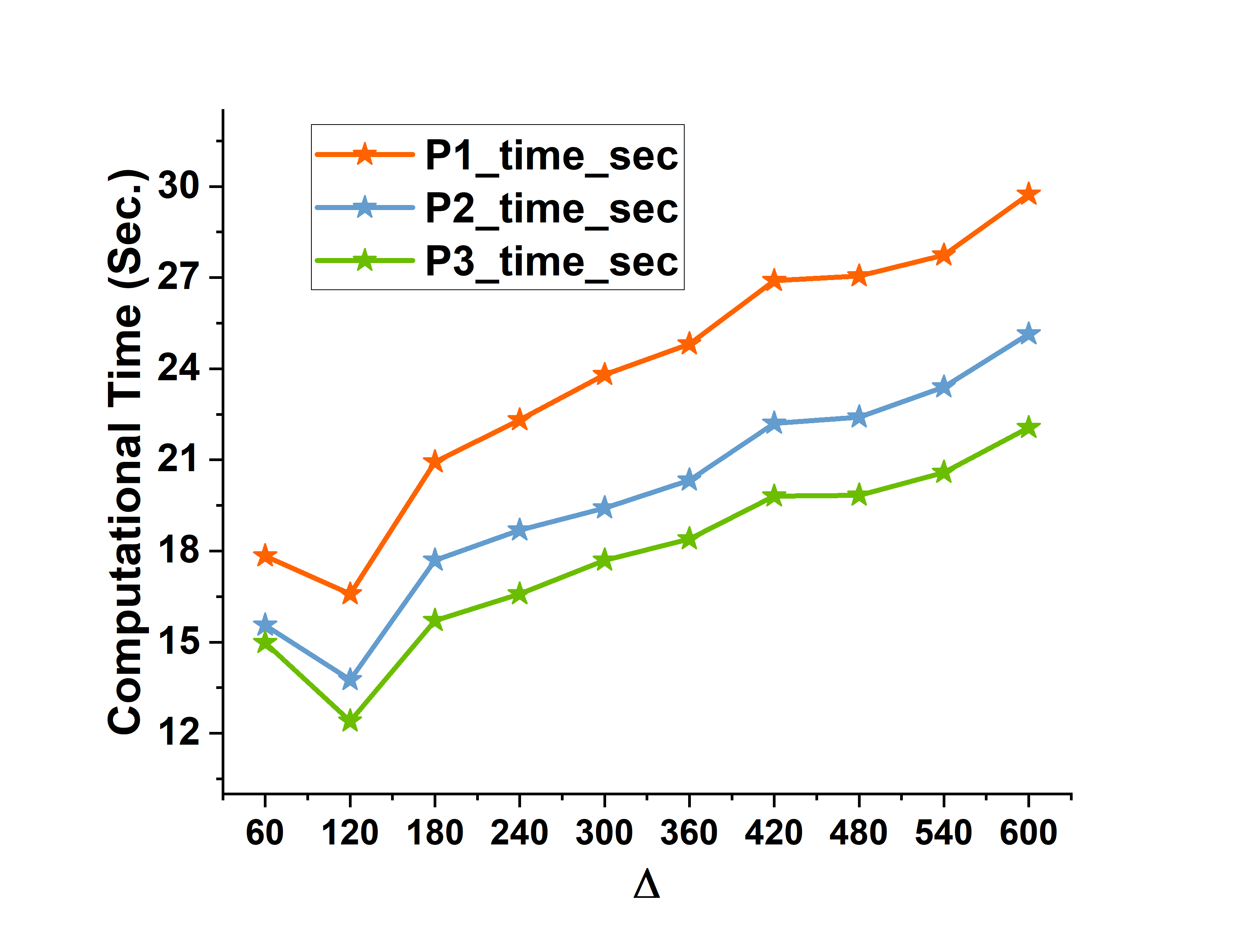}   &  \includegraphics[scale=0.2]{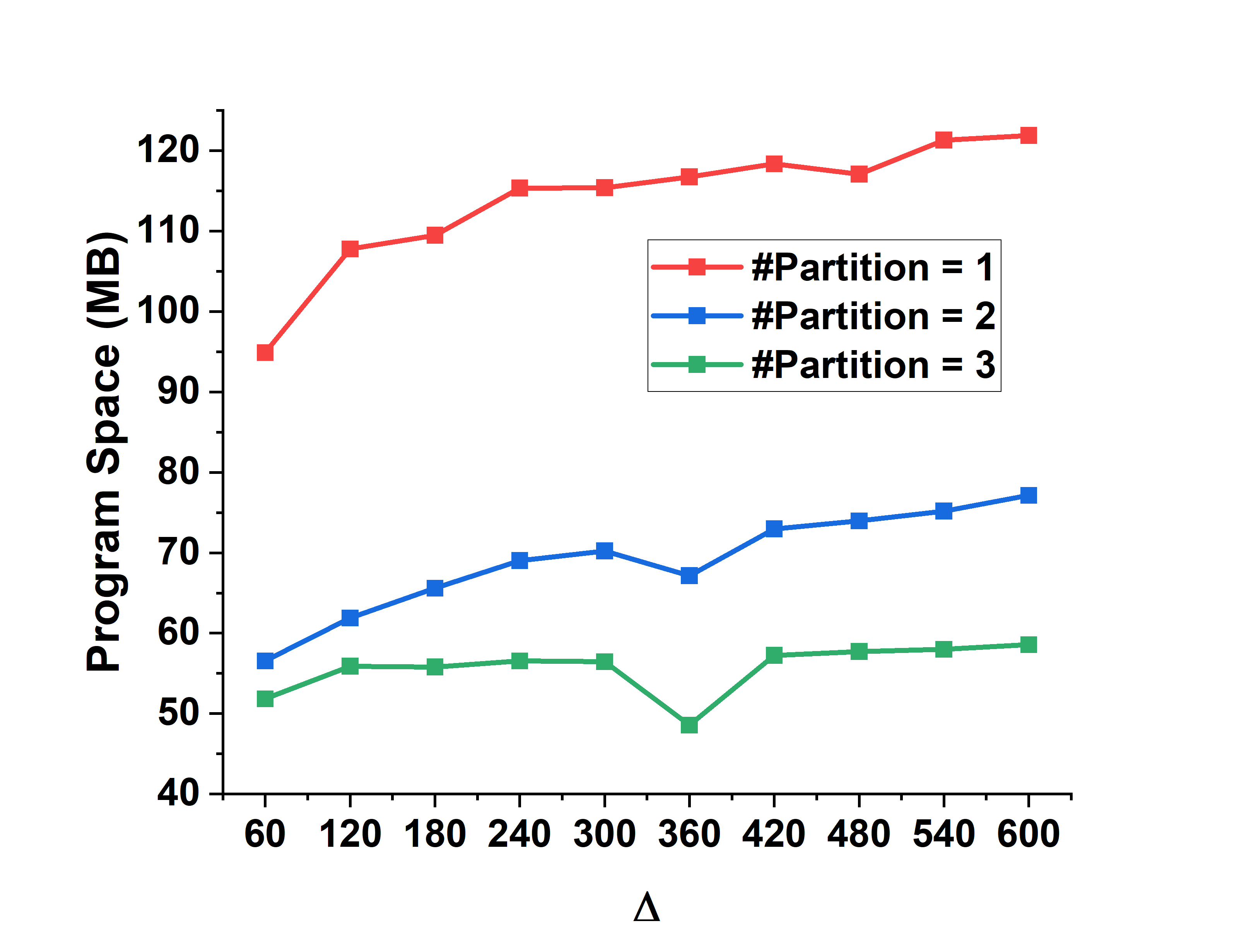}\\
           (c) Fixed - $\gamma=3$ time & (d) Fixed - $\gamma=3$ space 
    \end{tabular}
    \caption{Computational Time and Space for Hypertext dataset based on number of partitions}
    \label{fig:partition_ht09contact}
\end{figure}

\subsubsection{Effect on the Number of Partitions}
To understand the effect on the number of partitions, we select Hypertext dataset for the experiment. We can observe there exists three clear partitions in the dataset from Figure \ref{fig:dataset_description} b. We choose without partition as the number partition = 1, EOC-ULC with number of partition =2. For the number of partition = 3 case, we choose out $T_0, T_1, T_2, T_3$, we select $T_1 = 70800$, and $T_2 = 160000$. $T_0$ and $T_3$ are the start and end time of the temporal network, respectively. This follows the condition of lemma \ref{lemma:1}, for $\Delta = 360$ and $\gamma = 3$, which exploits the maximum effective improvement by partition. We report the result for computational time and space, with varying $\Delta$ and $\gamma$ in Figure \ref{fig:partition_ht09contact}. The effect of partiton clearly scales up the performance. However, for large $\gamma$, i.e. $\gamma \longrightarrow \Delta$ (the data is captured in each 20 second ). Hence, the improvement is not observed, for $\gamma \longrightarrow 18$. Whereas, the improvement is more significant in case of increasing $\Delta$.

\section{Conclusion and Future Research Directions} \label{Sec:CFD}
In this paper, we have introduced the Maximal $(\Delta,\gamma)$\mbox{-}Clique Updation Problem and proposed the `Edge on Clique' framework. We have established the correctness of the proposed methodology and analyzed it to obtain its time and space requirement. Also, we conduct an extensive set of experiments with four publicly available temporal network datasets. Experimental results show that the proposed methodology can be used to update the $(\Delta,\gamma)$\mbox{-}cliques efficiently. Now, one immediate future research direction is to consider the probabilistic nature of the links and modify the proposed methodology, so that it can be used in probabilistic setting as well.
\bibliographystyle{spbasic}      
\bibliography{paper}   


\end{document}